\documentclass[11pt]{article}
\pdfoutput=1
\usepackage[colorlinks,pagebackref=true]{hyperref}
\usepackage[utf8]{inputenc}
\usepackage[T1]{fontenc}
\usepackage[margin=3cm]{geometry}
\usepackage{microtype}
\usepackage[english]{babel}
\usepackage{mathtools, amsthm, amsfonts, amssymb}
\usepackage{enumerate}
\usepackage{booktabs}
\usepackage{accents}

\hypersetup{
  linkcolor=[rgb]{0.3,0.3,0.6},
  citecolor=[rgb]{0.2, 0.6, 0.2},
  urlcolor=[rgb]{0.6, 0.2, 0.2}
}

\theoremstyle{plain}
\newtheorem{theorem}{Theorem}[section]
\newtheorem{lemma}[theorem]{Lemma}

\newtheorem{corollary}[theorem]{Corollary}
\newtheorem{proposition}[theorem]{Proposition}

\newtheorem*{theorem*}{Theorem}
\newtheorem*{proposition*}{Proposition}

\theoremstyle{definition}
\newtheorem{definition}[theorem]{Definition}

\newtheorem{example}[theorem]{Example}

\newtheorem*{definition*}{Definition}

\newcommand{\smallo}{o} %

\newcommand*{\cH}{\mathcal{H}}

\newcommand*{\cP}{\mathcal{P}}

\newcommand*{\cS}{\mathcal{S}}

\newcommand*{\RR}{\mathbb{R}}

\newcommand*{\CC}{\mathbb{C}}
\newcommand*{\FF}{\mathbb{F}}
\newcommand*{\NN}{\mathbb{N}}

\newcommand*{\GL}{\mathrm{GL}}

\newcommand*{\flatten}{\mathrm{flatten}}

\newcommand*{\eps}{\varepsilon}

\DeclareMathOperator{\rank}{R}

\newcommand*{\spec}{\mathrm{spec}}
\newcommand*{\supp}{\mathrm{supp}}
\newcommand*{\tr}{\mathrm{tr}}

\newcommand*{\Herm}{\mathrm{Herm}}

\newcommand{\ra}{\rightarrow}

\DeclarePairedDelimiterX{\inner}[2]{\langle}{\rangle}{#1, #2}
\DeclareMathOperator{\subrank}{Q}
\DeclareMathOperator{\trank}{R}
\DeclareMathOperator{\asympsubrank}{\wtilde{Q}}
\DeclareMathOperator{\slicerank}{SR}

\DeclareMathOperator{\asymprank}{\wtilde{R}}

\DeclareMathAccent{\wtilde}{\mathord}{largesymbols}{"65}

\newcommand{\asympleq}{\lesssim}
\newcommand{\asympgeq}{\gtrsim}
\newcommand{\symleq}{\leq_{\mathrm{s}}}
\newcommand{\symgeq}{\geq_{\mathrm{s}}}
\newcommand{\symasympleq}{\lesssim_{\mathrm{s}}}
\newcommand{\symasympgeq}{\gtrsim_{\mathrm{s}}}
\DeclareMathOperator{\symasymprank}{\wtilde{R}_{s}}
\DeclareMathOperator{\symasympsubrank}{\wtilde{Q}_{s}}
\DeclareMathOperator{\symsubrank}{Q_s}
\DeclareMathOperator{\symrank}{R_s}
 
\newcommand{\Id}{\mathrm{Id}}

\title{Symmetric Subrank of Tensors and Applications}

\author{Matthias Christandl\footnote{Department of Mathematical Sciences, University of Copenhagen, \href{mailto:christandl@math.ku.dk}{christandl@math.ku.dk}} \and Omar Fawzi\footnote{Univ.~Lyon, ENS Lyon, UCBL, CNRS, Inria, LIP, \href{mailto:omar.fawzi@ens-lyon.fr}{omar.fawzi@ens-lyon.fr}} \and Hoang Ta\footnote{Univ.~Lyon, ENS Lyon, UCBL, CNRS, Inria, LIP, \href{mailto:duy-hoang.ta@ens-lyon.fr}{duy-hoang.ta@ens-lyon.fr}} \and Jeroen Zuiddam\footnote{Korteweg-de Vries Institute for Mathematics, University of Amsterdam, \href{mailto:j.zuiddam@uva.nl}{j.zuiddam@uva.nl}}}

\begin{document}
\maketitle

\begin{abstract}
	Strassen (Strassen, J.~Reine Angew. Math., 375/376, 1987) introduced the subrank of a tensor as a natural extension of matrix rank to tensors. Subrank measures the largest diagonal tensor that can be obtained by applying linear operations to the different indices (legs) of the tensor (just like the matrix rank measures the largest diagonal matrix that can be obtained using row and column operations).
    Motivated by problems in combinatorics and complexity theory we introduce the new notion of \emph{symmetric subrank} of tensors by restricting these linear operations to be the same for each index.

	We prove precise relations and separations between subrank and symmetric subrank.
	We prove that for symmetric tensors 
	the subrank and the symmetric subrank are asymptotically equal. This proves the asymptotic subrank analogon of a conjecture known as Comon's conjecture in the theory of tensors.
	This result allows us to prove a strong connection between the general and symmetric version of an asymptotic duality theorem of Strassen.
	We introduce a representation-theoretic method to asymptotically bound the symmetric subrank called the \emph{symmetric quantum functional} in analogy with the quantum functionals (Christandl, Vrana, Zuiddam, J.~Amer.~Math.~Soc., 2021), and we study the relations between these functionals. 
\end{abstract}
\vspace{1em}
MSC2020: 15A69, 05C65
\newpage
\tableofcontents
\newpage

\section{Introduction}

Symmetry is a central theme in the theory of tensors \cite{CGLM08,Shi18}. We study how symmetry influences the tensor parameter \emph{subrank} which is closely connected to problems in combinatorics, quantum information theory and algebraic complexity theory.
The subrank of a tensor is a natural extension of matrix rank to tensors that measures the largest diagonal tensor that can be obtained by applying linear operations to the different indices (i.e.~generalizations of rows and columns of a matrix) of the tensor.
This parameter was introduced by Strassen \cite{Str86, Strassen1987RelativeBC,Str88,Str91} as a method to study fast matrix multiplication algorithms (for an introduction to the research on fast matrix multiplication algorithms we refer to \cite{PMM97,blaser2013fast}), and has since been studied from several points of view, including quantum information theory \cite{MR3390878, CVZ18}, algebraic geometry \cite{MR3966415,kopparty_et_al, DBLP:journals/corr/abs-2012-04679}, combinatorics and communication complexity \cite{DBLP:journals/corr/abs-2111-08262}.

We introduce and analyse the \emph{symmetric subrank} of tensors, which we define as the largest diagonal tensor that can be obtained from a tensor by applying \emph{the same} linear operation to all dimensions.

\subsection{Motivation: Independent set problems in combinatorics}

Various important problems in combinatorics are a special case of the problem of determining the independence number of a hypergraph, which is the size of the largest subset of the vertex set that does not induce any edges. Of particular interest is the rate of growth of the independence number under taking strong powers of a fixed hypergraph, which in the case of undirected graphs is called the Shannon capacity \cite{sha56}.

The subrank, as an algebraic relaxation of the independence number, provides a natural method to upper bound the independence number. Indeed the independence number is upper bounded by the subrank of any tensor that ``fits'' the hypergraph (i.e.~has the appropriate support), in a fashion that is similar to the Haemers bound in graph theory~\cite{HW79}. Several more tensor methods have been introduced in this context to find good independence number upper bounds, notably the slice rank~\cite{TS16}, analytic rank~\cite{Lovett} and related parameters. 

It has been realized, however, that, while these methods have been very successful in solving various open problems in combinatorics, they all suffer from a barrier that renders them useless in the case where the independence number is low but the tensors fitting the hypergraph have large induced matchings in their support. This is the case for example for the corner problem~\cite{DBLP:journals/corr/abs-2111-08262}, and improving the current bound for the capset problem requires going beyond this barrier~\cite{EG17,Matrix_capset}. This ``induced matching barrier'' calls for an effort of finding methods for upper bounding the independence number that can go below the induced matching number.

\subsection{From subrank to symmetric subrank}

As part of the fundamental study of symmetry in tensor theory (and aiming to circumvent the aforementioned induced matching barrier), we introduce the symmetric subrank of tensors. Whereas the subrank~$\subrank(f)$ of a tensor $f\in \FF^{n_1} \otimes \cdots \otimes \FF^{n_k}$ (over a field $\FF$) measures the largest number~$r$ such that the diagonal tensor $\langle r\rangle = \sum_{i=1}^r e_i \otimes \cdots \otimes e_i \in \FF^r \otimes \cdots \otimes \FF^r$ (where the~$e_i$ form the standard basis of $\FF^r$) can be obtained from $f$ by acting with linear operations $A^{(i)} : \FF^{d_i} \to \FF^r$ on $f$, that is 
\[
	\langle r\rangle = (A^{(1)} \otimes \cdots \otimes A^{(k)}) f,
\]
the \emph{symmetric subrank} $\symsubrank(f)$ of a tensor $f \in \FF^d \otimes \cdots \otimes \FF^d$ we define as the largest number $r$ such that there is a linear map $A : \FF^d \to \FF^r$ so that 
\[
	\langle r \rangle = (A \otimes \cdots \otimes A) f.
\]
The symmetric subrank is not just defined on symmetric tensors, but on all tensors.
For every tensor $f$ we have $\symsubrank(f) \leq \subrank(f) \leq d$. On the applications side, the symmetric subrank, as we will prove, still upper bounds the independence number (Proposition~\ref{adj_boundsubrank}), but can be strictly smaller than the subrank (see e.g., Example~\ref{ex:cycles}). Sections~\ref{subsec:matrices} and \ref{subsec:symm-complex-matrices} are devoted to tensors of order two, i.e., matrices, and we show in particular that for symmetric matrices, the symmetric subrank and the subrank are equal (Theorem~\ref{symmetric_matrix}). For tensor of order $k \geq 3$, we show that for symmetric tensor, whenever the subrank takes the maximal value $d$, then so does the symmetric subrank (Theorem~\ref{thm:snap-to-max}). %

\subsection{Asymptotic symmetric subrank and asymptotic spectrum duality}

Central in this paper (Section~\ref{sec:asymp-symm-subrank}) is the study of the asymptotic behaviour of the subrank and symmetric subrank. This is captured by the asymptotic subrank
\[
\asympsubrank(f) = \lim_{n\to\infty} \subrank(f^{\otimes n})^{1/n}
\]
and the asymptotic symmetric subrank
\[
	\symasympsubrank(f) = \lim_{n\to\infty} \symsubrank(f^{\otimes n})^{1/n}.\footnote{For this definition to make sense (i.e.~the limit to exist) we need to put mild conditions on the tensor, see Propoposition~\ref{limsup_sup}. However, we can give a more general definition by replacing the lim by a limsup which is always valid.}
\]
This notion is useful to bound the rate of growth of the independence number of powers of hypergraphs, for instance. We show that for tensors $f$ (not necessarily symmetric) of order two, we have $\symasympsubrank(f) = \asympsubrank(f)$ (Theorem~\ref{thm:matrix-asymp}). For tensors of arbitrary order, we expect we can have $\symasympsubrank(f) < \asympsubrank(f)$. However, we show that for symmetric tensors $f$ we also have equality $\symasympsubrank(f) = \asympsubrank(f)$ (Theorem~\ref{th:asympsubrank}). This result can be interpreted as saying that Comon's conjecture is true \emph{asymptotically} for the subrank. This is discussed further in Section~\ref{subsec:rel-asymp} where we show more generally that Comon's conjecture holds for the asymptotic restriction pre-order.

Strassen \cite{Str86,Strassen1987RelativeBC,Str88,Str91} proved a strong duality theorem that describes the asymptotic subrank (which can naturally be thought of as a ``maximization problem'') as a minimization problem:
\[
	\asympsubrank(f) = \min_{\phi \in X} \phi(f),
\] 
where the ``dual space'' $X$ is the asymptotic spectrum of tensors, a set of very special, well-behaved tensor parameters. For background we refer to the recent works \cite{CVZ18,Zui18_thesis,WZ}.
We introduce in Section~\ref{sec:asympspec} the asymptotic spectrum of symmetric tensors~$X_\mathrm{s}$ and prove the analogous duality theorem for the asymptotic symmetric subrank $\symasympsubrank$ of symmetric tensors. Moreover, in Section~\ref{sec:symmetricquantum}, we construct an explicit point in the dual space~$X_\mathrm{s}$ called the symmetric quantum functional, based on a construction in \cite{CVZ18}.

\subsection{Related work}

The general question of upper bounds on the Shannon capacity of hypergraphs is particularly well-studied in the special setting of undirected graphs, from which the name ``Shannon capacity'' comes: it in fact corresponds to the zero-error capacity of a channel~\cite{sha56}. 
Even for undirected graphs, it is not clear how to compute the Shannon capacity in general, but some methods were developed to give upper bounds. 
The difficulty is to find a good upper bound on the largest independent set that behaves well under the product $\boxtimes$. 
For undirected graphs, the best known methods are the Lov\'asz theta function~\cite{Lo79}, and the Haemers bound which is based on the matrix rank~\cite{HW79}. 

For hypergraphs, we only know of algebraic methods that are based on various notions of tensor rank, and in particular the slice rank~\cite{TS16}, and similar notions like the analytic rank \cite{gowers2011linear, Lovett}, the geometric rank \cite{kopparty_et_al}, and the G-stable rank \cite{derksen2020gstable}. Even though the slice rank is not multiplicative under $\boxtimes$ it is possible to give good upper bounds on the asymptotic slice rank via an asymptotic analysis \cite{TS16}, which is closely related to the Strassen support functionals~\cite{Str91} or the more recent quantum functionals~\cite{CVZ18}.

\section{Symmetric subrank}
\label{sec:symmetric_subrank}

In this section we define the new notion of symmetric subrank and then discuss its basic properties and separations. 

\subsection{Symmetric subrank}\label{subsec:symm-subrank-def}
We will first recall the definition of the subrank. Then we will give the definition of the symmetric subrank. After that we will discuss the connection to the independence number of hypergraphs.

\begin{definition}[Restriction, unit tensor and subrank]
For two tensors $f \in \FF^{d_1} \otimes \cdots \otimes \FF^{d_k}$ and $g \in \FF^{e_1} \otimes \cdots \otimes \FF^{e_k}$ we write $g \leq f$ if there are linear maps $A^{(i)} : \FF^{d_i} \to \FF^{e_i}$ such that $g = (A^{(1)} \otimes \cdots \otimes A^{(k)})f$. If $g \leq f$ then we say $g$ is a restriction of $f$, and we call $\leq$ the \emph{restriction} order on tensors.

For any nonnegative integer $r$ we define the \emph{unit tensor} $\langle r\rangle = \sum_{i=1}^r e_i^{\otimes k} \in (\FF^r)^{\otimes k}$ where the $e_i$ denote the standard basis elements of $\FF^r$.

For any tensor $f \in \FF^{d_1} \otimes \cdots \otimes \FF^{d_k}$ the \emph{subrank} $\subrank(f)$ is defined as the largest number $r$ such that~$\langle r\rangle \leq f$.
\end{definition}

We note that for tensors $f \in \FF^{d_1} \otimes \FF^{d_2}$ (matrices) the subrank equals the usual rank of the matrix. This follows from the fact that any matrix can be put in diagonal form by invertible row and column operations (Gaussian elimination) so that there are $\rank(f)$ nonzero elements on the diagonal.

The \emph{symmetric} subrank of a tensor (we do not require the tensor to be symmetric) is defined in the same way as the subrank with the extra requirement that all linear maps $A^{(i)}$ are the same.
  
\begin{definition}[Symmetric restriction and symmetric subrank] \label{symmetricsubrank}
	For any two (not necessarily symmetric) tensors $f \in (\FF^d)^{\otimes k}$ and $g \in (\FF^{e})^{\otimes k}$
	we write
	$g \symleq f$ if there is a linear map $A: \FF^{d} \ra \FF^{e}$ such that $g = A^{\otimes k} f$. We call ${\symleq}$ the \emph{symmetric restriction} order on tensors.

	For any tensor $f \in (\FF^d)^{\otimes k}$ the \emph{symmetric subrank} $\symsubrank(f)$ is defined as the largest number~$r$ such that $\langle r\rangle \leq_s f$.
\end{definition}

A natural ``dual'' of the symmetric subrank called \emph{symmetric rank} is well-studied~\cite{CGLM08}.
The symmetric rank $\symrank(f)$ of a symmetric tensor $f$ is defined as the smallest number $r$ such that $f \symleq \langle r\rangle$. %
In other words, it is the smallest number~$r$ such that there are $r$ vectors $v_i$ so that $f  = \sum_{i=1}^{s}v_{i}^{\otimes k}$. 
In the study of homogeneous polynomials (which correspond naturally to symmetric tensors) this notion is often called Waring rank.

\subsection{Application: independent sets and induced matching barrier}\label{subsec:ind-set}

In this section we discuss some combinatorial background and motivation and in particular discuss the independence number of directed hypergraphs, how the symmetric subrank upper bounds it, and how the symmetric subrank circumvents an ``induced matching barrier'' in this context.

We recall that a directed $k$-uniform hypergraph~$H$ is a pair $(V,E)$ where $V$ is a finite vertex set and $E$ is a set of $k$-tuples of elements in~$V$. A~subset $S \subseteq V$ is an \emph{independent set} of $H$ if $E \cap S^{\times k}$ is empty. The \emph{independence number} $\alpha(H)$ is the size of the largest independent set in~$H$. The support of a tensor $f \in (\FF^n)^{\otimes k}$ is defined as $\supp(f) = \{(i_1, \ldots, i_k) \in [n]^k : f_{i_1, \ldots, i_k} \neq 0\}$ and $([n], \supp(f))$ is an example of a directed $k$-uniform hypergraph. On the other hand, if $H = (V,E)$ is a $k$-uniform directed hypergraph in $n$ vertices, then we define the \emph{adjacency tensor} $A_H \in (\FF^n)^{\otimes k}$ by setting $(A_H)_{i_1, \ldots, i_k} = 1$ if $i_1 = \cdots = i_k$ or $(i_1, \ldots, i_k) \in E$ and setting all other entries of $A_H$ to 0.
\begin{proposition}
	\label{adj_boundsubrank}
	Let $H = (V,E)$ be a directed $k$-uniform hypergraph with $n$ vertices. %
	Let $\FF$ be a field. Let $f \in (\FF^n)^{\otimes k}$ be a tensor such that, for every $e \in [n]^k$ if $e \not\in E$, then $f_{e_1, \ldots, e_k} = 0$, and for every $i \in [n]$, $f_{i,\dots,i} = 1$. Then
	$\alpha(H) \leq \symsubrank(f)$.
\end{proposition}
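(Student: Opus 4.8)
The plan is to build, directly from a maximum independent set, a single linear map $A$ that witnesses $\langle \alpha(H)\rangle \symleq f$. Let $S \subseteq V$ be an independent set with $|S| = m = \alpha(H)$, and enumerate $S = \{s_1,\dots,s_m\}$. Let $e_1,\dots,e_m$ denote the standard basis of $\FF^m$ and define $A\colon \FF^n \to \FF^m$ to be the coordinate projection onto $S$, i.e.\ $Ae_{s_j} = e_j$ for $j \in [m]$ and $Ae_i = 0$ for $i \notin S$. The claim is that $A^{\otimes k} f = \langle m\rangle$, which is exactly what is needed.

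The main (and essentially only) computation is to evaluate this tensor entrywise. Writing $f = \sum_{(i_1,\dots,i_k)} f_{i_1,\dots,i_k}\, e_{i_1}\otimes\cdots\otimes e_{i_k}$ and applying $A^{\otimes k}$, every term containing an index outside $S$ is killed, so
\[
  A^{\otimes k} f \;=\; \sum_{j_1,\dots,j_k \in [m]} f_{s_{j_1},\dots,s_{j_k}}\; e_{j_1}\otimes\cdots\otimes e_{j_k}.
\]
Now I invoke the two hypotheses on $f$. For a constant index tuple $j_1=\cdots=j_k=j$ we have $f_{s_j,\dots,s_j} = 1$. For any non-constant tuple, the point $(s_{j_1},\dots,s_{j_k})$ lies in $S^{\times k}$ and, since $j\mapsto s_j$ is injective, is not of the diagonal form $(i,\dots,i)$; because $S$ is independent, $E \cap S^{\times k} = \emptyset$, so $(s_{j_1},\dots,s_{j_k}) \notin E$, and the support hypothesis on $f$ forces $f_{s_{j_1},\dots,s_{j_k}} = 0$. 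Hence $A^{\otimes k} f = \sum_{j=1}^m e_j^{\otimes k} = \langle m\rangle$, which by definition of the symmetric restriction order gives $\langle \alpha(H)\rangle \symleq f$ and therefore $\symsubrank(f) \ge \alpha(H)$.

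I do not expect a real obstacle here: this is the symmetric-subrank analogue of the classical observation that the rank of an adjacency-type matrix (or tensor) upper bounds the independence number, in the spirit of the Haemers bound, and the projection map $A$ is essentially forced. The only points requiring a little care are bookkeeping ones: making sure that tuples inside $S^{\times k}$ with \emph{repeated but not all equal} coordinates are also excluded by independence (they are, since $E \cap S^{\times k} = \emptyset$ concerns all $k$-tuples, not just the injective ones), and checking the degenerate case $m = 0$, where one simply takes $A = 0$ and both sides are the zero tensor.
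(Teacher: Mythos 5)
Your proof is correct and is essentially the paper's own argument: the paper also takes the subtensor of $f$ indexed by $S^{\times k}$ (equivalently, your coordinate projection $A$) and observes it equals $\langle \alpha(H)\rangle$ by independence and the diagonal hypothesis. You have simply written out the entrywise verification that the paper leaves implicit.
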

\begin{proof}
	An independent set in $H$ of size $r$ directly gives $\langle r\rangle \symleq f$. Namely, if $S\subseteq V$ is an independent set, then the subtensor of $f$ indexed by $S^{\times k}$ equals $\langle r\rangle$.
\end{proof}

Finally, we discuss induced matchings and how they pose a barrier for tensor methods to upper bound the independence number of hypergraphs. Later we will see that the symmetric subrank does not suffer from this barrier. 

Let $H = (V,E)$ be a directed $k$-uniform hypergraph. Let $\Phi = E \cup \{(v,\ldots, v): v \in V\} \subseteq V \times \cdots \times V$. We say that a subset $M \subseteq \Phi$ is an \emph{induced matching} if the elements in $S$ are disjoint in all coordinates and if $M = E \cap (M_1 \times \cdots \times M_k)$ where $M_i = \{m_i : m\in M\}$. Let $\beta(H)$ be the size of the largest subset $M \subseteq \Phi$ that is an induced matching. Note that $\alpha(H)$ is the size of the largest subset $S \subseteq V$ such that $\{(s, \ldots, s): s \in S\} \subseteq \Phi$ is an induced matching. Therefore, $\alpha(H) \leq \beta(H)$.

\begin{proposition}\label{prop:beta-bound}
	Let $H = (V,E)$ be a directed $k$-uniform hypergraph with $n$ vertices. 
	Let $\FF$ be a field. Let $f \in (\FF^n)^{\otimes k}$ be a tensor such that, for every $e \in [n]^k$ if $e \not\in E$, then $f_{e_1, \ldots, e_k} = 0$, and for every $i \in [n]$, $f_{i,\dots,i} = 1$. Then
	$\beta(H) \leq \subrank(f)$.
\end{proposition}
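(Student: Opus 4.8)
The plan is to follow the proof of Proposition~\ref{adj_boundsubrank}: from a maximum induced matching of $H$ we extract, by applying one coordinate projection to each of the $k$ legs of $f$, a diagonal subtensor of size $\beta(H)$, observe that it restricts to the unit tensor $\langle\beta(H)\rangle$, and conclude $\langle\beta(H)\rangle\le f$, i.e.~$\subrank(f)\ge\beta(H)$. The one structural difference with the independence-number case is that the matching $M$ need not consist of diagonal tuples $(v,\dots,v)$, so the projections applied to the $k$ legs of $f$ are in general different linear maps; this is exactly why the argument produces a bound on $\subrank$ and not on $\symsubrank$.

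In detail, let $M=\{m^{(1)},\dots,m^{(r)}\}\subseteq\Phi$ be an induced matching with $r=\beta(H)$, and put $M_i=\{m^{(j)}_i:j\in[r]\}$ for $i\in[k]$. Since the elements of $M$ are pairwise disjoint in every coordinate, the scalars $m^{(1)}_i,\dots,m^{(r)}_i$ are pairwise distinct for each fixed $i$, so $|M_i|=r$ and we may define $A^{(i)}\colon\FF^n\to\FF^r$ by $A^{(i)}e_{m^{(j)}_i}=e_j$ and $A^{(i)}e_\ell=0$ for $\ell\notin M_i$. I would then compute $g:=(A^{(1)}\otimes\cdots\otimes A^{(k)})f$: only the monomials $e_{\ell_1}\otimes\cdots\otimes e_{\ell_k}$ of $f$ with all $\ell_i\in M_i$ contribute, and writing $\ell_i=m^{(j_i)}_i$ one obtains $g=\sum_{(j_1,\dots,j_k)\in[r]^k}f_{m^{(j_1)}_1,\dots,m^{(j_k)}_k}\,e_{j_1}\otimes\cdots\otimes e_{j_k}$, the subtensor of $f$ ``indexed by $M_1\times\cdots\times M_k$''. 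The key step is to show that $g$ is diagonal with nonzero diagonal, using the defining property of an induced matching that $M=\Phi\cap(M_1\times\cdots\times M_k)$: if $(j_1,\dots,j_k)$ is not constant then $(m^{(j_1)}_1,\dots,m^{(j_k)}_k)$ lies in $M_1\times\cdots\times M_k$ but is not any $m^{(j)}$ (distinctness in each coordinate would force all $j_i$ equal), hence is not in $\Phi\supseteq E$, so $f$, and therefore $g$, vanishes there; and if $j_1=\cdots=j_k=j$ the entry is $f_{m^{(j)}}$, which is nonzero (it equals $1$ when $m^{(j)}$ is a diagonal tuple).

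It then remains to rescale: $g=\sum_{j=1}^r c_j\,e_j^{\otimes k}$ with every $c_j\neq 0$, and composing $A^{(1)}$ with the rescaling $e_j\mapsto c_j^{-1}e_j$ on $\FF^r$ turns $g$ into $\langle r\rangle$, giving $\langle\beta(H)\rangle\le f$ and hence $\subrank(f)\ge\beta(H)$. I expect the only genuine work to be the bookkeeping of the previous paragraph verifying that no off-diagonal term survives; this is precisely where the ``induced'' part of the hypothesis (as opposed to merely ``matching'') is used, and it plays the same role as the condition $E\cap S^{\times k}=\emptyset$ in the proof of Proposition~\ref{adj_boundsubrank}. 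A secondary subtlety to address is that the diagonal entries $f_{m^{(j)}}$ must be nonzero: under the stated hypotheses this is automatic for the diagonal tuples, and for genuine edges it relies on $f$ being nonzero there, which holds for $f=A_H$ and, more generally, for any $f$ supported exactly on $\Phi$ (the case relevant for the applications).
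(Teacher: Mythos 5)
Your proof is correct and follows essentially the same route as the paper's much terser argument: the subtensor of $f$ indexed by $M_1\times\cdots\times M_k$ becomes, after relabelling each leg separately and rescaling, the unit tensor $\langle\beta(H)\rangle$, and the ``induced'' hypothesis $M=\Phi\cap(M_1\times\cdots\times M_k)$ is exactly what kills the off-diagonal entries. The subtlety you flag at the end is genuine and is glossed over in the paper: the stated hypotheses only force $f$ to equal $1$ on diagonal tuples and say nothing about the value of $f$ on the edges occurring in $M$, so for the constructed diagonal subtensor to have full support one needs the additional (implicit) assumption that $f$ does not vanish on $E$, which holds for $A_H$ and for any tensor with $\supp(f)=\Phi$.
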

\begin{proof}
	An induced matching $M \subseteq \Phi = E \cup \{(v,\ldots, v): v \in V\}$ of size $r$ directly gives $\langle r\rangle \leq f$. Namely, the subtensor of $f$ indexed by $M_1 \times \cdots \times M_k$ equals $\langle r\rangle$ up to permuting the coordinates of each factor $\FF^n$.
\end{proof}

We see from Proposition~\ref{adj_boundsubrank} and Proposition~\ref{prop:beta-bound} that, while both the symmetric subrank $\symsubrank(f)$ and the subrank $\subrank(f)$ can be used to upper bound the independence number $\alpha(H)$, the subrank cannot give good bounds when $\beta(H)$ is much larger than $\alpha(H)$. We may thus think of $\beta(H)$ as a barrier for~$\subrank(f)$ to give good upper bounds on $\alpha(H)$. Many other tensors methods (slice rank, partition rank, analytic rank, geometric rank, G-stable rank) are also lower bounded by this barrier $\beta(H)$. We will see that indeed the symmetric subrank $\symsubrank(f)$ can be strictly smaller than $\beta(H)$.

\subsection{Tensors of order two (matrices)}
\label{subsec:matrices}
In Section~\ref{subsec:symm-subrank-def} we introduced the symmetric subrank of tensors. With the motivation in mind of using symmetric subrank as a method to upper bound the independence number of hypergraphs as in Section~\ref{subsec:ind-set}, it is natural to ask whether this method is better than using the subrank itself.
It follows directly from the definition of the symmetric subrank that for any $k$-tensor $f$ we have that $\symsubrank(f) \leq \subrank(f)$. Can this inequality be strict?
In this and the following sections we will discuss relations and separations with the ordinary subrank. We obtain precise results under assumptions about the order, ground field and symmetry of the tensors.

In this section we consider tensors of order two. These we can simply think of as matrices via the identification $\sum_{i,j} f_{ij}\, e_i \otimes e_j \mapsto (f_{ij})_{ij}$. In the language of matrices the restriction order and symmetric restriction order are given as follows. For matrices $f \in \FF^{n_1\times n_2}$ and $g \in \FF^{m_1 \times m_2}$ we have $f \leq g$ if there are matrices $A^{(i)} \in \FF^{n_i \times m_i}$ such that $f = A^{(1)} g (A^{(2)})^T$. For matrices $f \in \FF^{n \times n}$ and $g \in \FF^{m \times m}$ we have $f \symleq g$ if there is a matrix $A \in \FF^{n \times m}$ such that $f = AgA^T$. Note in particular how in this formulation we multiply on the left by $A$ and on the right by the transpose of $A$. When $A$ is invertible and $f = AgA^T$ the matrices $f$ and $g$ are often called \emph{congruent}. However we will allow $A$ to be non-invertible. The (symmetric) subrank of a matrix $f$ is now the largest number $r$ such that the $r \times r$ diagonal matrix $\langle r\rangle$ is a (symmetric) restriction of $f$. 

First of all, as a basic fact that we will use later, we note that for any matrix $f$ the subrank~$\subrank(f)$ equals the usual notion of matrix rank $\rank(f)$.

\begin{lemma}\label{lem:matrix-subrank-equals-rank}
	Let $f$ be a matrix, then $\subrank(f) = \rank(f)$.
\end{lemma}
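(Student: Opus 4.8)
The plan is to prove the two inequalities $\subrank(f) \leq \rank(f)$ and $\rank(f) \leq \subrank(f)$ separately, using the identification of order-two tensors with matrices and the characterization of restriction via left/right multiplication $f \mapsto A f B^T$.

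For the upper bound $\subrank(f) \leq \rank(f)$: suppose $\langle r\rangle \leq f$, so there are matrices $A, B$ with $\langle r\rangle = A f B^T$. Since the rank of a matrix cannot increase under multiplication by arbitrary matrices (i.e.\ $\rank(A f B^T) \leq \rank(f)$), and since $\rank(\langle r\rangle) = r$ because $\langle r\rangle$ is the $r \times r$ identity matrix, we get $r \leq \rank(f)$. Taking the largest such $r$ gives $\subrank(f) \leq \rank(f)$.

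For the lower bound $\rank(f) \leq \subrank(f)$: let $r = \rank(f)$. By Gaussian elimination (equivalently, the existence of a rank decomposition, or Smith-normal-form-type reasoning over a field), there exist invertible matrices $P, Q$ such that $P f Q$ is the matrix with the $r \times r$ identity in the top-left block and zeros elsewhere. Then restricting to the first $r$ rows and first $r$ columns — which is itself a linear map applied on each side, i.e.\ composing $P$ with the coordinate projection $\FF^{n_1} \to \FF^r$ and similarly for $Q$ — yields exactly $\langle r\rangle$. Hence $\langle r\rangle \leq f$, so $\subrank(f) \geq r = \rank(f)$.

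Combining the two inequalities gives $\subrank(f) = \rank(f)$. I do not expect any real obstacle here: the statement is essentially a restatement of the normal form for matrices under row and column operations over a field, already foreshadowed in the text immediately preceding the lemma. The only point requiring a token of care is making sure the ``restrict to an $r \times r$ submatrix'' step is phrased as composition of linear maps so that it genuinely witnesses $\langle r \rangle \leq f$ in the sense of the restriction order, and that $\rank$ is sub-multiplicative under the relevant matrix products; both are standard.
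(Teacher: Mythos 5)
Your proposal is correct and follows essentially the same route as the paper: the upper bound via monotonicity of $\rank$ under the restriction $\langle r\rangle = A f B^T$, and the lower bound via Gaussian elimination to the rank normal form followed by projection onto the first $r$ coordinates. The extra care you take in phrasing the submatrix extraction as a composition of linear maps is a welcome (if routine) elaboration of the paper's one-line argument.
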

\begin{proof}
	Clearly $\subrank(f) \leq \rank(f)$. It is well-known that by Gaussian elimination we can find invertible matrices $A^{(i)}$ such that $f$ is a diagonal matrix with $\rank(f)$ nonzero entries. Thus $\subrank(f) \geq \rank(f)$.
\end{proof}

\begin{lemma}\label{lem:skew-sym-zero}
	Let $f$ be a $d\times d$ matrix over an arbitrary field $\FF$ such that $f_{\ell,\ell} = 0$ for all $\ell \in [d]$ and $f_{i,j} = - f_{j,i}$ for all $i\neq j \in [d]$. 
	Then $\symsubrank(f) = 0$.
\end{lemma}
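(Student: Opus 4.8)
The plan is to show that a skew-symmetric matrix $f$ with zero diagonal cannot have even $\langle 1 \rangle$ as a symmetric restriction, i.e., there is no vector $a \in \FF^d$ (viewing the linear map $A : \FF^d \to \FF^1$ as a row vector $a^T$) with $a^T f a = 1$. First I would reduce the statement: by Definition~\ref{symmetricsubrank}, $\symsubrank(f) \geq 1$ would mean $\langle 1 \rangle = A^{\otimes 2} f$ for some $A : \FF^d \to \FF$; in matrix language (following the formulation in Section~\ref{subsec:matrices}) this is exactly $1 = A f A^T$ where $A$ is a $1 \times d$ matrix, i.e.\ $1 = a^T f a$ with $a = A^T \in \FF^d$. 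So it suffices to prove that $a^T f a = 0$ for every $a \in \FF^d$, which forces $\symsubrank(f) = 0$ since no positive $r$ can work if even $r = 1$ fails.

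Next I would carry out the computation $a^T f a = \sum_{i,j} a_i a_j f_{i,j}$. Splitting off the diagonal terms (which vanish since $f_{\ell,\ell} = 0$) leaves $\sum_{i \neq j} a_i a_j f_{i,j}$. Pairing the $(i,j)$ term with the $(j,i)$ term and using $f_{i,j} = -f_{j,i}$ gives $a_i a_j f_{i,j} + a_j a_i f_{j,i} = a_i a_j (f_{i,j} + f_{j,i}) = 0$ for each unordered pair $\{i,j\}$. Hence $a^T f a = 0$ for all $a$, so $a^T f a = 1$ is impossible and $\symsubrank(f) = 0$.

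There is essentially no obstacle here; the only point requiring a word of care is that this argument is genuinely characteristic-free. In characteristic $2$ the hypothesis $f_{\ell,\ell} = 0$ is an extra assumption beyond skew-symmetry (since then $f_{i,j} = -f_{j,i}$ does not force the diagonal to vanish), but it is explicitly part of the statement, so the pairing argument above goes through verbatim: each cross term cancels against its transpose, and the diagonal terms are killed by hypothesis rather than by dividing by $2$. Thus the proof is a one-line quadratic-form computation once the reduction to $r = 1$ is made explicit.
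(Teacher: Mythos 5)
Your proof is correct and follows essentially the same route as the paper: the paper computes the diagonal entries $g_{kk}=\sum_{i,j}B_{ki}B_{kj}f_{ij}$ of $g=BfB^T$ for an arbitrary $B$ and shows they all vanish by the same pairing of $(i,j)$ with $(j,i)$ plus the zero-diagonal hypothesis, which is exactly your observation that $a^Tfa=0$ for every $a$ applied to each row of $B$. Your explicit reduction to $r=1$ and the remark about characteristic $2$ are fine but add nothing beyond the paper's one-line computation.
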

\begin{proof}
	For any matrix $B \in \FF^{m \times d}$ let $g = B f B^T$. Then the diagonal entries $g_{kk}$ are zero for all~$k$. Indeed we have $g_{kk} = \sum_{i,j}B_{ki}B_{kj}f_{ij} = 0$ since $f_{ij} = -f_{ji}$ for all $i \neq j$. We conclude that $\symsubrank(f) = 0$.
\end{proof}

In particular, if $\FF \neq \FF_2$, then the condition in Lemma~\ref{lem:skew-sym-zero} is equivalent to $f = -f^T$, that is, $f$ being skew-symmetric.

\begin{example}\label{example:skew_matrix}
	It is easy to find a $d \times d$ matrix $f$ of full rank that satisfies the condition in Lemma~\ref{lem:skew-sym-zero}. 
	Then by Lemma~\ref{lem:skew-sym-zero} we have $\symsubrank(f) = 0$ while $\subrank(f) = d$. For example, for even $d$ we may take $f$ with entries $f_{i, d+1-i} = 1$ for all $1\leq i \leq d/2$ and $f_{i, d+1-i} = -1$ for all $d/2 < i \leq d$ and all other entries equal to zero, that is,
	\[
		\begin{pmatrix}
		0 & -1 \\
		1 & 0
		\end{pmatrix},\,
		\begin{pmatrix}
			0 & 0 & 0  & -1 \\
			0 & 0 & -1 & 0  \\
			0 & 1 & 0  & 0  \\
			1 & 0 & 0  & 0
		\end{pmatrix},\,\ldots
	\]
\end{example}

\begin{lemma}\label{lem:non-sym-not-max}
	Let $f$ be a non-symmetric $d\times d$ matrix over an arbitrary field $\FF$. Then $\symsubrank(f) < d$.
\end{lemma}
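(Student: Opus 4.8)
The goal is to show that if $f$ is a non-symmetric $d \times d$ matrix, then $\langle d \rangle$ cannot be obtained as $AfA^T$ for any $A \in \FF^{d\times d}$. The plan is to argue by contradiction: suppose $\symsubrank(f) = d$, so there exists $A$ with $AfA^T = \langle d\rangle = \Id_d$. Since the right-hand side is invertible, $A$ must itself be invertible (a rank argument: $\rank(AfA^T) \le \rank(A) \le d$, and equality forces $A$ to have full rank). So $A$ is invertible and $f = A^{-1}(A^{-1})^T = A^{-1}(A^T)^{-1} = (A^T A)^{-1}$.

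Now the key observation is that $A^T A$ is symmetric, hence its inverse $(A^T A)^{-1}$ is symmetric as well (the inverse of a symmetric invertible matrix is symmetric, since $(B^{-1})^T = (B^T)^{-1} = B^{-1}$). Therefore $f$ is symmetric, contradicting the hypothesis that $f$ is non-symmetric. This yields $\symsubrank(f) < d$.

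I expect there will be essentially no obstacle here; the only point requiring a moment's care is the deduction that $A$ must be invertible from $AfA^T = \Id_d$, which follows because the rank of a product is at most the rank of each factor, so $d = \rank(\Id_d) \le \rank(A)$ forces $\rank(A) = d$. One should also note the argument works over an arbitrary field $\FF$: transposition is well-defined and commutes with inversion over any field, and no characteristic assumption is needed (in contrast to Lemma \ref{lem:skew-sym-zero}, where the field $\FF_2$ required separate attention). It is perhaps worth remarking that this lemma is sharp in the sense that $\symsubrank(f)$ can still equal $d-1$ or be much smaller for non-symmetric $f$, and together with Lemma \ref{lem:skew-sym-zero} and Example \ref{example:skew_matrix} it shows the gap between $\subrank$ and $\symsubrank$ can be as large as possible already for matrices.
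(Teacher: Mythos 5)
Your argument is correct and is essentially identical to the paper's proof: assume $\symsubrank(f)=d$, deduce that $A$ is invertible from $AfA^T=\langle d\rangle$, write $f=A^{-1}(A^{-1})^T$, observe this is symmetric, and derive a contradiction. The paper phrases the symmetry step directly via $f=A^{-1}(A^{-1})^T$ rather than through $(A^TA)^{-1}$, but this is the same observation.
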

\begin{proof}
	Suppose that $\symsubrank(f) = d$. Then there is a matrix $A$ such that $\langle d\rangle = AfA^T$. Since $\langle d\rangle$ has full rank, $A$ must have full rank. We find that $f = A^{-1}\langle d\rangle (A^T)^{-1} = A^{-1} (A^{-1})^T$ and so $f$ is symmetric. This is a contradiction.
\end{proof}

\begin{example}\label{ex:cycles}
	Let $C_{2k+1}$ be the directed cycle graph with vertex set $\{1,\ldots, 2k+1\}$ and edge set $\{(1,2), (2,3), \ldots, (2k+1, 1)\}$ and let $f$ be the adjacency matrix of $C_{2k+1}$ over any fixed field, so that~$f$ is the $(2k+1) \times (2k+1)$ matrix
	\begin{align*}
		f = 
		\begin{pmatrix}
		1 & 1 & 0 & 0 &\cdots & 0 & 0\\
		0 & 1 & 1 & 0 &\cdots & 0 & 0\\
		\vdots  & \vdots  & \vdots& \vdots& \ddots & \vdots & \vdots  \\
		0 & 0 & 0 & 0 & \cdots & 1 & 1 \\
		1 & 0 & 0 & 0 & \cdots & 0 & 1 
		\end{pmatrix}
	\end{align*}
	We have $\subrank(f) = \rank(f) = 2k+1$ by Lemma~\ref{lem:matrix-subrank-equals-rank}. On the other hand, $\symsubrank(f) < 2k+1$ by Lemma~\ref{lem:non-sym-not-max}.
\end{example}

We have discussed in Section~\ref{subsec:ind-set} how many methods for upper bounding the independence number of hypergraphs $\alpha(H)$ also upper bound the induced matching barrier $\beta(H)$. %
In the following example we see that the symmetric subrank can be strictly smaller than $\beta(H)$.

\begin{example}\label{ex:symsub-smaller-than-im}
	Let $C_5$ be the directed cycle graph on vertices $\{1, \ldots, 5\}$ as defined in Example~\ref{ex:cycles}.
	Let
	\[
		f = 
		\begin{pmatrix}
		1 & 1 & 0 & 0 & 0\\
		0 & 1 & 1 & 0 & 0\\
		0 & 0 & 1 & 1 & 0\\
		0 & 0 & 0 & 1 & 1 \\
		1 & 0 & 0 & 0 & 1 
		\end{pmatrix}
	\]
	be the adjacency matrix of $C_5$ over $\FF_2$.
	Then $\beta(C_5)$ is the size of the largest submatrix of $f$ that is an identity matrix up to permutation. We see that $\beta(C_5) = 3$. On the other hand, we compute directly that~$\symsubrank(f) =2$ over $\FF_2$.
\end{example}

\subsection{Symmetric tensors of order two (symmetric matrices)}\label{subsec:symm-complex-matrices}

We have seen in the previous section that the symmetric subrank can be strictly smaller than the subrank for non-symmetric matrices. 
For \emph{symmetric} matrices, we now prove that symmetric subrank and subrank are equal as long as the ground field is \emph{quadratically closed}\footnote{One could consider an alternative definition of symmetric subrank in which the symmetric restriction order is replaced by the following: let $f \symleq' g$ if and only if there is a matrix $A$ and diagonal matrices $D_1, \ldots, D_k, E_1, \ldots, E_k$ such that $f = (E_1\otimes \cdots \otimes E_k)(A\otimes \cdots \otimes A)(D_1\otimes \cdots \otimes D_k)g$. Under this alternative symmetric restriction preorder, the subrank and (alternative) symmetric subrank become equal for all symmetric matrices over any field. \cite{CGLM08} take a similar approach to this when dealing with the symmetric rank over the reals (which is not quadratically closed).}, meaning that every element has a square root. Algebraically closed fields are in particular quadratically closed.

\begin{theorem}\label{symmetric_matrix}
	For any symmetric matrix $f$ over a quadratically closed field $\FF\neq\FF_2$, we have $\subrank(f) = \symsubrank(f)$.
\end{theorem}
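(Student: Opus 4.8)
The plan is to combine Lemma~\ref{lem:matrix-subrank-equals-rank} with the classical diagonalization of symmetric bilinear forms. Since $\symsubrank(f)\le\subrank(f)=\rank(f)$ holds for every matrix, by definition and by Lemma~\ref{lem:matrix-subrank-equals-rank}, it suffices to prove the reverse inequality $\symsubrank(f)\ge\rank(f)$. Writing $r=\rank(f)$ and $f\in\FF^{d\times d}$, I therefore need to exhibit a single matrix $A\in\FF^{r\times d}$ with $AfA^{T}=\langle r\rangle$, i.e.\ with $AfA^{T}$ equal to the $r\times r$ identity matrix.

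The first step is to regard $f$ as the Gram matrix of the symmetric bilinear form $B(x,y)=x^{T}fy$ on $\FF^{d}$. Since $\FF$ has characteristic $\neq 2$, the Gram--Schmidt (Lagrange) procedure for bilinear forms produces a basis $u_{1},\dots,u_{d}$ of $\FF^{d}$ orthogonal for $B$, i.e.\ $B(u_{i},u_{j})=0$ for $i\neq j$; moreover the number of indices $i$ with $B(u_{i},u_{i})\neq 0$ is exactly $\rank(B)=\rank(f)=r$, so after reindexing we may assume $B(u_{i},u_{i})=d_{i}\neq 0$ for $1\le i\le r$. The second step uses that $\FF$ is quadratically closed: choose $s_{i}\in\FF$ with $s_{i}^{2}=d_{i}$ and set $w_{i}=s_{i}^{-1}u_{i}$, so that $B(w_{i},w_{j})=\delta_{ij}$ for $1\le i,j\le r$. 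Let $A\in\FF^{r\times d}$ be the matrix whose $i$th row is the coordinate vector of $w_{i}$. Then $(AfA^{T})_{ij}=w_{i}^{T}fw_{j}=B(w_{i},w_{j})=\delta_{ij}$, hence $AfA^{T}=\langle r\rangle$, which gives $\langle r\rangle\symleq f$ and therefore $\symsubrank(f)\ge r$, as needed. In other words, $f$ is congruent to $\diag(1,\dots,1,0,\dots,0)$ with $r$ ones, and then restricting to the first $r$ coordinates produces $\langle r\rangle$.

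The delicate point — and the reason the hypothesis on $\FF$ matters — is the characteristic-$2$ case: orthogonalization of symmetric bilinear forms fails there for alternating forms, and in fact the statement itself fails. For instance $\begin{pmatrix}0&1\\1&0\end{pmatrix}$ over $\overline{\FF_{2}}$ (a quadratically closed field different from $\FF_{2}$) is symmetric of rank $2$, yet, being alternating in characteristic $2$, it has symmetric subrank $0$ by Lemma~\ref{lem:skew-sym-zero}. So the hypothesis should be read as "characteristic $\neq 2$"; note that any quadratically closed field of characteristic $\neq 2$ is automatically different from $\FF_{2}$, so nothing is lost. Apart from this caveat the argument is routine: quadratic closedness is used only to extract the square roots $s_{i}$, so full algebraic closure is unnecessary, and the rest is the standard diagonal normal form of a symmetric bilinear form over such a field.
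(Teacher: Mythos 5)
Your argument is correct and lands in the same place as the paper's --- $f$ is congruent to $\diag(1,\dots,1,0,\dots,0)$ with $\rank(f)$ ones, followed by restriction to the first $r$ coordinates --- but it gets there by a genuinely different route. The paper invokes Ballantine's congruence-triangularization theorem (Theorem~\ref{th:ballantine}) for arbitrary non-alternating square matrices, observes that for symmetric $f$ the lower-triangular matrix $BfB^{T}$ must itself be symmetric and hence diagonal, and then normalizes with a diagonal matrix of inverse square roots exactly as you do. You replace the appeal to Ballantine by the elementary Lagrange orthogonalization of a symmetric bilinear form in characteristic $\neq 2$, which is self-contained and arguably the right tool here: Ballantine's theorem is really only needed later for non-symmetric matrices (Theorem~\ref{thm:matrix-asymp}). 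What the paper's route buys is uniformity (one black box serving both the symmetric and non-symmetric results); what yours buys is a proof from first principles with the upper bound $\symsubrank(f)\le\subrank(f)=\rank(f)$ cleanly separated via Lemma~\ref{lem:matrix-subrank-equals-rank}.

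Your caveat about characteristic $2$ is not pedantry; it points at a real soft spot in the statement and in the paper's proof. Over $\overline{\FF_2}$ the symmetric rank-$2$ matrix $\left(\begin{smallmatrix}0&1\\1&0\end{smallmatrix}\right)$ satisfies the hypotheses of Lemma~\ref{lem:skew-sym-zero} (zero diagonal and $f_{ij}=-f_{ji}$), so its symmetric subrank is $0$, and the paper's assertion that a symmetric matrix ``is in particular not a nonzero skew-symmetric matrix'' fails in characteristic $2$, where the symmetric matrices with zero diagonal are precisely the alternating matrices excluded by Ballantine's theorem. Reading the hypothesis as ``quadratically closed of characteristic $\neq 2$,'' as you do, is the correct repair, and as you observe nothing is lost since any such field is automatically different from $\FF_2$.
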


It follows from Example~\ref{example:skew_matrix} that the statement of Theorem~\ref{symmetric_matrix} indeed fails over the field~$\FF_2$ if we let $f$ be a full-rank anti-diagonal matrix.

The proof of Theorem~\ref{symmetric_matrix} relies on the following theorem.

\begin{theorem}[Ballantine \cite{BALLANTINE1968261}]\label{th:ballantine}
	Let $\FF$ be a field with size at least $3$ and $f$ be a square matrix of size $d$ over~$\FF$ that is not a nonzero skew-symmetric matrix. There is an invertible $d\times d$ matrix $B$ such that~$BfB^{T}$ is a lower triangular matrix that has exactly $\rank(f)$ nonzero elements on its diagonal.
\end{theorem}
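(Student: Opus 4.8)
The plan is to argue by induction on the size $d$, peeling off one ``pivot'' row and column at a time by a congruence. I would carry the argument out first in characteristic $\neq 2$, writing $S := f + f^{T}$ for the (doubled) symmetric part of $f$; the characteristic-$2$ case runs in parallel once ``skew-symmetric'' is read as ``alternating'' (symmetric with zero diagonal), and I would treat it at the end, pointing out only where it differs.

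\emph{Creating a pivot and peeling.} If $f = 0$ there is nothing to prove. If $f$ has a nonzero diagonal entry, a permutation congruence moves it to position $(1,1)$; if $f$ has zero diagonal but is not skew-symmetric, then $f_{ij} + f_{ji} \neq 0$ for some $i \neq j$, and the congruence $B = I + e_{j}e_{i}^{T}$ produces $(BfB^{T})_{jj} = f_{ij} + f_{ji} \neq 0$, reducing to the previous case. So one may assume $a := f_{11} \neq 0$. Writing $f = \left(\begin{smallmatrix} a & r^{T}\\ c & f_{0}\end{smallmatrix}\right)$, the congruence by $\left(\begin{smallmatrix} 1 & 0 \\ -r/a & I\end{smallmatrix}\right)$ turns $f$ into $\left(\begin{smallmatrix} a & 0 \\ c-r & f_{1}\end{smallmatrix}\right)$ with $f_{1} := f_{0} - \tfrac1a c r^{T}$ of size $d-1$, and since $a \neq 0$ the first column is independent of the rest, whence $\rank(f_{1}) = \rank(f) - 1$. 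If $f_{1}$ is \emph{not} a nonzero skew-symmetric matrix, the induction hypothesis produces an invertible $C$ with $Cf_{1}C^{T}$ lower triangular carrying $\rank(f_{1})$ nonzero diagonal entries; conjugating $\left(\begin{smallmatrix} a & 0 \\ c-r & f_{1}\end{smallmatrix}\right)$ by $\left(\begin{smallmatrix} 1 & 0 \\ 0 & C\end{smallmatrix}\right)$ then gives the required lower triangular form for $f$ with $1 + \rank(f_{1}) = \rank(f)$ nonzero diagonal entries.

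\emph{Choosing the pivot so that $f_{1}$ is not a nonzero skew-symmetric matrix.} This is the heart of the matter. Invariantly, peeling $f$ along a vector $b$ with $b^{T}fb \neq 0$ yields, up to congruence, as $f_{1}$ the Gram matrix of the form $(x,y) \mapsto x^{T}fy$ restricted to the hyperplane $U_{b} := \{x : b^{T}fx = 0\}$, and $f_{1}$ is skew-symmetric exactly when $U_{b}$ is totally isotropic for $S$. A short dimension count (using $\dim U_{b}^{\perp_{S}} = 1 + \dim(U_{b} \cap \ker S)$) shows a hyperplane of $\FF^{d}$ can be totally $S$-isotropic only if $\rank(S) \le 2$. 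Hence if $\rank(S) \ge 3$, any $S$-anisotropic vector $b$ works (one exists because $S \neq 0$ is symmetric and $\FF$ has characteristic $\neq 2$), and the induction closes. If $\rank(S) \le 2$, one still has the freedom of choosing $b$: the condition ``$U_{b}$ totally $S$-isotropic'' forces, beyond $b$ being $S$-anisotropic, that $f^{T}b$ lie in the at-most-$2$-dimensional space $\operatorname{im}(S)$; when $\rank(f) \ge 3$ the set $\{b : f^{T}b \in \operatorname{im}(S)\}$ is then a proper subspace, and the decisive observation is that over a field with $|\FF| \ge 3$ the set of $S$-anisotropic vectors is strictly larger than any proper subspace, so a good $b$ exists. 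When $\rank(f) \le 2$, one first splits off $\ker f \cap \ker f^{T}$ (on which $f$ has zero rows and columns) to reduce to a matrix of size $\le 2\rank(f) \le 4$, and verifies these boundedly many cases directly.

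I expect the regime $\rank(f + f^{T}) \le 2$ — the ``nearly skew-symmetric'' matrices — to be the main obstacle: there the generic pivot choices of the easy case no longer remove the obstruction, and one must choose the peeling vector (and dispatch the small-rank boundary cases) carefully. This is also precisely where both hypotheses enter: the field-size condition $|\FF| \ge 3$ is used through the counting of anisotropic vectors, and its necessity is witnessed by Example~\ref{example:skew_matrix}, where over $\FF_{2}$ a full-rank matrix with $f_{\ell\ell} = 0$ and $f_{ij} = -f_{ji}$ admits no such triangularization; the characteristic assumption is used through the symmetric/skew splitting and is handled separately in characteristic $2$.
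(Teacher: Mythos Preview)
The paper does not prove this statement: Theorem~\ref{th:ballantine} is quoted from Ballantine~\cite{BALLANTINE1968261} and used as a black box in the proofs of Theorem~\ref{symmetric_matrix} and Theorem~\ref{thm:matrix-asymp}. So there is no ``paper's own proof'' to compare your attempt against.

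That said, your plan is essentially the right one and tracks Ballantine's original argument: a congruence-peeling induction where the real content is guaranteeing that the Schur-type complement $f_1$ is not a nonzero skew-symmetric matrix. Your invariant reformulation (choose an $S$-anisotropic vector $b$ so that the hyperplane $U_b=\{x:b^{T}fx=0\}$ is not totally isotropic for $S=f+f^{T}$) is correct, and the dimension count showing that a totally $S$-isotropic hyperplane forces $\rank(S)\le 2$ is exactly the mechanism that dispatches the generic case. Your use of $|\FF|\ge 3$ is also in the right place: it is what ensures that the $S$-anisotropic vectors are not trapped inside the proper subspace $(f^{T})^{-1}(\operatorname{im} S)$, via the standard ``a nonzero quadratic form cannot vanish on the complement of a hyperplane over a field with at least three elements'' argument.

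The part of your outline that is not yet a proof is the boundary regime $\rank(f)\le 2$ together with $\rank(S)\le 2$. Saying ``reduce to size $\le 4$ and verify directly'' hides real work: over an infinite field these are not finitely many matrices, so you still need a uniform argument for each small size (and in characteristic~$2$ you must also redo the anisotropic-vector step, since $b^{T}fb=\tfrac12 b^{T}Sb$ is no longer available and ``skew-symmetric'' must be read as ``alternating''). None of this is hard, but it is where the remaining details live; if you flesh those out your write-up would stand on its own.
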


\begin{proof}[Proof of Theorem~\ref{symmetric_matrix}]
The symmetric matrix $f$ is in particular not a nonzero skew-symmetric matrix, so we may apply Theorem~\ref{th:ballantine} to find an invertible matrix $B$ such that $BfB^T$ is lower triangular with exactly $\rank(f)$ nonzero elements on its diagonal. Since $f$ is symmetric, $BfB^T$ is also symmetric. It follows that $BfB^T$ is a diagonal matrix. Since the ground field is quadratically closed, there is a diagonal matrix $C$ such that $CBfB^TC^T$ is a diagonal matrix with only zeroes and ones on the diagonal. Then clearly $\symsubrank(f) \geq \rank(f) = \subrank(f)$, which proves the claim.
\end{proof}

\subsection{Symmetric tensors of order $k \geq 3$}

In Section~\ref{subsec:symm-complex-matrices} we proved that the symmetric subrank and subrank coincide on symmetric matrices over any quadratically closed field $\FF\neq \FF_2$ (e.g.~the complex numbers~$\CC$).
In this section we consider symmetric tensors of order $k\geq 3$. We will show that on such tensors the symmetric subrank and subrank can be different. We also give a sufficient condition for these notions to be equal (namely when the subrank is ``maximal'').

We call a tensor $f \in (\FF^d)^{\otimes k}$ \emph{symmetric} if for all $(i_1, \ldots, i_k) \in [d]^k$ and all permutations $\sigma$ of $[k]$ we have $f_{i_1, \ldots, i_k} = f_{i_{\sigma(1)}, \ldots, i_{\sigma(k)}}$. %

The following is a small example of a symmetric tensors $f$ of order three over the field $\FF_2$ for which there is strict inequality $\symsubrank(f)<\subrank(f)$. %

\begin{example}
	\label{example:tighttensor}
	Let $f = e_{1}\otimes e_2 \otimes e_3+e_{1}\otimes e_3 \otimes e_2 + e_2\otimes e_1 \otimes e_3 + e_2\otimes e_3 \otimes e_1 + e_3\otimes e_1 \otimes e_2 + e_3\otimes e_2 \otimes e_1 + e_{1}\otimes e_1 \otimes e_1$, where $e_1,e_2,e_3 \in \FF_{2}^3$ is the standard basis of $\FF_{2}^3$. It is not hard to verify that $\symsubrank(f) = 1$ while $\subrank(f) = 2$.
\end{example}

In the first preprint version of this paper, we left the construction of a symmetric tensor~$f$ over $\CC$ satisfying $\symsubrank(f) < \subrank(f)$ as an open problem. This problem is the subrank analog of Comon's conjecture about tensor rank of symmetric tensors, which was recently disproved by Shitov~\cite{Shi18}.
Subsequently we have been informed by Shitov that he can indeed construct such a tensor $f$ over $\CC$ \cite{shitov}.

Next, we prove a general sufficient condition for $\symsubrank(f) = \subrank(f)$. Namely, for symmetric complex tensors, if the subrank is maximal, then also the symmetric subrank is maximal:%

\begin{theorem}\label{thm:snap-to-max}
	 Let $f \in (\CC^d)^{\otimes k}$ be a symmetric tensor. If $\subrank(f) = d$ then $\symsubrank(f) = d$.
\end{theorem}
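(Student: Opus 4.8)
The plan is to start from the hypothesis $\subrank(f) = d$, which means there are linear maps $A^{(1)}, \ldots, A^{(k)} : \CC^d \to \CC^d$ with $(A^{(1)} \otimes \cdots \otimes A^{(k)}) f = \langle d \rangle$. Since $\langle d \rangle$ has full support on the diagonal (and in particular has ``multilinear rank'' $d$ in each flattening), each $A^{(i)}$ must be invertible: if some $A^{(i)}$ were singular, the image of $f$ under the tensored map would lie in a proper subspace in the $i$-th leg, contradicting that $\langle d \rangle$ uses all $d$ basis vectors in that leg. Replacing $f$ by the invertible restriction $g = (A^{(1)} \otimes \cdots \otimes A^{(k)}) f = \langle d \rangle$, we reduce to understanding which symmetric tensors can be turned into $\langle d \rangle$ by an \emph{unrestricted} (invertible) GL action, and then trying to upgrade this to a single-map (symmetric) action.

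Next, I would invert the picture: since the $A^{(i)}$ are invertible, $f = ((A^{(1)})^{-1} \otimes \cdots \otimes (A^{(k)})^{-1}) \langle d \rangle$. So the real content is: a symmetric tensor $f$ lies in the $\GL_d(\CC)^{\times k}$-orbit of $\langle d \rangle$. I want to show it then lies in the diagonal-$\GL_d(\CC)$ orbit of $\langle d \rangle$ up to the symmetric-subrank requirement, i.e.\ there is a single invertible $B$ with $B^{\otimes k} f = \langle d \rangle$, equivalently $f = (B^{-1})^{\otimes k} \langle d \rangle = \sum_{i} v_i^{\otimes k}$ where $v_i = B^{-1} e_i$ form a basis. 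In other words, the goal becomes: \emph{a symmetric tensor that is a full-rank ``unit tensor up to $\GL^{\times k}$'' must be a sum of $d$ linearly independent $k$-th powers.} This is exactly a statement about the symmetric rank / Waring decomposition being compatible with the asymmetric structure — a finite-dimensional, orbit-closure-free instance of Comon-type phenomena. The key algebraic fact I would try to use is that the stabilizer of $\langle d \rangle$ in $\GL_d^{\times k}$ is $\{(D_1, \ldots, D_k) : D_i \text{ diagonal}, D_1 \cdots D_k = I\}$ together with a common permutation; so if $(A^{(1)})^{-1}, \ldots$ send $\langle d \rangle$ to the \emph{same} symmetric tensor, the tuples $((A^{(1)})^{-1} P D_1, \ldots)$ range over a coset, and I want to exhibit a representative of the form $(B^{-1}, \ldots, B^{-1})$. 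Concretely: write $M_i = (A^{(i)})^{-1}$; symmetry of $f = (M_1 \otimes \cdots \otimes M_k)\langle d\rangle$ under coordinate permutations forces, for each transposition $(1\,j)$, that $(M_j, M_2, \ldots, M_1, \ldots, M_k)$ and $(M_1, \ldots, M_k)$ differ by a stabilizer element, i.e.\ $M_j^{-1} M_1 = P D$ for a permutation matrix $P$ and diagonal $D$ (with a compensating relation on the others). From these relations I would solve for all $M_i$ in terms of $M_1$ up to diagonal/permutation freedom, and then use quadratic closedness of $\CC$ to absorb the diagonal scalars (taking $k$-th roots — here $\CC$ being algebraically closed is more than enough) into a single matrix $B$.

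The main obstacle I expect is exactly this last bookkeeping: extracting, from the permutation-invariance of $f$, the precise statement that all the $M_i$ agree up to a common diagonal rescaling and a common permutation, and then checking that the diagonal rescaling can be distributed symmetrically (one needs $k$-th roots of the diagonal entries, available over $\CC$, and one must verify the permutation is consistent across all legs). A clean way to organize this is to pass to the associated homogeneous degree-$k$ form or to argue via the symmetric rank directly: show $\symrank(f) = d$ (equivalently $f = \sum_{i=1}^d v_i^{\otimes k}$ with $v_i$ linearly independent) by combining the full-rank condition in each flattening with the orbit condition, and then the map $B : v_i \mapsto e_i$ witnesses $\symsubrank(f) \ge d$, hence $= d$. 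I would also keep in mind the possibility that the cleanest route is to use the known structure of symmetric tensors with full symmetric-rank-generic behaviour, or to reduce to the matrix case leg-by-leg using Theorem~\ref{symmetric_matrix}; but I anticipate the permutation/stabilizer argument above to be the robust one, with the distribution of diagonal scalars via $k$-th roots over $\CC$ being the technical heart.
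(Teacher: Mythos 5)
Your first step---each $A^{(i)}$ must be invertible because every flattening of $\langle d\rangle$ has rank $d$---is exactly the paper's first step. Where you diverge is in upgrading the multilinear equivalence to a symmetric one. The paper does this by quoting a theorem of Belitskii and Sergeichuk (Theorem~\ref{thm:congruence_tensor}): if $f' = (A^{\pi(1)}\otimes\cdots\otimes A^{\pi(k)})f$ for \emph{all} permutations $\pi$ with the $A^i$ invertible, then $f' = B^{\otimes k}f$ for a single invertible $B$; a one-line computation (Corollary~\ref{cor:congr-tens}) shows that for symmetric $f,f'$ a single such relation already implies all the permuted ones. That black box needs no uniqueness-of-decomposition input and no case distinction on $k$ or $d$.

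Your alternative route---writing $f=\sum_i M_1e_i\otimes\cdots\otimes M_ke_i$ with $M_i=(A^{(i)})^{-1}$ and using permutation-invariance of $f$ together with the stabilizer of $\langle d\rangle$ to force all $M_i$ to agree up to a common permutation and diagonal---can be made to work, but as written it has a genuine gap exactly at what you yourself call the ``technical heart.'' To conclude that the decomposition $\sum_i u_i^{(1)}\otimes\cdots\otimes u_i^{(k)}$ and its leg-permuted version differ by a stabilizer element, you need essential uniqueness of the rank-$d$ decomposition of a concise tensor in $(\CC^d)^{\otimes k}$; this holds for $k\ge 3$ (e.g.\ by Kruskal's theorem, since each factor matrix is invertible), but you never invoke it, and your description of the stabilizer of $\langle d\rangle$ as permutation-times-diagonal is simply false for $k=2$, where the stabilizer is $\{(A,(A^{T})^{-1})\}$; that case has to be routed through Theorem~\ref{symmetric_matrix} separately. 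Even granting uniqueness, the induced action of $\mathfrak{S}_k$ may permute the $d$ rank-one summands nontrivially, so ``solving for all $M_i$ in terms of $M_1$'' and then distributing the diagonal scalars via $k$-th roots requires an actual argument about that permutation action, not just bookkeeping. Either cite the congruence theorem as the paper does, or supply the Kruskal-type uniqueness statement plus the lemma that a symmetric tensor with an essentially unique decomposition admits a symmetric (Waring) decomposition; without one of these the proof is incomplete.
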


To prove Theorem~\ref{thm:snap-to-max} we use a simple corollary of the following theorem.

\begin{theorem}[Belitskii and Sergeichuk \cite{Belitskii07}] \label{thm:congruence_tensor}
	Let $f,f' \in (\CC^d)^{\otimes k}$ be tensors of order~$k$. If there are invertible $d\times d$ matrices $A^1,\dots,A^k$ such that $f' = (A^{\pi(1)}\otimes \dots \otimes A^{\pi(k)})f$ for all permutations $\pi \in \mathfrak{S}_k$, then there is an invertible $d\times d$ matrix $B$ such that $f' = (B\otimes \dots \otimes B)f$.
\end{theorem}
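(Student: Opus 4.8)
This statement is the theorem of Belitskii and Sergeichuk \cite{Belitskii07}; here is the self-contained argument I would give for it. Since every $A^i$ is invertible, $f$ and $f'$ lie in the same $\GL_d(\CC)^{\times k}$-orbit, and after composing everything with $(A^1\otimes\cdots\otimes A^k)^{-1}$ the claim reduces to showing that the coset $(A^1,\dots,A^k)\cdot\mathrm{Stab}(f)$, where $\mathrm{Stab}(f)=\{(C^1,\dots,C^k):(C^1\otimes\cdots\otimes C^k)f=f\}\subseteq\GL_d(\CC)^{\times k}$, contains a diagonal tuple $(B,\dots,B)$. Set $Q_i:=(A^1)^{-1}A^i$, so $Q_1=\Id$. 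Plugging $\pi=\id$ and the transposition $(1\,i)$ into the hypothesis (with the same $A^1,\dots,A^k$) and cancelling $A^1\otimes\cdots\otimes A^k$ yields, for each $i\ge 2$, the relation $(R_i)$: the tuple with $Q_i$ in slot $1$, $Q_i^{-1}$ in slot $i$ and $\Id$ elsewhere belongs to $\mathrm{Stab}(f)$. Writing the sought $B$ as $B=A^1C$, the goal is equivalent to finding an invertible $C$ with $(C,\,Q_2^{-1}C,\,\dots,\,Q_k^{-1}C)\in\mathrm{Stab}(f)$.

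The mechanism is already transparent for $k=2$, which I would treat first. Identifying $f$ with a $d\times d$ matrix, $(R_2)$ reads $Q_2f=fQ_2^{T}$, and the target reads $CfC^{T}=Q_2f$. Every invertible complex matrix has a square root that is a polynomial in it, so write $Q_2=p(Q_2)^2$ and set $C:=p(Q_2)$; the monomial identity $Q_2^{n}f=f(Q_2^{T})^{n}$ gives $p(Q_2)f=f\,p(Q_2)^{T}$, whence $CfC^{T}=f\,p(Q_2)^{T}p(Q_2)^{T}=f\,Q_2^{T}=Q_2f$, as required. Thus $B=A^1((A^1)^{-1}A^2)^{1/2}$, the matrix geometric mean of $A^1$ and $A^2$, does the job, and it is manifestly symmetric in $A^1,A^2$.

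For general $k$ I would run the same philosophy: use that $\GL_d(\CC)$ is divisible (every element has $m$-th roots, realizable as polynomials in it) to assemble $C$ as an appropriate geometric mean of $A^1,\dots,A^k$, controlled by the relations $(R_i)$, or more systematically by trivialising the $1$-cocycle $\pi\mapsto(A^1,\dots,A^k)^{-1}(A^{\pi(1)},\dots,A^{\pi(k)})\in\mathrm{Stab}(f)$ for the $\mathfrak{S}_k$-action permuting tensor factors; one can either induct on $k$, peeling off one coordinate at a time, or trivialise this cocycle in one step. The main obstacle is that $\mathrm{Stab}(f)$ is in general non-abelian and disconnected, so the ``take a root / average over $\mathfrak{S}_k$'' move does not globalise for free and one genuinely needs information about the orbit structure of $f$ under $\GL_d(\CC)^{\times k}$ --- this is exactly where Belitskii and Sergeichuk deploy their regularising decomposition of tensors \cite{Belitskii07} (alternatively one can argue via the connectedness of $\GL_d(\CC)$ and a deformation inside the orbit). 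For the only application we need here --- Theorem~\ref{thm:snap-to-max}, where $\langle d\rangle$ is a restriction of $f$ via necessarily invertible maps --- the tensor $f$ stays in one well-understood orbit, so the argument in that case is considerably more elementary.
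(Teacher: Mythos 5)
First, note that the paper does not prove this statement at all: it is imported verbatim from Belitskii and Sergeichuk \cite{Belitskii07} and used as a black box (its only role is to feed Corollary~\ref{cor:congr-tens} and hence Theorem~\ref{thm:snap-to-max}). So there is no in-paper argument to compare yours against; your attempt has to be judged as a self-contained proof.

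On that standard, your reduction is correct and your $k=2$ case is complete: equating the $\pi=\id$ and $\pi=(1\,2)$ hypotheses does give $Q_2f=fQ_2^{T}$, and since an invertible complex matrix has a square root that is a polynomial in it, $C=p(Q_2)$ satisfies $CfC^{T}=Q_2f$ and $B=A^1C$ works. For $k\geq 3$, however, there is a genuine gap, which you yourself flag: the relations $(R_i)$ only give you elements of $\mathrm{Stab}(f)$ supported on two tensor slots, and producing from them a single invertible $C$ with $(C,Q_2^{-1}C,\dots,Q_k^{-1}C)\in\mathrm{Stab}(f)$ is precisely the content of the theorem. The ``geometric mean'' heuristic does not go through as stated, because the $Q_i$ need not commute and $\mathrm{Stab}(f)$ need not be abelian or connected; trivialising your $1$-cocycle genuinely requires the structural input (the regularising decomposition of \cite{Belitskii07}) for which you offer no substitute. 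Your closing remark --- that the instance needed for Theorem~\ref{thm:snap-to-max} is more elementary because there $f$ is equivalent to $\langle d\rangle$, whose stabiliser in $\GL(d)^{\times k}$ is an explicit group of permutation-times-diagonal tuples --- is plausible, but you do not carry out that computation either. So: complete for $k=2$, correct setup in general, but the heart of the general case is deferred back to the very reference the paper cites.
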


\begin{corollary} [Corollary of Theorem \ref{thm:congruence_tensor}]\label{cor:congr-tens}
	Let $f',f \in (\CC^d)^{\otimes k}$ be symmetric tensors. If there are invertible $d\times d$ matrices $A^1, \dots, A^k$ such that $f' = (A^1 \otimes \dots \otimes A^k)f$, then there is an invertible $d\times d$ matrix $B$ such that $f' = (B \otimes \dots \otimes B)f$. 
\end{corollary}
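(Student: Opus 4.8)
The plan is to leverage the symmetry of $f$ and $f'$ to upgrade the single restriction $f' = (A^1 \otimes \cdots \otimes A^k) f$ into the full hypothesis of Theorem~\ref{thm:congruence_tensor}, namely that $f' = (A^{\pi(1)} \otimes \cdots \otimes A^{\pi(k)}) f$ holds simultaneously for \emph{every} permutation $\pi \in \mathfrak{S}_k$; once that is in hand the conclusion follows by directly invoking Theorem~\ref{thm:congruence_tensor}.

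First I would record how a permutation $\pi \in \mathfrak{S}_k$ of the tensor legs interacts with a product operator. Writing $P_\pi$ for the operator on $(\CC^d)^{\otimes k}$ that permutes the factors according to $\pi$, one checks on rank-one tensors that permuting the factors of $(A^1 v_1) \otimes \cdots \otimes (A^k v_k)$ by $\pi$ yields $(A^{\pi^{-1}(1)} \otimes \cdots \otimes A^{\pi^{-1}(k)})$ applied to $P_\pi(v_1 \otimes \cdots \otimes v_k)$; by linearity this gives the operator identity $P_\pi \circ (A^1 \otimes \cdots \otimes A^k) = (A^{\pi^{-1}(1)} \otimes \cdots \otimes A^{\pi^{-1}(k)}) \circ P_\pi$.

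Next I would apply $P_\pi$ to both sides of $f' = (A^1 \otimes \cdots \otimes A^k) f$. Since $f'$ is symmetric we have $P_\pi f' = f'$; using the commutation identity and then the symmetry of $f$ (so $P_\pi f = f$), the right-hand side becomes $(A^{\pi^{-1}(1)} \otimes \cdots \otimes A^{\pi^{-1}(k)}) P_\pi f = (A^{\pi^{-1}(1)} \otimes \cdots \otimes A^{\pi^{-1}(k)}) f$. As $\pi$ ranges over $\mathfrak{S}_k$ so does $\pi^{-1}$, so we obtain $f' = (A^{\pi(1)} \otimes \cdots \otimes A^{\pi(k)}) f$ for all $\pi \in \mathfrak{S}_k$, with each $A^i$ invertible by assumption. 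Theorem~\ref{thm:congruence_tensor} then supplies an invertible $d \times d$ matrix $B$ with $f' = (B \otimes \cdots \otimes B) f$, which is exactly the claim.

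There is no real obstacle here: the only thing to be careful about is the direction of the index permutation ($\pi$ versus $\pi^{-1}$) in the commutation identity, but since we quantify over all of $\mathfrak{S}_k$ this cosmetic point washes out. All the substantive content is carried by Theorem~\ref{thm:congruence_tensor}; the role of symmetry is precisely to convert one restriction into the symmetrized family of restrictions that theorem requires.
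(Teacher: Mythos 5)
Your proof is correct and takes essentially the same route as the paper's: both arguments use the symmetry of $f$ and $f'$ to upgrade the single restriction into the family $f' = (A^{\pi(1)}\otimes\cdots\otimes A^{\pi(k)})f$ for all $\pi\in\mathfrak{S}_k$ and then invoke Theorem~\ref{thm:congruence_tensor}. The only difference is presentational — you phrase the step via the leg-permutation operators $P_\pi$ and the commutation identity, whereas the paper carries out the identical index manipulation in coordinates.
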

\begin{proof}[Proof of Corollary~\ref{cor:congr-tens}]
	For any permutation $\pi \in \mathfrak{S}_k$. We have
	\begin{align*}
	\sum_{j_1 \in[d],\dots,j_k \in [d]}A^{\pi(1)}_{i_1,j_1} \dots A^{\pi(k)}_{i_k,j_k}f_{j_1,\dots,j_k} &=
	\sum_{j_1 \in[d],\dots,j_k \in [d]}A^{1}_{i_{\pi^{-1}(1)},j_1} \dots A^{k}_{i_{\pi^{-1}(k)},j_k}f_{j_1,\dots,j_k} \\
	&= f'_{i_{\pi^{-1}(1)},\dots,i_{\pi^{-1}(k)}} = f'_{i_1,\dots,i_k}.
	\end{align*} 
	Therefore $f' = (A^{\pi(1)} \otimes \dots \otimes A^{\pi(k)})f$ for all $\pi \in \mathfrak{S}_k$. By using Theorem~\ref{thm:congruence_tensor}, the proof is completed.  
\end{proof}

\begin{proof}[Proof of Theorem~\ref{thm:snap-to-max}]
 Since $\subrank(f) = d$, there are $k$ matrices $A^1,\dots,A^k$ of size $d \times d$ such that $ \left \langle  d \right \rangle  = (A^{(1)} \otimes \dots \otimes A^{(k)})f$. Suppose that there is a matrix $A^{(i)}$ which is not invertible, then the rank of $i$-th flattening matrix of $(A^{(1)} \otimes \dots \otimes A^{(k)})f$ is smaller than $d-1$, that is, $\rank(\flatten_{i} ( (A^{(1)} \otimes \dots \otimes A^{(k)})f ) \leq d-1$. But the rank of all flattenings of $ \left \langle  d \right \rangle$ are equal to~$d$. Therefore all $A^{(1)},\dots,A^{(k)}$ are invertible matrices. By the above corollary, there is an invertible matrix $B$ such that $ \left \langle  d \right \rangle = (B\otimes \dots \otimes B)f$, this implies $\symsubrank(f) = d$.     
\end{proof}

\section{Asymptotic symmetric subrank}\label{sec:asymp-symm-subrank}

In Section~\ref{sec:symmetric_subrank} we introduced the symmetric subrank guided by the motivation of using this tensor parameter to upper bound the independence number of hypergraphs.
In many of these hypergraph independence problems (e.g.~the cap set problem, sunflower problem, etc.) the hypergraph under consideration has a power structure (under the strong product $\boxtimes$, which is simply the tensor product on the adjacency tensor). 
In other words, the parameter of interest in those problems is the rate of growth of the independence number of large powers of a fixed small hypergraph. This is captured by the Shannon capapcity
\[
\Theta(H) = \lim_{n\to\infty} \alpha(H^{\boxtimes n})^{1/n} = \sup_n \alpha(H^{\boxtimes n})^{1/n}.
\]
In this asymptotic context, and with upper bounding the Shannon capacity in mind, we introduce and study the asymptotic symmetric subrank.
We define the \emph{asymptotic symmetric subrank} of a tensor $f \in (\FF^d)^{\otimes k}$ as
\begin{align*}
\symasympsubrank(f) \coloneqq \limsup\limits_{n \ra \infty} \symsubrank(f^{\otimes n})^{1/n}.
\end{align*}
(The fact that we are using the lim sup rather than lim or sup is a technicality which in most relevant cases simplifies as we discuss below.)
For any tensor $f \in (\FF^d)^{\otimes k}$, since we have the basic inqualities $\symsubrank(f) \leq \subrank(f) \leq d$, we also have that $\symasympsubrank(f) \leq \asympsubrank(f)\leq d$.

Note that, because of the earlier Example~\ref{example:skew_matrix}, this lim sup cannot generally be replaced by a limit.\footnote{For the usual subrank, the \emph{asymptotic subrank} of the tensor~$f \in (\FF^d)^{\otimes k}$ was defined by Strassen as the limit
$\asympsubrank(f) = \lim_{n \ra \infty} \subrank(f^{\otimes n})^{1/n}$, which,
since $\subrank$ is super-multiplicative and $\subrank(f)\geq1$ if $f\neq 0$, equals the supremum $\sup_{n} \subrank(f^{\otimes n})^{1/n}$ (Fekete's lemma).
For the \emph{symmetric} subrank, we have to be more careful about how we define the asymptotic symmetric subrank. 
For example, in Example~\ref{example:skew_matrix} we gave a matrix~$f$ for which $f^{\otimes n}$ is symmetric if $n$ is even and skew-symmetric if $n$ is odd, and so
$\symsubrank(f^{\otimes n}) = 2^n$ if $n$ is even, and $\symsubrank(f^{\otimes n}) = 0$ when $n$ is odd.
Thus, the limit $\lim_{n\ra \infty} \symsubrank(f^{\otimes n})^{1/n}$ might not exist.}
However, we will be interested in the adjacency tensors of hypergraphs which have the special property that the coefficients on the main diagonal are all one. In that case we can replace the lim sup by a limit or supremum as follows:

\begin{proposition}
\label{limsup_sup}
	Let $f \in (\FF^d)^{\otimes k}$ be a tensor such that there is an $i \in [d]$ with $f_{i,\dots,i} = 1$. Then
	$\symasympsubrank(f) = \sup_{n} \symsubrank(f^{\otimes n})^{1/n} = \lim_{n\to\infty} \symsubrank(f^{\otimes n})^{1/n}$.\footnote{If the field $\FF$ is closed under taking $k$th roots, for instance if it is algebraically closed, then of course the theorem remains true if we replace $f_{i,\ldots, i} = 1$ by $f_{i, \ldots, i} \neq 0$.}
\end{proposition}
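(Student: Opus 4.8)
The plan is to reduce everything to Fekete's lemma. Set $a_n \coloneqq \symsubrank(f^{\otimes n})$. The claim is that $(a_n)_{n\ge 1}$ is a super-multiplicative sequence of \emph{positive} integers bounded above by $d^n$; granting this, Fekete's lemma forces $\lim_{n\to\infty} a_n^{1/n}$ to exist and to equal $\sup_n a_n^{1/n}$, and since the limit exists it coincides with the $\limsup$ appearing in the definition of $\symasympsubrank(f)$, giving all three equalities at once.

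\textbf{Super-multiplicativity.} First I would observe that $\symsubrank$ is super-multiplicative under $\otimes$: if $\langle r\rangle = A^{\otimes k}g$ and $\langle s\rangle = B^{\otimes k}h$ for linear maps $A,B$, then, after the usual leg-wise identification of $g\otimes h$ as a $k$-tensor, $(A\otimes B)^{\otimes k}(g\otimes h) = (A^{\otimes k}g)\otimes(B^{\otimes k}h) = \langle r\rangle\otimes\langle s\rangle = \langle rs\rangle$, so $\symsubrank(g\otimes h)\ge \symsubrank(g)\,\symsubrank(h)$. Taking $g = f^{\otimes m}$ and $h = f^{\otimes n}$ gives $a_{m+n}\ge a_m a_n$.

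\textbf{Positivity and boundedness.} This is where the hypothesis enters. Since $(f^{\otimes n})_{(i,\dots,i),\dots,(i,\dots,i)} = (f_{i,\dots,i})^n = 1$, the $1\times d^n$ matrix $A$ with a single $1$ in the entry indexed by $(i,\dots,i)$ and zeros elsewhere satisfies $A^{\otimes k} f^{\otimes n} = \langle 1\rangle$, so $a_n\ge 1$ for every $n$. On the other side, $a_n = \symsubrank(f^{\otimes n}) \le \subrank(f^{\otimes n}) \le d^n$. Hence $\log a_n$ is a super-additive sequence of reals with $0\le \log a_n\le n\log d$, so by Fekete's lemma $\tfrac1n\log a_n \to \sup_n\tfrac1n\log a_n \le \log d$; exponentiating, $\lim_{n\to\infty}a_n^{1/n}=\sup_n a_n^{1/n}$. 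Because the limit exists, it equals $\limsup_n a_n^{1/n} = \symasympsubrank(f)$, which finishes the argument.

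\textbf{Main obstacle.} The statement genuinely requires a hypothesis: Example~\ref{example:skew_matrix} exhibits an $f$ with $a_n = 0$ for all odd $n$ and $a_n = 2^n$ for all even $n$, so the super-multiplicative Fekete lemma is inapplicable and the plain limit does not exist. The one step needing care is therefore the positivity claim — the role of $f_{i,\dots,i}=1$ (or $f_{i,\dots,i}\neq 0$ over a field closed under $k$th roots, as in the footnote) is precisely to guarantee $a_n\ge 1$ for all $n$. Everything else is routine bookkeeping about how restriction interacts with tensor products.
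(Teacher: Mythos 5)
Your proof is correct and follows essentially the same route as the paper: establish $\symsubrank(f^{\otimes n})\ge 1$ from the nonzero diagonal entry, note super-multiplicativity under $\otimes$, and apply Fekete's lemma. The only cosmetic difference is that you verify positivity directly for each power while the paper does it for $n=1$ and lets super-multiplicativity do the rest; you also spell out the super-multiplicativity argument, which the paper merely asserts.
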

\begin{proof}
	Let $B \in \RR^{1\times d}$ be the $1\times d$ matrix with $B_{1,i} = 1$ and the other entries equal to~$0$. Then $ \left \langle  1 \right \rangle 
	= (B\otimes \dots \otimes B)f$. Therefore $\symsubrank(f) \geq 1$. The symmetric subrank is super-multiplicative under tensor product. Thus, by Fekete's lemma, we find the required statement that $\symasympsubrank(f) =\sup_{n} \symsubrank(f^{\otimes n})^{1/n} = \lim_{n\to\infty} \symsubrank(f^{\otimes n})^{1/n}$.
\end{proof}

The important property of $\symasympsubrank$ is that it directly gives an upper bound on the Shannon capacity of hypergraphs.
\begin{proposition}
	\label{asymptotic_bound_shannon_capacity}
	Let $H = (V,E)$ be a directed $k$-uniform hypergraph on $n$ vertices. Let $\FF$ be any field.
	Let $f \in (\FF^n)^{\otimes k}$ be a tensor such that, for every $e \in [n]^k$ if $e \not\in E$, then $f_{e_1, \ldots, e_k} = 0$, and for every $i \in [n]$, $f_{i,\dots,i} = 1$.
	Then $\Theta(H) \leq \symasympsubrank(f)$.
\end{proposition}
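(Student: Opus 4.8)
The plan is to lift the single-shot bound of Proposition~\ref{adj_boundsubrank} to every strong power $H^{\boxtimes m}$ and then pass to the limit using the super-multiplicativity already exploited in Proposition~\ref{limsup_sup}. Concretely, for each $m\ge 1$ I will exhibit a tensor that stands in the same relation to $H^{\boxtimes m}$ as $f$ does to $H$, and the natural candidate is $f^{\otimes m}$.

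First I would record that, by the definition of the strong product (which is nothing but the tensor product at the level of adjacency tensors, as noted above), $H^{\boxtimes m}$ is a directed $k$-uniform hypergraph on the vertex set $V^m$, identified with $[n^m]$, and $A_{H^{\boxtimes m}} = A_H^{\otimes m}$. Then I would check that $f^{\otimes m} \in (\FF^{n^m})^{\otimes k}$ is \emph{compatible} with $H^{\boxtimes m}$ in precisely the sense demanded by Proposition~\ref{adj_boundsubrank}. The hypotheses on $f$ say in particular that $\supp(f)\subseteq\supp(A_H)$ and that $f$ agrees with $A_H$ on the diagonal (both equal to~$1$). For the diagonal of the power: writing a vertex of $H^{\boxtimes m}$ as $v = (i^{(1)},\dots,i^{(m)})$, one has $(f^{\otimes m})_{v,\dots,v} = \prod_{j=1}^m f_{i^{(j)},\dots,i^{(j)}} = 1$. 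For the support: by multiplicativity of support under tensor products, $\supp(f^{\otimes m}) \subseteq \supp(A_H^{\otimes m}) = \supp(A_{H^{\boxtimes m}}) = E(H^{\boxtimes m}) \cup \{(v,\dots,v): v\in V^m\}$, so every entry of $f^{\otimes m}$ indexed off the diagonal and outside $E(H^{\boxtimes m})$ vanishes. Hence $f^{\otimes m}$ is an admissible tensor for $H^{\boxtimes m}$, and Proposition~\ref{adj_boundsubrank} gives $\alpha(H^{\boxtimes m}) \le \symsubrank(f^{\otimes m})$ for every $m \ge 1$.

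Finally I would take $m$-th roots and supremize: from $\alpha(H^{\boxtimes m})^{1/m} \le \symsubrank(f^{\otimes m})^{1/m}$ for all $m$ we get
\[
\Theta(H) = \sup_{m} \alpha(H^{\boxtimes m})^{1/m} \le \sup_{m} \symsubrank(f^{\otimes m})^{1/m} = \symasympsubrank(f),
\]
where the last equality is Proposition~\ref{limsup_sup}, applicable because $f_{i,\dots,i} = 1$ for every $i \in [n]$ (in particular for some~$i$).

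The only step carrying any content is the compatibility check in the middle paragraph: one must verify that tensoring $f$ with itself respects the strong-power structure of $H$, i.e.\ that $\supp(f^{\otimes m})$ does not escape $\supp(A_{H^{\boxtimes m}})$ and that diagonal entries stay equal to~$1$. This is routine bookkeeping with $\supp(g\otimes h) = \supp(g)\times\supp(h)$ together with the identity $A_{H^{\boxtimes m}} = A_H^{\otimes m}$; a little care is needed for diagonal indices, but these cause no trouble precisely because both $f$ and $A_H$ (hence all of their tensor powers) carry $1$'s on the diagonal. Everything else — Fekete's lemma and the coincidence of $\limsup$, $\sup$ and $\lim$ for $\symasympsubrank$ — is already packaged in Proposition~\ref{limsup_sup}.
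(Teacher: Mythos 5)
Your proposal is correct and follows exactly the paper's own route: observe that $f^{\otimes m}$ satisfies the hypotheses of Proposition~\ref{adj_boundsubrank} for $H^{\boxtimes m}$, apply that proposition powerwise, and pass to the limit via Proposition~\ref{limsup_sup}. The only difference is that you spell out the compatibility check (support containment and diagonal entries of $f^{\otimes m}$) that the paper leaves implicit, which is a welcome addition rather than a deviation.
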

\begin{proof}
	By the definition of $f$, we have that $f^{\otimes n}$ satisfies the condition of Proposition~\ref{adj_boundsubrank} for the hypergraph~$H^{\boxtimes n}$. %
	Therefore $\Theta(H) = \sup_{n}(\alpha(H^{\boxtimes n}))^{1/n} \leq \sup_{n}(\symsubrank(A_{H}^{\otimes n}))^{1/n} = \symasympsubrank(A_H)$. %
\end{proof}

\subsection{Tensors of order two (matrices)}

We conjecture that the asymptotic symmetric subrank of a $k$-tensor with $k \geq 3$ can be strictly smaller than the asymptotic subrank. This cannot happen for $k = 2$. In that case we prove that there is no strict inequality, again using Theorem~\ref{th:ballantine}.

\begin{theorem}\label{thm:matrix-asymp}
	For any matrix~$f$ over a quadratically closed field $\FF \neq \FF_2$, $\asympsubrank(f) = \symasympsubrank(f)$.
\end{theorem}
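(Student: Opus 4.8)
The plan is to prove the two inequalities $\symasympsubrank(f) \leq \asympsubrank(f)$ and $\symasympsubrank(f) \geq \asympsubrank(f)$. The first is immediate from $\symsubrank \leq \subrank$, so the real content is the reverse inequality. I would start from the fact (Lemma~\ref{lem:matrix-subrank-equals-rank}) that $\subrank(f) = \rank(f)$, hence $\asympsubrank(f) = \lim_n \rank(f^{\otimes n})^{1/n} = \rank(f)$ since matrix rank is multiplicative under tensor (Kronecker) product. So the goal reduces to showing $\symasympsubrank(f) \geq \rank(f)$, i.e.\ that for some (equivalently every large) $n$, $\symsubrank(f^{\otimes n})^{1/n}$ approaches $\rank(f)$.

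The key idea is to pass to an even tensor power to kill the skew-symmetric obstruction of Example~\ref{example:skew_matrix}. Consider $g = f^{\otimes 2}$, viewed as a matrix in $\FF^{d^2 \times d^2}$ via the flattening that pairs the first two legs against the last two. This $g$ is \emph{symmetric} as a matrix: $(f \otimes f)^T = f^T \otimes f^T$, and one checks that the flattening of $f^{\otimes 2}$ equals $f \otimes f$ up to the identification, which is symmetric because swapping the two copies is an allowed symmetry. More carefully: writing $M$ for the $d\times d$ matrix of $f$, the matrix of $f^{\otimes 2}$ under the leg-splitting $(1,2\,|\,3,4)$ is $M \otimes M$ — no, one must be careful, $f^{\otimes 2}$ has legs $(1,2,3,4)$ and the flattening sending legs $1,3$ to rows and $2,4$ to columns gives $M \otimes M$, which is not symmetric in general. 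Instead I would use the flattening/identification under which $f^{\otimes 2}$ corresponds to $M \otimes M^T$ reshaped appropriately, or — cleaner — argue directly: $\symsubrank(f \otimes f) \geq \subrank(M M^T)$-type bound. The honest route: apply Theorem~\ref{th:ballantine} to $M$ to get an invertible $B$ with $BMB^T$ lower triangular with exactly $\rank(M)$ nonzero diagonal entries (valid since $\FF$ has size $\geq 3$; if $M$ happens to be nonzero skew-symmetric, replace $f$ by $f^{\otimes 2}$ first, whose matrix $M \otimes M$ under the symmetric flattening is symmetric, not skew, so Ballantine applies). Then act on $f$ by $B$ symmetrically: $(B \otimes B) f$ has matrix $BMB^T$, which is lower triangular with $\rank(M)$ nonzero diagonal entries. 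Now take a large tensor power: $((B\otimes B)f)^{\otimes n}$ is lower triangular (as a matrix) with $\rank(M)^n$ nonzero diagonal entries, hence by symmetric Gaussian-style elimination on a triangular matrix — which only uses operations of the form $A \cdot (-) \cdot A^T$ with $A$ acting simultaneously — we can extract the diagonal $\langle \rank(M)^n \rangle$ up to scaling the diagonal entries, and since $\FF$ is quadratically closed each nonzero scalar $c$ on the diagonal can be normalized to $1$ by the congruence $c^{-1/2}$. This gives $\symsubrank(f^{\otimes n}) \geq \rank(M)^n$, and taking $n$-th roots, $\symasympsubrank(f) \geq \rank(M) = \asympsubrank(f)$.

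The main obstacle I anticipate is the middle step: showing that a lower-triangular matrix with $r$ nonzero diagonal entries has symmetric subrank $\geq r$ — equivalently, that the symmetric restriction order permits the "clean up the triangular part" elimination. Ordinary subrank of a triangular matrix is just its rank, but symmetric subrank is more delicate since row and column operations are tied together. I would handle this by an explicit induction: given a nonzero diagonal entry, use the symmetric operation $I + t\, e_j e_i^T$ (congruence $(I + t e_j e_i^T) M (I + t e_i e_j^T)$) to zero out off-diagonal entries below/right of it one at a time, being careful to process pivots from the bottom-right corner upward so that clearing one entry does not disturb previously cleared ones — the triangular structure is exactly what makes this work. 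Alternatively, and perhaps more robustly, I would invoke Theorem~\ref{th:ballantine} directly on the triangular matrix $\left((B\otimes B)f\right)^{\otimes n}$ (it is not skew-symmetric since it has nonzero diagonal), obtaining an invertible congruence to a lower-triangular matrix with $\rank = r^n$ nonzero diagonal entries, then project onto those $r^n$ coordinates — a symmetric restriction — to land on a diagonal matrix with $r^n$ nonzero entries, and finally normalize via quadratic closedness. That second route reuses Theorem~\ref{th:ballantine} cleanly and avoids re-deriving an elimination lemma, so that is the version I would write up. Finally I would note the degenerate cases ($f = 0$, or $f \neq 0$ with $\rank$-$1$) are trivial, and that the hypotheses $\FF \neq \FF_2$ and $\FF$ quadratically closed are used precisely in the Ballantine application and the normalization step, matching Example~\ref{example:skew_matrix} which shows the statement genuinely fails over $\FF_2$.
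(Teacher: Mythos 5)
Your overall strategy (reduce to $\symasympsubrank(f)\geq\rank(f)$, dispose of the skew-symmetric case via even powers, apply Theorem~\ref{th:ballantine} to triangularize by congruence) matches the paper up to the last step, but that last step contains a genuine error. The claim you need --- that a lower triangular $N\times N$ matrix with all $N$ diagonal entries nonzero has symmetric subrank $N$ --- is false. The matrix $\left(\begin{smallmatrix}1&0\\1&1\end{smallmatrix}\right)$ is lower triangular with two nonzero diagonal entries, but it is not symmetric, so by Lemma~\ref{lem:non-sym-not-max} its symmetric subrank is at most $1$. Both of your proposed routes founder on this. Route (a): for square invertible $A$, the congruence $AMA^T$ is symmetric if and only if $M$ is, so no amount of symmetric elimination can turn a non-symmetric triangular matrix into the (symmetric) diagonal tensor $\langle N\rangle$; and extracting $\langle N\rangle$ from an $N\times N$ matrix forces the restriction map to be invertible. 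Route (b): after reapplying Ballantine, projecting a lower triangular matrix onto the coordinates where its diagonal is nonzero yields a lower triangular principal submatrix, not a diagonal one --- the subdiagonal entries survive.

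The paper's fix is to give up on extracting all $r^n$ diagonal units from $L^{\otimes n}$ and instead select the principal submatrix indexed by tuples in $[r]^n$ of constant type $(n/r,\dots,n/r)$. For two such tuples $u\neq v$ of equal type one cannot have $u_i\geq v_i$ for all $i$, so the triangularity of $L$ forces the corresponding entry of $L^{\otimes n}$ to vanish: that submatrix is \emph{already} diagonal with nonzero entries (then normalized using quadratic closedness), and it has size $\binom{n}{n/r,\dots,n/r}\geq r^{n-o(n)}$, which suffices asymptotically. You correctly identified this step as the main obstacle, but the resolution requires sacrificing a subexponential factor rather than a direct elimination argument.
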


\begin{proof} %
	We will use Theorem~\ref{th:ballantine}.
	We may assume that $f$ is a $d \times d$ matrix.
	Let $r = \rank(f)$. Then $\asympsubrank(f) =  \rank(f) = r$. 
	Suppose that $f$ is a skew-symmetric matrix. 
	Then we have $\symsubrank(f) = 0$ by Lemma~\ref{lem:skew-sym-zero}.
	The matrix  $f^{\otimes n}$ is symmetric if $n$ is even and skew-symmetric if $n$ is odd.
	Then by Theorem~\ref{symmetric_matrix} we have
	\begin{align*}
	\symsubrank(f^{\otimes n}) = \begin{cases}
	r^{n} \text{ if } n \text{ is even},\\
	0 \text{ otherwise}.
	\end{cases}
	\end{align*}   
	Therefore $\symasympsubrank(f) = r$. 
	Suppose that $f$ is not skew-symmetric. By Theorem~\ref{th:ballantine}, there is an invertible matrix $B$ and a lower-triangular matrix $L$ such that $BfB^{T} = L$. 
	Then $\symsubrank(f) = \symsubrank(L)$ and so $\symasympsubrank(f) = \symasympsubrank(L)$. There is a principal submatrix $A$  of $L$ of size $r$ that has exactly $r$ nonzero elements on its diagonal.
	Then $A^{\otimes n}$ is a submatrix of $L^{\otimes n}$. We choose $n=rk$ for some $k \in \NN_{\geq 1}$. Then the submatrix of $A^{\otimes n}$ with rows and columns indexed by the elements in $[r]$ of type $(n/r,\dots,n/r)$ is diagonal and has size
	\begin{align*}
	\binom{n}{n/r,\dots,n/r} \geq r^{n-\smallo(n)}.
	\end{align*}
	We conclude that $\symasympsubrank(L) \geq r$. %
\end{proof}
It follows from Example~\ref{example:skew_matrix} that the statement of Theorem~\ref{thm:matrix-asymp} is false over $\FF_2$ by taking~$f$ to be an anti-diagonal matrix with ones on the antidiagonal.

\subsection{Symmetric tensors}\label{subsec:rel-asymp}

For symmetric tensors we prove that the asymptotic symmetric subrank is equal the asymptotic subrank (as long as the field satisfies mild closedness and characteristic conditions):

\begin{theorem}\label{th:asympsubrank}
	Let $f$ be a symmetric $k$-tensor over an algebraically closed field of characteristic at least $k+1$. Then $\asympsubrank(f) = \symasympsubrank(f)$.
\end{theorem}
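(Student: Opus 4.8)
The plan is to show the two inequalities $\symasympsubrank(f) \leq \asympsubrank(f)$ (which is immediate from $\symsubrank \leq \subrank$) and $\symasympsubrank(f) \geq \asympsubrank(f)$. For the nontrivial direction, fix $\eps > 0$ and choose $m$ large enough that $\subrank(f^{\otimes m}) \geq (\asympsubrank(f) - \eps)^m =: r$. So there exist linear maps $A^{(1)}, \dots, A^{(k)} : \FF^{d^m} \to \FF^r$ with $(A^{(1)} \otimes \cdots \otimes A^{(k)}) f^{\otimes m} = \langle r \rangle$. Since $f^{\otimes m}$ is again symmetric, the key idea is to symmetrize this restriction at the cost of only a subexponential loss, and then pass to the asymptotic limit.

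The main technical step is a \emph{symmetrization} of a restriction of a symmetric tensor. The natural candidate is to consider $g = f^{\otimes m}$ and observe that $\langle r \rangle \leq g$ via $(A^{(1)}, \dots, A^{(k)})$. Tensoring over all $k! $ orderings (or more cleanly: forming $g^{\otimes k!}$ and using that permuting tensor factors of $g$ leaves $g$ invariant since $g$ is symmetric) lets one build, from the $k$ different maps $A^{(i)}$, a \emph{single} map $B = A^{(1)} \otimes \cdots \otimes A^{(k)} : (\FF^{d^m})^{\otimes k} \to (\FF^r)^{\otimes k}$ that, acting symmetrically on $g^{\otimes k} \in \big((\FF^{d^m})^{\otimes k}\big)^{\otimes k}$ — viewing each of the $k$ tensor legs of the ambient space as a copy of $(\FF^{d^m})^{\otimes k}$ — produces $\langle r\rangle^{\otimes k}$, up to reordering legs. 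Concretely: $B^{\otimes k}$ applied to $g^{\otimes k}$, with the $k$ copies of $g$ distributed one-to-each-leg and permuted appropriately, gives $\langle r^k \rangle$ restricted symmetrically from $g^{\otimes k}$ (after identifying $(\FF^r)^{\otimes k} \cong \FF^{r^k}$ with a basis). Thus $\langle r^k \rangle \symleq g^{\otimes k} = f^{\otimes mk}$, hence $\symsubrank(f^{\otimes mk}) \geq r^k$, giving $\symasympsubrank(f) \geq r^{1/m} \geq \asympsubrank(f) - \eps$, and we are done. I expect the delicate bookkeeping is exactly in making the permutation-of-legs identification precise and checking that the symmetry of $f^{\otimes m}$ is genuinely what makes a \emph{single} map $B$ work on all $k$ legs simultaneously.

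It is plausible that the algebraic-closedness and characteristic $\geq k+1$ hypotheses are not needed for the crude argument above but enter through a cleaner route: namely one first replaces $f$ by a symmetric tensor $f'$ with $f' \leq f$ and $\subrank(f') = \subrank(f)$ normalized so its diagonal carries $1$'s, using that over such a field the restriction witnessing $\langle r\rangle \leq f$ can be taken to interact well with the symmetry (this is where characteristic $\geq k+1$ matters — the symmetrization operator $\frac{1}{k!}\sum_{\pi} \pi$ is invertible, cf.\ the role of characteristic in symmetric-tensor arguments, and algebraic closedness supplies the needed $k$-th roots to rescale diagonal entries to $1$ as in the proof of Theorem~\ref{symmetric_matrix}). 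So the likely structure is: (i) reduce to the case where the maximal diagonal restriction of $f^{\otimes m}$ is itself obtained by a map respecting the $\mathfrak{S}_k$-action, perhaps invoking a symmetric analogue of the Belitskii--Sergeichuk-type rigidity or a direct averaging; (ii) conclude $\symsubrank(f^{\otimes m}) \gtrsim \subrank(f^{\otimes m})^{1 - \smallo(1)}$; (iii) take $n$-th roots and let $m \to \infty$. The step I expect to be the real obstacle is (i): turning an \emph{asymmetric} optimal restriction into a \emph{symmetric} one without losing more than a subexponential factor — this is precisely the asymptotic content of Comon's conjecture for subrank, and it is why the theorem is stated asymptotically rather than for a single copy.
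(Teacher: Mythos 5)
The easy inequality and the overall strategy (symmetrize a witness of $\subrank(f^{\otimes m}) \ge r$ at bounded cost, then take limits) match the paper, but the concrete symmetrization mechanism in your first paragraph does not work, and your second paragraph only names the obstacle without resolving it. The problem is exactly the copy-versus-position bookkeeping you defer. The tensor $f^{\otimes mk}=g^{\otimes k}$, as an element of $(\FF^{D^k})^{\otimes k}$ with $D=d^m$, is grouped so that its $j$-th leg collects the $j$-th legs of the $k$ copies of $g$; a symmetric restriction must apply one and the same map $M:\FF^{D^k}\to\FF^{r^k}$ to every one of these $k$ legs. If you take $M=B=A^{(1)}\otimes\cdots\otimes A^{(k)}$ (a product across the copies), then $B^{\otimes k}$ applies $A^{(c)}$ to \emph{every} position of copy $c$, so the result is $\bigl((A^{(1)})^{\otimes k}g\bigr)\otimes\cdots\otimes\bigl((A^{(k)})^{\otimes k}g\bigr)$ rather than $\langle r\rangle^{\otimes k}$. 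To have each copy receive each $A^{(i)}$ exactly once, the slot assignment would have to vary with the leg $j$ (a Latin square), i.e.\ a different map on different legs --- precisely what symmetric restriction forbids. The alternative reading, in which each ambient leg is one whole copy of $g$, does yield $\langle r\rangle^{\otimes k}$, but under that regrouping the target is the $k$-th tensor power of a single vector and has (symmetric) subrank $1$; moreover that regrouping is not the tensor $f^{\otimes mk}$ whose symmetric subrank you need to bound. Symmetry of $g$ does not rescue this: permuting the copy-factors inside one leg only (and not the others) changes $g^{\otimes k}$.

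The paper circumvents exactly this with an auxiliary gadget: the fully symmetric tensor $h=\sum_{\pi\in\mathfrak{S}_k}e_{\pi(1)}\otimes\cdots\otimes e_{\pi(k)}$. Setting $B=\sum_i A_i\otimes e_ie_i^*$, one computes $B^{\otimes k}(f^{\otimes n}\otimes h)=k!\,\langle r\rangle\otimes h$: the square-free support of $h$ kills every cross term except those in which each $A_i$ is used exactly once, and the symmetry of $f^{\otimes n}$ makes the $k!$ surviving terms equal (this is where characteristic at least $k+1$ enters, to divide by $k!$). A second lemma shows $f^{\otimes c}\symgeq h$ for some constant $c=c(f)$ (this is where algebraic closedness enters, to clear the diagonal of the support by a change of basis), so the gadget costs only a bounded extra power of $f$ and one concludes $\symsubrank(f^{\otimes n+c})\ge\subrank(f^{\otimes n})$, which gives the theorem. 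Your step (i) is the right target and your guesses about where the field hypotheses enter are close, but some such gadget (or an averaging argument you have not supplied) is indispensable; as written the proposal has a genuine gap at its central step.
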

In particular, Theorem~\ref{th:asympsubrank} holds for any tensor over the field of complex numbers.

In fact we prove a much more general asymptotic statement about the restriction preorder~$\leq$ and the symmetric restriction preorder on symmetric tensors. We define the asymptotic restiction preorder $\asympleq$ on tensors $f,g$ by writing $f \asympleq g$ if and only if $f^{\otimes n} \leq g^{\otimes n + o(n)}$. Similarly we define the asymptotic symmetric restriction preorder $\symasympleq$ on tensors $f,g$ by writing $f \symasympleq g$ if and only if $f^{\otimes n} \symleq g^{\otimes n + o(n)}$.

\begin{theorem}\label{th:asympsymmrestr}
	For symmetric $k$-tensors $f,g$ over an algebraically closed field of characteristic at least $k+1$ we have $f \asympleq g$ if and only if $f \symasympleq g$.
\end{theorem}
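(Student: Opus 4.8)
The plan is as follows. One direction is immediate: a symmetric restriction is a restriction, so $f\symasympleq g$ (i.e.\ $f^{\otimes n}\symleq g^{\otimes n+o(n)}$) trivially gives $f^{\otimes n}\leq g^{\otimes n+o(n)}$, i.e.\ $f\asympleq g$. For the converse, suppose $f\asympleq g$, so that for arbitrarily large $n$ there are $m=n+o(n)$ and linear maps $A^{(1)},\dots,A^{(k)}$ with $f^{\otimes n}=(A^{(1)}\otimes\dots\otimes A^{(k)})g^{\otimes m}$, and I want to produce symmetric restrictions $f^{\otimes N}\symleq g^{\otimes N+o(N)}$. The key point is that the only tool available for turning a restriction ``by possibly different maps'' into one ``by a single map'' is Belitskii--Sergeichuk in the form of Corollary~\ref{cor:congr-tens}, and that corollary requires the maps to be invertible. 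So the whole proof should reduce to the following: upgrade, at a cost that is subexponential in $n$ (hence invisible to the asymptotic rate), the given restriction to a restriction between symmetric tensors \emph{by invertible maps}, and then invoke Corollary~\ref{cor:congr-tens}.

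Concretely, I would proceed in three steps. First, normalise: since $f^{\otimes n}$ is symmetric, its flattening support is the \emph{same} subspace $U$ of $\FF^{d_f^{\,n}}$ on every leg, $f^{\otimes n}\in U^{\otimes k}$, and every $A^{(i)}$ has image containing $U$; restricting each codomain to $U$ we may assume each $A^{(i)}\colon\FF^{d_g^{\,m}}\twoheadrightarrow U$ is surjective, hence with kernels $K_i$ of a common dimension. Second, kill the kernels: this is the heart of the matter, because the $K_i$ need not coincide — exact coincidence would be precisely the non-asymptotic Comon statement, which is false — so one cannot simply project $g^{\otimes m}$ symmetrically onto $\bigotimes_i(\FF^{d_g^{\,m}}/K_i)$. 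Instead I would pass to high tensor powers and extract an exponentially large ``combinatorially rigid'' subtensor on which the relevant maps become (block-)invertible, in the same spirit as the proof of Theorem~\ref{thm:matrix-asymp}, where a lower-triangular matrix (produced from Ballantine's congruence normal form via a \emph{single} invertible matrix) has, inside its $n$-th tensor power, a diagonal principal submatrix of size $\binom{n}{n/r,\dots,n/r}=r^{n-o(n)}$ indexed by the balanced sequences in $[r]^n$. For $k\geq 3$ there is no triangular normal form, but the role played there by Ballantine's theorem is played here by Corollary~\ref{cor:congr-tens} \emph{once one is in the invertible regime}; algebraic closedness supplies the square (and higher) roots needed to normalise the resulting diagonal-type tensor, while the hypothesis $\mathrm{char}\,\FF\geq k+1$ keeps polarisation and symmetrisation available. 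Third, having produced a symmetric restriction of a sufficiently large piece, I would iterate over a sequence of $n$'s with $m/n\to 1$, tensor up, and use super-multiplicativity of $\symleq$ together with the negligibility of the auxiliary factors to conclude $f^{\otimes N}\symleq g^{\otimes N+o(N)}$, i.e.\ $f\symasympleq g$.

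The main obstacle is the second step: turning the surjections $A^{(i)}$ (which have large, and pairwise distinct, kernels) into invertible maps between symmetric tensors at only subexponential cost, so that Corollary~\ref{cor:congr-tens} applies. It is exactly here that both the algebraic-closedness and the characteristic hypotheses are used essentially — as they already are in the order-two case (Theorem~\ref{thm:matrix-asymp}, Theorem~\ref{symmetric_matrix}) and in Theorem~\ref{thm:snap-to-max} — and it is where essentially all of the combinatorial work of the proof lives.
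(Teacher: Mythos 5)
There is a genuine gap: your step two --- ``kill the kernels'' so that Corollary~\ref{cor:congr-tens} becomes applicable --- is exactly the step you do not carry out, and it is not the mechanism the paper uses. You correctly observe that the maps $A^{(1)},\dots,A^{(k)}$ witnessing $f^{\otimes n}\leq g^{\otimes m}$ have large and generally distinct kernels, so Belitskii--Sergeichuk cannot be invoked directly; but your proposed fix (``extract an exponentially large combinatorially rigid subtensor on which the maps become block-invertible, in the spirit of Theorem~\ref{thm:matrix-asymp}'') is only an analogy. The order-two argument leans on Ballantine's congruence normal form, for which no analogue is offered (or known) for $k\geq 3$, and the claim that invertibility can be restored at subexponential cost is precisely the content that would need proof. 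As written, the proposal reduces the theorem to an unproved assertion at least as hard as the theorem itself.

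The paper avoids invertibility altogether via a polarisation gadget. Let $h=\sum_{\pi\in\mathfrak{S}_k}e_{\pi(1)}\otimes\cdots\otimes e_{\pi(k)}$ be the fully symmetric $k$-tensor. If $f,g$ are symmetric and $g=(A_1\otimes\cdots\otimes A_k)f$ with \emph{arbitrary} (not necessarily invertible) maps $A_i$, then the single map $B=\sum_i A_i\otimes e_ie_i^{*}$ satisfies
\[
B^{\otimes k}(f\otimes h)=k!\,\bigl((A_1\otimes\cdots\otimes A_k)f\bigr)\otimes h,
\]
because expanding $B^{\otimes k}$ against the support of $h$ forces each $A_i$ to act on exactly one leg; dividing by $k!$ (this is where $\mathrm{char}\,\FF\geq k+1$ enters) gives $f\otimes h\symgeq g\otimes h\symgeq g$ (Lemma~\ref{lem:make-sym}). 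The remaining ingredient is that $h$ costs nothing asymptotically: Lemma~\ref{lem:create-t} shows (using algebraic closedness, via Lemma~\ref{lem:remove-powers}, and a combinatorial type argument) that $f^{\otimes c}\symgeq h$ for some \emph{constant} $c$ depending only on $f$. Hence $f^{\otimes m+o(m)+c}\symgeq f^{\otimes m+o(m)}\otimes h\symgeq g^{\otimes m}$, and the constant overhead $c$ vanishes in the asymptotic preorder. So the field hypotheses are used where you guessed, but the engine of the proof is the $h$-gadget, not a normal form under congruence; Corollary~\ref{cor:congr-tens} plays no role.
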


It will also follow from our proof that on symmetric tensors (over an appropriate field) the asymptotic rank and symmetric asymptotic rank are equal:

\begin{theorem}\label{th:asymprank}
	Let $f$ be a symmetric $k$-tensor over an algebraically closed field of characteristic at least $k+1$.
	Then $\symrank(f) \leq 2^{k-1} \trank(f)$ and in particular
	$\asymprank(f) = \symasymprank(f)$.
\end{theorem}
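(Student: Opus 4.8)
The plan is to prove the inequality $\symrank(f) \le 2^{k-1}\trank(f)$ for symmetric $k$-tensors over an algebraically closed field of characteristic $\ge k+1$, and then to derive the asymptotic equality $\asymprank(f)=\symasymprank(f)$ by applying this bound to tensor powers. The starting point is the decomposition $f = \sum_{i=1}^{r} v_i^{(1)}\otimes\cdots\otimes v_i^{(k)}$ witnessing $r=\trank(f)$. The key algebraic tool is a symmetric polarization identity: for vectors $x_1,\dots,x_k$ over a field of characteristic greater than $k$, one can express the ``product'' tensor $\sum_{\sigma\in\mathfrak{S}_k} x_{\sigma(1)}\otimes\cdots\otimes x_{\sigma(k)}$ as a signed sum of pure powers $(\pm x_1 \pm x_2 \pm \cdots \pm x_k)^{\otimes k}$ over the $2^k$ sign choices, divided by $k!$ (the classical identity behind the equivalence of symmetric tensors and homogeneous polynomials). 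This is where the characteristic hypothesis $\ge k+1$ is used: we must divide by $k!$.

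First I would observe that since $f$ is symmetric, I can symmetrize its rank-$r$ decomposition: applying the projection onto the symmetric subspace (which is available in characteristic $\ge k+1$) to $f=\sum_{i=1}^r v_i^{(1)}\otimes\cdots\otimes v_i^{(k)}$ gives $f = \frac{1}{k!}\sum_{i=1}^r \sum_{\sigma\in\mathfrak{S}_k} v_i^{(\sigma(1))}\otimes\cdots\otimes v_i^{(\sigma(k))}$. Next, for each fixed $i$, I apply the polarization identity to the inner symmetrized sum over $\sigma$, rewriting $\sum_{\sigma} v_i^{(\sigma(1))}\otimes\cdots\otimes v_i^{(\sigma(k))}$ as a signed combination of the $2^k$ tensors $(\epsilon_1 v_i^{(1)}+\cdots+\epsilon_k v_i^{(k)})^{\otimes k}$ with $\epsilon_j\in\{\pm1\}$. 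Because powers of the form $w^{\otimes k}$ and $(-w)^{\otimes k}=(-1)^k w^{\otimes k}$ are parallel, the $2^k$ sign choices collapse to $2^{k-1}$ essentially distinct rank-one symmetric tensors per index $i$ (fixing $\epsilon_1=1$, say, and absorbing signs using that $-1$ is a $k$-th root of unity up to a scalar — here I would use that the field is algebraically closed so that the scalar can be pushed into $w$). Collecting over all $i$, this exhibits $f$ as a sum of at most $2^{k-1}r$ tensors of the form $w^{\otimes k}$, i.e.\ $\symrank(f) \le 2^{k-1}\trank(f)$.

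For the asymptotic statement, I would apply the just-proved bound to $f^{\otimes n}$, which is again a symmetric $k$-tensor (the symmetric group action on the $k$ legs of $f^{\otimes n}$ permutes the legs blockwise, and $f$ symmetric implies $f^{\otimes n}$ symmetric). This gives $\symrank(f^{\otimes n}) \le 2^{k-1}\trank(f^{\otimes n})$. Taking $n$-th roots and letting $n\to\infty$, the constant $2^{k-1}$ disappears: $\symasymprank(f) = \lim_n \symrank(f^{\otimes n})^{1/n} \le \lim_n (2^{k-1})^{1/n}\trank(f^{\otimes n})^{1/n} = \asymprank(f)$. The reverse inequality $\asymprank(f)\le\symasymprank(f)$ is immediate from $\trank \le \symrank$ (a symmetric decomposition is in particular a decomposition). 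I expect the main obstacle to be the bookkeeping in the second step: making the collapse from $2^k$ to $2^{k-1}$ rank-one terms fully rigorous, correctly tracking the signs and scalar factors from the polarization identity, and confirming that algebraic closure suffices to absorb all the resulting constants into the vectors $w$ so that genuine $k$-th tensor powers (not just scalar multiples thereof) appear. A secondary point to be careful about is verifying that $f^{\otimes n}$ is symmetric in the precise sense required (symmetry under permuting the $k$ tensor legs, each of which is now an $n$-fold tensor product), which is true but deserves an explicit sentence.
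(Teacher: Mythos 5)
Your proof is correct, and its mathematical core is the same as the paper's: the $2^{k-1}$-term signed Waring (polarization) identity for the symmetrized product $\sum_{\sigma}x_{\sigma(1)}\otimes\cdots\otimes x_{\sigma(k)}$, which is exactly the identity the paper uses to bound $\symrank(h)\leq 2^{k-1}$ for the fully symmetric tensor $h$. The difference is one of packaging. You unfold everything by hand: take a rank-$r$ decomposition of $f$, project onto the symmetric subspace (using that $k!$ is invertible in characteristic $\geq k+1$), apply polarization termwise, and absorb scalars into the vectors via algebraic closure. The paper instead routes the same computation through its restriction machinery: from $\trank(f)\leq r$ it gets $f\leq\langle r\rangle$, then uses Lemma~\ref{lem:make-sym} to conclude $\langle r\rangle\otimes h\symgeq f$, and finally $\langle r\rangle\otimes\langle s\rangle\symgeq\langle r\rangle\otimes h$ with $s=\symrank(h)\leq 2^{k-1}$. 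Your version is more self-contained and elementary for this particular theorem; the paper's version reuses Lemma~\ref{lem:make-sym}, which also drives Theorems~\ref{th:asympsubrank} and~\ref{th:asympsymmrestr}, so the three proofs share infrastructure. One small normalization remark: the identity for the symmetrized tensor sum carries a prefactor $1/2^{k-1}$ (with $\eps_1$ fixed to $1$), not $1/k!$ as you state in passing — the $k!$ enters only through your symmetric-projection step — but since you explicitly absorb all scalars into the rank-one vectors at the end, this does not affect the count of $2^{k-1}r$ pure powers or the conclusion. The asymptotic step (taking $n$-th roots so the constant $2^{k-1}$ disappears, plus the trivial inequality $\trank\leq\symrank$) matches the paper exactly.
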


For order $k = 3$ the same relation $\symrank(f) \leq 2^{k-1} \trank(f)$ (and thus $\asymprank(f) = \symasymprank(f)$) for symmetric tensors $f$ was found in~\cite{DBLP:conf/stoc/Kayal12}. %

The above three theorems are related to Comon's conjecture \cite{CGLM08}, which says that rank and symmetric rank coincide on symmetric tensors. Shitov~\cite{Shi18} gave a counterexample to Comon's conjecture. Our Theorem~\ref{th:asympsubrank}, Theorem~\ref{th:asympsymmrestr} and Theorem~\ref{th:asymprank} can be interpreted as saying that ``Comon's conjecture'' is true \emph{asymptotically}, not only for rank (Theorem~\ref{th:asymprank}), but also for subrank (Theorem~\ref{th:asympsubrank}) and the restriction preorder (Theorem~\ref{th:asympsymmrestr}). 

The proofs for all of the above will follow from three basic lemmas that we will discuss now.
 A crucial role will be played by the following $k$-tensor.
\begin{definition}[fully symmetric $k$-tensor]
	For any $k \in \NN$ let $\mathfrak{S}_k$ be the symmetric group on~$k$ elements and define the $k$-tensor~$h = \sum_{\pi \in S_k} e_{\pi(1)} \otimes \cdots \otimes e_{\pi(k)}$. We will call $h$ the \emph{fully symmetric $k$-tensor}. 
\end{definition}
For example, for $k = 3$, the tensor $h$ is given by $h = e_1 \otimes e_2 \otimes e_3 + e_1 \otimes e_3 \otimes e_2 + e_2 \otimes e_1 \otimes e_3 + e_2 \otimes e_3 \otimes e_1 + e_3 \otimes e_1 \otimes e_2 + e_3 \otimes e_2 \otimes e_1$. The tensor $h$ allows us to transform any restriction to a symmetric restriction:

\begin{lemma}\label{lem:make-sym}
	Let $f$ and $g$ be symmetric $k$-tensors over a field of characteristic at least $k+1$. If $f \geq g$, then $f \otimes h \symgeq g \otimes h$, and hence also $f \otimes h \symgeq g$, where $h$ is the fully symmetric tensor.
\end{lemma}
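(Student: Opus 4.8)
Since $f \geq g$ means there exist linear maps $A^{(1)}, \ldots, A^{(k)}$ with $g = (A^{(1)} \otimes \cdots \otimes A^{(k)}) f$, the goal is to convert this into a single map applied symmetrically to $f \otimes h$. The key idea is that tensoring with $h = \sum_{\pi \in \mathfrak{S}_k} e_{\pi(1)} \otimes \cdots \otimes e_{\pi(k)}$ provides ``room'' to symmetrize: $h$ lives in $(\FF^k)^{\otimes k}$, and its structure lets a symmetric map on the combined space $\FF^d \otimes \FF^k$ pick out the $i$-th tensor factor's action $A^{(i)}$ in the $i$-th leg.

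First I would set up the candidate symmetric map $B : \FF^d \otimes \FF^k \to \FF^e$ (or to whatever target makes $g \otimes h$ or $g$ appear). The natural guess is $B = \sum_{i=1}^k A^{(i)} \otimes e_i^*$, i.e. on a simple tensor $v \otimes e_i$ we set $B(v \otimes e_i) = A^{(i)} v$, using the dual basis vector $e_i^*$ to read off the index. Then I would compute $B^{\otimes k}(f \otimes h)$: expanding $h = \sum_{\pi} e_{\pi(1)} \otimes \cdots \otimes e_{\pi(k)}$, the $\pi$-term contributes $(A^{(\pi(1))} \otimes \cdots \otimes A^{(\pi(k))}) f$. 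Here is where symmetry of $f$ enters crucially: because $f$ is a symmetric tensor, $(A^{(\pi(1))} \otimes \cdots \otimes A^{(\pi(k))}) f = (A^{(1)} \otimes \cdots \otimes A^{(k)}) f$ after relabelling the legs — wait, more carefully, applying $A^{(\pi(j))}$ to leg $j$ of the symmetric $f$ gives the same result as applying $A^{(1)}, \ldots, A^{(k)}$ in order, precisely because permuting the legs of $f$ does nothing. So every one of the $k!$ terms equals $g$, and $B^{\otimes k}(f \otimes h) = k! \cdot g$. This is where the characteristic hypothesis $\geq k+1$ is used: $k!$ is invertible, so rescaling $B$ by $(k!)^{-1/k}$ — or absorbing the constant into one of the maps — yields exactly $g$, giving $f \otimes h \symgeq g$. (If one instead wants $f \otimes h \symgeq g \otimes h$, one uses the block-diagonal map $B \oplus \mathrm{id}_{\FF^k}$ on $\FF^d \otimes \FF^k \cong \FF^{dk}$, or argues $g \otimes h \symleq f \otimes h$ directly by noting $f \otimes h \geq g \otimes h$ and that $g \otimes h$ is symmetric, then reducing to the previous case — but the cleanest route is to get $g$ first and note $h \symleq f \otimes h$ trivially isn't quite what's needed; I would just prove $f \otimes h \symgeq g \otimes h$ by the same explicit-map computation with target $\FF^e \otimes \FF^k$ and $B$ replaced by $B' = \sum_i A^{(i)} \otimes (e_i \otimes e_i^*)$, then deduce $f \otimes h \symgeq g$ since $g \otimes h \symgeq g$ via projecting $h$ onto one coordinate... actually $g \otimes h \geq g$ needs a contraction, which is fine as an ordinary restriction but we need a \emph{symmetric} one — so safest is to prove $f\otimes h \symgeq g$ directly and then $f \otimes h \symgeq g \otimes h$ by applying the result to $g \otimes h \leq f \otimes h$ and using that $g\otimes h$ is symmetric and $h$ absorbs: $f \otimes h \otimes h \symgeq g \otimes h$, then note $h \symleq$ handles the extra copy. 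I would streamline this in the write-up.)

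**Main obstacle.** The conceptual heart is the explicit construction of $B$ and verifying the symmetry of $f$ makes all $k!$ permuted terms collapse to a single copy of $g$; the bookkeeping of which dual vectors land on which legs is the fiddly part, and one must be careful that $B^{\otimes k}$ applied to $f \otimes h$ correctly pairs the $j$-th copy of $\FF^k$ (from $h$) with the $j$-th copy of $\FF^d$ (from $f$) after the appropriate reordering of tensor factors in $(\FF^d \otimes \FF^k)^{\otimes k} \cong (\FF^d)^{\otimes k} \otimes (\FF^k)^{\otimes k}$. The only genuinely nontrivial input beyond this is the characteristic hypothesis ensuring $k! \neq 0$ so the overall scalar can be normalized; everything else is a direct, if slightly index-heavy, computation.
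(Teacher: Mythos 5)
Your proposal is correct and takes essentially the same route as the paper: the paper's proof defines exactly the map $B = \sum_i A_i \otimes e_i e_i^*$, applies $B^{\otimes k}$ to $f\otimes h$ term-by-term over the $k!$ summands of $h$, uses the symmetry of $f$ (and, implicitly, of $g$) to collapse each permuted term $(A_{\pi(1)}\otimes\cdots\otimes A_{\pi(k)})f$ to $g$, and removes the resulting constant using the characteristic hypothesis. The only points to tidy are your back-and-forth on deriving $f\otimes h \symgeq g$ from $f\otimes h \symgeq g\otimes h$ (the direct computation with $B' = \sum_i A_i\otimes e_ie_i^*$ is the clean route, as you ultimately settle on), and the normalization: rather than taking a $k$-th root of $k!$ (which need not exist in the field), absorb the scalar by replacing one $A_i$ with $(k!)^{-1}A_i$ before forming $B$.
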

\begin{proof}
	Let $A_1, \ldots, A_k$ be linear maps such that $(A_1 \otimes  \cdots \otimes A_k) f = g$.
	Let $e_i^*$ denote the elements of the basis dual to the standard basis $e_i$.
	Define the linear map $B = \sum_i A_i \otimes e_i e_i^*$. Then
	\[
		(B^{\otimes k}) (f \otimes h) = k! ((A_1 \otimes \cdots \otimes A_k) f) \otimes h.
	\]
	Dividing by $k!$ proves the claim.
\end{proof}

In particular, Lemma~\ref{lem:make-sym} says that, if $f^{\otimes n} \geq \langle r\rangle$, then $f^{\otimes n} \otimes h \symgeq \langle r\rangle$ for every $n \in \NN$. Note that $h$ is a fixed tensor that is independent of $n$. Our next goal is to prove that for every~$f$ there is a constant $c \in \NN$ depending on~$f$ such that $f^{\otimes c} \symgeq h$. %
This is true in the following sense.

Recall that for any subset $S \subseteq [k]$ that is not empty and not $[k]$, any $k$-tensor $f \in V_1 \otimes \cdots \otimes V_k$ can be flattened into a 2-tensor $(\bigotimes_{i \in S} V_i) \otimes (\bigotimes_{i \in [k]\setminus S} V_i)$. For a $k$-tensor $f$ we call the ranks of these flattenings the flattening ranks of $f$. 
\begin{lemma}\label{lem:create-t}
	Let $f$ be a symmetric $k$-tensor over an algebraically closed field. Suppose that some flattening rank of $f$ is at least 2. Then there is a $c \in \NN$ such that $f^{\otimes c} \symgeq h$.
\end{lemma}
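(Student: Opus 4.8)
The goal is to show that from a symmetric $k$-tensor $f$ with some flattening rank $\geq 2$ one can reach the fully symmetric tensor $h$ by a \emph{symmetric} restriction of some tensor power $f^{\otimes c}$. The guiding idea is: a symmetric restriction is cheap once we only need to produce $h$ (which lives in $(\FF^k)^{\otimes k}$ and has all flattening ranks equal to $k$), so it suffices to first extract, by an \emph{ordinary} restriction, a small ``generic enough'' tensor from a bounded power of $f$, and then upgrade to a symmetric restriction using Lemma~\ref{lem:make-sym} together with a direct-sum/padding trick. Concretely, the plan is to carry out the following steps.

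\textbf{Step 1: reduce to producing $h$ from an ordinary restriction of a power of $f$.} By Lemma~\ref{lem:make-sym}, if $f^{\otimes m} \geq h$ (ordinary restriction) then $f^{\otimes m} \otimes h \symgeq h$, and since $h \leq f^{\otimes m'}$ would in turn let us absorb the extra factor, it is enough to find $c$ with $h \leq f^{\otimes c}$ as ordinary restriction and then handle the symmetrization. Actually the cleaner route: first show $h \symleq f^{\otimes c_0} \otimes h$ is automatic, and then show $h \leq f^{\otimes c_1}$ for some $c_1$; combining, $h \symleq f^{\otimes (c_0+c_1)}$. So the crux reduces to the purely non-symmetric statement: \emph{if $f$ is a symmetric $k$-tensor with some flattening rank $\geq 2$, then $h \leq f^{\otimes c}$ for some $c$.}

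\textbf{Step 2: get the matrix $\langle 2\rangle$ out of a flattening, then bootstrap.} Pick a flattening of $f$ of rank $\geq 2$; as a matrix it restricts to $\langle 2 \rangle$ (the $2\times2$ identity), i.e.\ $\langle 2 \rangle \leq f$ as a $2$-tensor under the grouping given by that flattening. The subtlety is that this is a restriction for a \emph{grouped} bipartition, not for the original $k$ legs; here I would use that $f$ is symmetric together with the algebraically-closed hypothesis to argue that one can actually find vectors realizing a $2\times\cdots\times2$-ish ``corner'' of $f$. More robustly: over an algebraically closed field, a symmetric tensor whose flattening rank is $\geq 2$ is not of the form $v^{\otimes k}$, so there exist two linearly independent vectors $u, w$ with $u^{\otimes k}, w^{\otimes k}$ (and mixed terms) ``visible'' in $f$; by choosing coordinates one extracts a symmetric $3$-dimensional (or even $2$-dimensional) sub-object that contains $h$-like cross terms. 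The key quantitative input is that $h$ itself, being the generic symmetric tensor in $(\FF^k)^{\otimes k}$, satisfies $h \leq g^{\otimes c}$ for \emph{any} symmetric $g$ that is not a power of a single vector — this is where the algebraic closedness and the non-degenerate flattening are both essential.

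\textbf{Step 3: assemble.} Combine Step 2 ($h \leq f^{\otimes c}$) with Lemma~\ref{lem:make-sym} applied to $g = h \leq f^{\otimes c} = f'$: since $f'$ is symmetric and $f' \geq h$, we get $f' \otimes h \symgeq h$, i.e.\ $f^{\otimes c} \otimes h \symgeq h$; and since $h \leq f^{\otimes c}$ again (another independent copy) we can also write $h \leq f^{\otimes c}$ as a symmetric restriction after one more application of the lemma, yielding $f^{\otimes 2c} \symgeq h$ and hence a valid constant. I expect \textbf{the main obstacle to be Step 2}: turning the mere existence of a rank-$\geq 2$ flattening into an actual restriction producing the fully symmetric $h$, because a flattening restriction uses a grouped bipartition while we need an honest $k$-leg restriction, and because $h$ needs all $k$ flattening ranks large simultaneously — so one must take a large enough power $c$ and invoke that tensor powers of a ``non-pure'' symmetric tensor eventually dominate $h$ over an algebraically closed field; making that genericity argument precise (rather than hand-wavy) is the real work.
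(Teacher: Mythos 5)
There is a genuine gap, and it sits exactly where the lemma has to do its work: the symmetrization step. Your Steps~1 and~3 are circular. Lemma~\ref{lem:make-sym} converts an ordinary restriction $f^{\otimes c} \geq h$ into the symmetric restriction $f^{\otimes c} \otimes h \symgeq h$, but the factor $h$ on the left-hand side never goes away: to replace it by a power of $f$ you would need $f^{\otimes c_0} \symgeq h$ for some $c_0$, which is precisely the statement being proved. Your proposed combination ``$h \symleq f^{\otimes c_0}\otimes h$ and $h \leq f^{\otimes c_1}$, hence $h \symleq f^{\otimes (c_0+c_1)}$'' is invalid: substituting the ordinary restriction $h \leq f^{\otimes c_1}$ into one tensor factor of a symmetric restriction applies different linear maps to different legs and destroys the symmetry of the composite map. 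The same objection applies verbatim to the ``another independent copy'' move in Step~3, which still leaves an $h$ on the left after ``one more application of the lemma.'' Lemma~\ref{lem:create-t} is exactly the seed that breaks this circularity for Theorems~\ref{th:asympsymmrestr} and~\ref{th:asympsubrank}; it cannot itself be obtained from Lemma~\ref{lem:make-sym}.

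Step~2 is also not a proof: the claim that $h \leq g^{\otimes c}$ for every symmetric $g$ that is not a power of a single vector is asserted as a genericity principle but never established, and you yourself flag it as the real work. The paper's argument is entirely different and never passes through an ordinary restriction. It first applies Lemma~\ref{lem:remove-powers} (a change of basis, using algebraic closedness to solve $f_i + \eps_i^k f_d = 0$) so that no diagonal point $(i,\dots,i)$ with $i \leq d-1$ lies in the support of $f$. It then selects a type $y$ of a support element with $1 \leq y_1 \leq k-1$ and $y_1$ maximal among all types, forms the set $R$ of all $k$-tuples of type $y$, and shows that zeroing out, in $f^{\otimes |R|}$, all basis vectors other than the $k$ columns of the $|R|\times k$ array built from $R$ leaves a tensor supported only on tuples of pairwise distinct columns; the maximality of $y_1$ forces any tuple with a repeated column to have a coordinate in which the symbol $1$ occurs $y_1+1$ times, hence outside the support. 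This is a single coordinate projection applied identically to every leg, i.e.\ an honest symmetric restriction to $h$, with $c=|R|$ explicit. To repair your write-up you should abandon the reduction to ordinary restriction and instead construct one such linear map directly.
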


To prepare for the proof of Lemma~\ref{lem:create-t} we prove the following lemma.
 
\begin{lemma}\label{lem:remove-powers}
	Let $f$ be a symmetric $k$-tensor over an algebraically closed field. There exists a basis transformation $A \in \FF^{d \times d}$ such that  the support $S = \supp(A^{\otimes k} f) \subseteq [d]^k$ of $f$ after applying the transformation $A$ satisfies $(i, \ldots, i) \not\in S$ for every $1 \leq i \leq d-1$.
\end{lemma}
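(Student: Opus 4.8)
The goal is: given a symmetric $k$-tensor $f$ over an algebraically closed field, find an invertible $A$ so that after applying $A^{\otimes k}$ the support $S$ of $A^{\otimes k}f$ contains none of the ``near-diagonal'' points $(i,\dots,i)$ for $1 \leq i \leq d-1$. (Note that the point $(d,\dots,d)$ is allowed to stay in the support; we only need to clear the first $d-1$ diagonal points.) The natural way to phrase this: associate to $f$ the diagonal polynomial $p(x) = \sum_{i=1}^d (A^{\otimes k}f)_{i,\dots,i}$ viewed as a function of the transformation — more precisely, if $A$ has rows $a_1,\dots,a_d \in \FF^d$, then $(A^{\otimes k}f)_{j,\dots,j} = f(a_j,\dots,a_j)$ where $f(\cdot,\dots,\cdot)$ denotes the $k$-linear form. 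So the task reduces to: choose invertible $A$ such that $f(a_i,\dots,a_i) = 0$ for $i = 1,\dots,d-1$, where $a_i$ are the rows of $A$.

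**Key steps.** First I would reduce to the associated homogeneous degree-$k$ polynomial $P(x) = f(x,\dots,x)$ on $\FF^d$; since $f$ is symmetric and the characteristic is large (or, for this lemma, just using algebraic closedness), this polynomial carries the diagonal information faithfully. If $P \equiv 0$ then every diagonal entry is already zero for any $A$ and we are done trivially, so assume $P \not\equiv 0$. The core claim is then: over an algebraically closed field, the hypersurface $\{P = 0\} \subseteq \FF^d$ contains a set of $d-1$ linearly independent vectors which can moreover be extended to a basis. Equivalently, $\{P = 0\}$ is not contained in the union of finitely many hyperplanes in a way that would obstruct picking $d-1$ independent points — but actually a degree-$k$ hypersurface over an algebraically closed field in $d \geq 2$ variables always contains many points, and in particular, I claim it contains $d-1$ linearly independent ones. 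I would prove this by induction on $d$: pick any nonzero $v_1$ with $P(v_1) = 0$ (exists since $\{P=0\}$ is nonempty as a projective hypersurface over an algebraically closed field, for $d \geq 2$); then restrict attention to a complementary hyperplane and argue the restriction of $P$ there is either identically zero (in which case we are done, take any basis of the hyperplane) or again a nonzero form to which we apply the inductive hypothesis. After collecting $v_1,\dots,v_{d-1}$ linearly independent with $P(v_i)=0$, extend to a basis $v_1,\dots,v_d$ and let $A$ be the matrix with these as rows; then $A$ is invertible and $(A^{\otimes k}f)_{i,\dots,i} = P(v_i) = 0$ for $i \leq d-1$, giving exactly the desired support condition on $S = \supp(A^{\otimes k}f)$.

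**Main obstacle.** The delicate point is the existence of $d-1$ \emph{linearly independent} zeros of $P$, and making the induction go through cleanly. When $d = 2$ we only need one nonzero zero of a binary form $P(x_1,x_2)$, which exists by algebraic closedness (factor into linear factors). For $d \geq 3$, after picking $v_1$, the subtlety is that restricting $P$ to a hyperplane $W$ not containing $v_1$ gives a degree-$k$ form on a $(d-1)$-dimensional space, and I must ensure the $d-2$ independent zeros I find in $W$ together with $v_1$ are independent in $\FF^d$ — but this is automatic since $v_1 \notin W$. The genuinely careful case is when the restriction to $W$ is identically zero: then I simply take any basis of $W$ as $v_2,\dots,v_{d-1}$ (even $v_2,\dots,v_d$), all of which are zeros of $P$, and together with $v_1$ these are independent. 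So the induction is robust; the only thing to state carefully is the base case and the nonemptiness of projective hypersurfaces over algebraically closed fields. One should also remark that this lemma does not need the characteristic hypothesis — algebraic closedness alone suffices — which is consistent with how it is stated in the excerpt.
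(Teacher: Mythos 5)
Your proof is correct, but it takes a genuinely different route from the paper's. The paper argues in a single explicit step: if some diagonal entry is nonzero, say $f_{d,\dots,d}\neq 0$, apply the shear $e_i\mapsto e_i$ for $i<d$ and $e_d\mapsto e_d+\sum_{j<d}\eps_j e_j$; the new coefficient of $e_i^{\otimes k}$ is then a polynomial in $\eps_i$ alone of degree exactly $k$ with leading coefficient $f_{d,\dots,d}$, so algebraic closedness lets one choose each $\eps_i$ independently to make it vanish. You instead pass to the associated degree-$k$ form $P(x)=\sum_s f_s\,x_{s_1}\cdots x_{s_k}$, observe that the diagonal of $A^{\otimes k}f$ is $P$ evaluated on the rows of $A$, and produce $d-1$ linearly independent zeros of $P$ by induction on dimension, slicing by a hyperplane that misses the first zero. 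Both arguments hinge only on algebraic closedness and neither needs the characteristic hypothesis; yours has the extra feature of not using the symmetry of $f$ at all, and it sidesteps a bookkeeping point that the paper glosses over (off-diagonal terms of $f$ involving $e_d$ also contribute to the new diagonal coefficients, so the relevant polynomial is $f_i+\cdots+\eps_i^k f_d$ rather than literally $f_i+\eps_i^k f_d$ — harmless, since only the degree and the leading coefficient matter). The price of your route is that the induction rests on the fact that a nonzero form of degree $k$ in at least two variables over an algebraically closed field has a nontrivial zero; this is standard but should be stated or cited, whereas the paper only needs roots of univariate polynomials.
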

\begin{proof}
	Suppose that $f \in (\FF^d)^{\otimes k}$.
	If no element of the form $(i, \ldots, i)$ appears in $S$, then we are done. 
	Otherwise, we may assume that $(d, \ldots, d)$ appears, so that the tensor $f$ is of the form $f = f_1 e_1^{\otimes k} + f_2 e_2^{\otimes k} + \cdots + f_d e_d^{\otimes k} + {}\!$ other terms, for some coefficients $f_i$ with $f_d \neq 0$.

	We apply to $f$ the invertible linear map that maps $e_i$ to $e_i$ for $1\leq i \leq d-1$ and maps~$e_d$ to $e_d + \eps_1 e_1 + \cdots + \eps_{d-1} e_{d-1}$ for some $\eps_i \in \FF$. 
	This gives a tensor $g \in (\FF^d)^{\otimes k}$ that is isomorphic to~$f$ and of the form $g = (f_1 + \eps_1^k f_d) e_1^{\otimes k} + \cdots + (f_{d-1} + \eps_{d-1}^k f_d)e_{d-1}^{\otimes k} + {}\!$ other terms. 
	Since~$f_d$ is nonzero and the ground field is algebraically closed, there are values for the $\eps_i$ such that $f_i + \eps_i^k f_i$ is zero for every $1\leq i \leq d-1$, in which case $(i, \ldots, i)$ does not appear in the support of $g$ for every $1 \leq i \leq d-1$.
\end{proof}

\begin{proof}[Proof of Lemma~\ref{lem:create-t}]
	Let $f \in (\FF^d)^{\otimes k}$.
    By Lemma~\ref{lem:remove-powers} we may assume that $(i,\ldots, i)$ does not appear in the support~$S = \supp(f) \subseteq [d]^k$ of $f$ for $1 \leq i \leq d-1$. 
	For every element $s \in S$ we define its type $(y_1, \ldots, y_d)$ by letting~$y_i$ be the number of times that $i$ appears in $s$.
	Let $Y$ be the set of types of elements of $S$.
	Since some flattening rank of $f$ is at least 2, we cannot have that $S = \{(d, \ldots, d)\}$. 
	Thus without loss of generality there is a type $y \in Y$ %
	that satisfies $1 \leq y_1 \leq k-1$ and such that for every other type $y' \in Y$ it holds that $y_1' \leq y_1$ (maximality assumption).
    Let $R \subseteq [d]^k$ be the set of all $k$-tuples in $[d]^k$ of type $y$. %
    Let $A$ be the $|R| \times k$ matrix with rows given by the elements of $R$, in some arbitrary order. 
    Let $C$ be the set of columns of~$A$. 
	Note that in any $s \in S$ the element $1$ can appear at most $y_1$ times by our maximality assumption.

    We claim that $f^{\otimes |R|}$ restricts symmetrically to the fully symmetric $k$-tensor $h$ by zeroing out all basis elements that are not in $C$.
    To prove this we need to show that for any choice of $k$ elements $v_1, \ldots, v_k$ in~$C$, if for every $i$ we have that $((v_1)_i, \ldots, (v_k)_i) \in S$, then $v_1, \ldots, v_k$ are all different.

    By construction of $C$, for any $y_1$ distinct elements $v_1, \ldots,v_{y_1}$ of $C$ there is an $1 \leq i \leq |R|$ such that $(v_1)_i = \cdots = (v_{y_1})_i = 1$. Thus also for any $y_1$ (not necessarily distinct) elements $v_1, \ldots, v_{y_1}$ of $C$ there is an $1 \leq i \leq |R|$ such that $(v_1)_i = \cdots = (v_{y_1})_i = 1$.

    Let $v_1, \ldots, v_k$ be an arbitrary collection of elements of $C$. Suppose that $v_1 = v_2$. By the previous argument we know that there is an $1 \leq i \leq |R|$ such that $(v_2)_i = \cdots = (v_{y_1+1})_i = 1$. From the assumption $v_1 = v_2$ it follows that $(v_1)_i = (v_2)_i = \cdots = (v_{y_1+1})_i = 1$. However, we picked the type $(y_1, \ldots, y_d)$ such that $y_1$ is maximal and $y_1 \leq k-1$. The element $1$ appears at least~$y_1 + 1$ times in $((v_1)_i, \ldots, (v_k)_i)$. Therefore $((v_1)_i, \ldots, (v_k)_i)$ is not in $S$.
\end{proof}

\begin{proof}[Proof of Theorem~\ref{th:asympsymmrestr}]
	Suppose that $f \asympgeq g$. This means that $f^{\otimes m + o(m)} \geq g^{\otimes m}$. We know from Lemma~\ref{lem:create-t} that there is a constant~$c \in \NN$, depending only on $f$, such that $f^{\otimes c} \symgeq h$. By Lemma~\ref{lem:make-sym} we then have
	\[
		f^{\otimes m + o(m)} \otimes f^{\otimes c} \symgeq f^{\otimes m + o(m)} \otimes h \symgeq g^{\otimes m}.
	\]
	This means $f \symasympgeq g$, which proves the claim.
\end{proof}

Although essentially Theorem~\ref{th:asympsubrank} and Theorem~\ref{th:asymprank} can be proven abstractly from  Theorem~\ref{th:asympsymmrestr}, we will give the (simple) proofs separately in terms of the above lemmas for the convenience of the reader and to get the precise statement of Theorem~\ref{th:asymprank}:

\begin{proof}[Proof of Theorem~\ref{th:asympsubrank}]
Suppose that $\subrank(f^{\otimes n}) \geq r$. Then $f^{\otimes n} \geq \langle r\rangle$. By Lemma~\ref{lem:create-t} there is a constant~$c \in \NN$, depending only on $f$, such that $f^{\otimes c} \symgeq h$. By Lemma~\ref{lem:make-sym} we then have
\[
	f^{\otimes n + c} \symgeq f^{\otimes n} \otimes f^{\otimes c} \symgeq f^{\otimes n} \otimes h \symgeq \langle r\rangle.
\]
Thus $\symsubrank(f^{\otimes n + c}) \geq r$, which implies the claim.
\end{proof}

\begin{proof}[Proof of Theorem~\ref{th:asymprank}]
Suppose that $\trank(f) \leq r$. Then $f \leq \langle r \rangle$. 
Let $s = \symrank(h)$ be the symmetric rank of the fully symmetric tensor $h$ and note that $s$ is a constant depending only on $k$, the order of $f$. In fact, $s \leq 2^{k-1}$, which follows from the known identity
\[
	h = \frac{1}{2^{k-1}}\sum_{\varepsilon_i = \pm1} \Bigl(\prod_{i=2}^k \varepsilon_i\Bigr) (e_1 + \varepsilon_2 e_2 + \varepsilon_3 e_3 + \cdots + \varepsilon_k e_k)^{\otimes k}
\]
in which the sum goes over $\eps_2, \ldots, \eps_k = \pm 1$.
We refer to \cite[Lemma B.2.3]{goodman2009symmetry} for a proof of this identity. See also \cite[Proposition~11.6]{DBLP:journals/focm/LandsbergT10}.
Then
\[
\langle r s \rangle = \langle r\rangle \otimes \langle s \rangle \symgeq \langle r \rangle \otimes h \symgeq f.
\]
Thus $\symrank(f) \leq r s$, which implies the first claim. Then, since $s$ is constant, it follows that $\trank(f^{\otimes n}) \leq \symrank(f^{\otimes n}) \leq \trank(f^{\otimes n}) s$ for every $n \in \NN$, which implies the second claim.
\end{proof}

\section{Asymptotic spectrum of symmetric tensors}\label{sec:asympspec}

In Section~\ref{sec:symmetric_subrank} we introduced the symmetric subrank and in Section~\ref{sec:asymp-symm-subrank} we introduced the asymptotic symmetric subrank, both motivated by the problem of upper bounding the independence number of hypergraphs (with the asymptotic symmetric subrank in particular being relevant for capacity-type questions, where the hypergraphs at hand have a power structure). We proved several equalities and separations for these parameters.

In this section we continue our analysis of the asymptotic symmetric subrank in a general fashion that also allows us to discuss the asymptotic symmetric rank and the asymptotic symmetric restriction preorder (which we will define).

At the core of this section is the duality theory of Strassen introduced and studied in \cite{Str86,Str88,Str88,Str91,tobler,burg} (see also \cite{CVZ18} and~\cite{Zui18_thesis}) that gives a \emph{dual formulation} for the (non-symmetric) asymptotic subrank, asymptotic rank and asymptotic restriction preorder in terms of the \emph{asymptotic spectrum of tensors}.
The asymptotic subrank of $f \in \FF^{n_1} \otimes \cdots \otimes \FF^{n_k}$ is defined as $\asympsubrank(f) = \lim_{n\to\infty} \subrank(f^{\otimes n})^{1/n}$, the asymptotic rank is defined as $\asymprank(f) = \lim_{n\to\infty} \rank(f^{\otimes n})^{1/n}$ and the asymptotic restriction preorder is defined by $f \asympleq g$ if and only if $f^{\otimes n} \leq g^{\otimes (n + o(n))}$.
As an application of the results of Section~\ref{subsec:rel-asymp} we prove a strong connection between this theory and the natural symmetric variation.

The asymptotic spectrum of tensors (for any fixed $k \in \NN$ and field $\FF$) is defined as the set $X$ of all real-valued maps from $k$-tensors over $\FF$ to the nonnegative reals that are additive under the direct sum, multiplicative under the tensor product, monotone under the restriction preorder and normalized to $1$ on the diagonal tensor $\langle 1\rangle$ of size one. The duality theory says that: the asymptotic rank equals the pointwise maximum over all elements in the asymptotic spectrum of tensors, the asymptotic subrank equals the pointwise minimum over all elements in the asymptotic spectrum of tensors, and the asymptotic restriction preorder is characterized by $f \asympleq g$ if and only if for every $\phi$ in the asymptotic spectrum $X$ it holds that $\phi(f) \leq \phi(g)$.

\subsection{Asymptotic spectrum duality}

We introduce the asymptotic spectrum of \emph{symmetric} tensors as the natural symmetric variation on Strassen's asymptotic spectrum of tensors, to give a duality theory for the asymptotic symmetric (sub)rank and restriction preorder. We have defined the asymptotic symmetric subrank before. The asymptotic symmetric rank is similarly defined as $\symasymprank(f) = \lim_{n\to\infty} \symrank(f^{\otimes n})^{1/n}$ and the asymptotic symmetric restriction preorder is defined by $f \symasympleq g$ if and only if $f^{\otimes n} \symleq g^{\otimes (n + o(n))}$.

We define the asymptotic spectrum of symmetric tensors (for any fixed $k \in \NN$ and field $\FF$) as the set $X_\mathrm{s}$ of all real-valued maps from symmetric $k$-tensors over $\FF$ to the nonnegative reals that are additive under the direct sum, multiplicative under the tensor product, monotone under the \emph{symmetric} restriction preorder, and normalized to $1$ on the diagonal tensor $\langle 1\rangle$. It follows readily from the general part of the theory in \cite{Str88} (see also \cite{Zui18_thesis}) that the asymptotic spectrum of symmetric tensors $X_\mathrm{s}$ gives a dual formulation for the asymptotic symmetric subrank, asymptotic symmetric rank and asymptotic symmetric restriction preorder.
\begin{theorem} \label{thm:asymptoticspectrum}
	Let $\FF$ be an algbraically closed field of characteristic at least $k+1$. Let $X_\mathrm{s}$ be the asymptotic spectrum of symmetric $k$-tensors.
	Let $f$ and $g$ be symmetric $k$-tensors. Then
	\begin{align*}
	\symasympsubrank(f) &= \min_{\phi \in X_\mathrm{s}}\phi(f),\\
		\symasymprank(f) &= \max_{\phi \in X_\mathrm{s}}\phi(f),\\
		f \symasympleq g &\iff \forall \phi \in X_\mathrm{s},\, \phi(f) \leq \phi(g).
	\end{align*} 
\end{theorem}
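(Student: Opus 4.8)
The plan is to deduce this from Strassen's general duality theory for "asymptotic spectra" of commutative semirings (as developed in \cite{Str88}, see also \cite{Zui18_thesis,WZ}), applied to the semiring of symmetric $k$-tensors under direct sum and tensor product, and then to transfer the three identities to their non-symmetric counterparts using the results of Section~\ref{subsec:rel-asymp}. First I would check the semiring axioms: the set of (isomorphism classes of) symmetric $k$-tensors over $\FF$ forms a commutative semiring $\cS$ with addition $\oplus$ and multiplication $\otimes$, the unit tensor $\langle 1\rangle$ is the multiplicative identity, and $\symleq$ is a preorder compatible with $\oplus$ and $\otimes$ (one must note that the tensor product and direct sum of symmetric tensors are again symmetric, and that $\symleq$ is preserved under these operations, which is immediate from the definition $g = A^{\otimes k} f$). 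One also needs the Strassen "good" conditions: that $\langle 1\rangle \symleq f \symleq \langle 1\rangle^{\oplus N}$ for some $N$ whenever $f \neq 0$, i.e.\ every nonzero symmetric tensor symmetrically restricts from and to some power of $\langle 1\rangle$ — the first half is Proposition~\ref{limsup_sup}'s observation (there is always a diagonal entry one can extract after scaling, using algebraic closure to take $k$th roots), and the second half (an upper bound by $\langle N\rangle$) follows from the symmetric-rank bound $\symrank(f) \leq 2^{k-1}\trank(f) < \infty$ in Theorem~\ref{th:asymprank}. Given these axioms, Strassen's theorem yields a set $X_\mathrm{s}$ of semiring homomorphisms to $\RR_{\geq 0}$, monotone under $\symleq$, normalized at $\langle 1\rangle$, such that $f \symasympleq g \iff \forall \phi\in X_\mathrm{s},\ \phi(f)\leq\phi(g)$, and such that the asymptotic symmetric subrank and rank are the pointwise min and max over $X_\mathrm{s}$ respectively.

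The third identity is then exactly Strassen's duality applied to $\cS$. For the first identity, $\symasympsubrank(f) = \min_{\phi\in X_\mathrm{s}}\phi(f)$: the inequality $\symasympsubrank(f) \leq \phi(f)$ for every $\phi$ follows from monotonicity and multiplicativity ($\langle r\rangle \symleq f^{\otimes n}$ implies $r = \phi(\langle r\rangle) \leq \phi(f)^n$), and the matching lower bound — existence of a $\phi$ attaining $\symasympsubrank(f)$ — is the content of Strassen's theorem that $\symasympsubrank$ is the pointwise minimum; concretely this uses that $\symasympsubrank(f) \geq r$ is equivalent to $\langle r\rangle \symasympleq f$ together with the characterization of $\symasympleq$ via $X_\mathrm{s}$. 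The second identity is dual, using $f \symasympleq \langle \symasymprank(f)\rangle$ and Theorem~\ref{th:asymprank} to guarantee $\symasymprank(f) < \infty$ so the maximum is finite and attained.

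The main obstacle — and where I would spend the most care — is verifying that the semiring $\cS$ of symmetric tensors genuinely satisfies all of Strassen's hypotheses, in particular that $\symleq$ restricted to symmetric tensors is "rich enough": one needs that $\langle 1\rangle$ is a "one-element" in the sense that $f \symgeq \langle 1\rangle$ for all $f \neq 0$, which here requires the field to be closed under $k$th roots (hence the algebraic-closure hypothesis), and that there are no pathologies from the tensor product of two symmetric tensors being symmetric of the "right" order. A secondary subtlety is that $\symrank$ and $\symsubrank$ are a priori only defined for symmetric tensors via symmetric restrictions, so one must be careful that all constructions stay inside the subsemiring $\cS$. Once the axiomatic setup is in place, the three displayed equalities are a direct invocation of \cite[Theorem]{Str88}; the only paper-specific input beyond Strassen's machinery is the finiteness of $\symasymprank$ and the solvability of $\langle 1\rangle \symleq f$, both of which are supplied by the lemmas of Section~\ref{subsec:rel-asymp}. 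Finally I would remark that combining the third identity with Theorem~\ref{th:asympsymmrestr} shows $X_\mathrm{s}$, restricted appropriately, "is" the restriction of Strassen's asymptotic spectrum $X$ to symmetric tensors, which is the bridge promised in the introduction.
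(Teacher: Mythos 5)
Your overall strategy is exactly the paper's: the authors also do not prove Theorem~\ref{thm:asymptoticspectrum} from scratch but invoke Strassen's general duality for a semiring equipped with a Strassen preorder, applied to symmetric $k$-tensors with $\oplus$, $\otimes$ and $\symleq$, and the axioms you list ($\langle 1\rangle \symleq f$ for $f\neq 0$ via a nonzero diagonal entry after a change of basis, and $f \symleq \langle N\rangle$ via the finiteness of $\symrank$ from Theorem~\ref{th:asymprank}) are indeed the standard ones to check. One small caution on the first axiom: Proposition~\ref{limsup_sup} \emph{assumes} a unit diagonal entry; that every nonzero symmetric tensor can be brought into this form is where algebraic closure \emph{and} the characteristic hypothesis (polarization) actually enter.

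The genuine gap is that you omit the one ingredient the paper singles out as non-standard: to get the first identity $\symasympsubrank(f)=\min_{\phi\in X_\mathrm{s}}\phi(f)$ from the preorder duality, one needs that every nonzero symmetric $k$-tensor is either equivalent to $\langle 1\rangle$ or satisfies $\symasympsubrank(f)>1$ (equivalently $\langle 2\rangle \symleq f^{\otimes c}$ for some $c$). Your step ``$\symasympsubrank(f)\geq r$ is equivalent to $\langle r\rangle \symasympleq f$ together with the characterization of $\symasympleq$'' glosses over exactly the place where this is used: the duality yields $\langle r\rangle^{\otimes N} \symleq x_N\cdot f^{\otimes sN}$ with $x_N^{1/N}\to 1$, and to convert the sub-exponential multiplicity $x_N$ into a genuine power $f^{\otimes sN+o(N)}$ (so that it contributes to $\symsubrank$ of a power of $f$) one must absorb $x_N \leq \langle 2\rangle^{\oplus \lceil\log_2 x_N\rceil}\symleq f^{\otimes c\lceil \log_2 x_N\rceil}$; this is precisely the hypothesis of the subrank-minimum corollary in \cite{Str88,Zui18_thesis}. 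Your conditions $\langle 1\rangle \symleq f \symleq \langle N\rangle$ do not supply it. In the paper this gap property is deduced from Theorem~\ref{th:asympsubrank} ($\symasympsubrank=\asympsubrank$ on symmetric tensors) combined with the known fact that $\asympsubrank$ of any tensor not equivalent to $\langle 1\rangle$ exceeds $1$; you should add this step, after which the rest of your argument goes through.
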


We will not give the proof of Theorem~\ref{thm:asymptoticspectrum} as it follows along the same lines as the original proof in \cite{Str88} (see also \cite{Zui18_thesis}).
The bulk of the proof is to show that the symmetric restriction preorder is a so-called ``good preorder'' (\cite{Str88}) or Strassen preorder (\cite{Zui18_thesis}).
The only non-standard ingredient for the proof is the fact that for every nonzero symmetric $k$-tensor $f$ either $f$ is equivalent to $\langle 1\rangle$ or $\symasympsubrank(f) > 1$, which follows from Theorem~\ref{th:asympsubrank} and the fact that this property holds for $\asympsubrank$.

\subsection{Surjective restriction from the asymptotic spectrum}

The results of Section~\ref{subsec:rel-asymp} answer a structural question: how are the asymptotic spectrum of tensors $X$ and the asymptotic spectrum of symmetric tensors $X_\mathrm{s}$ related? 
One relation is clear: for every element $\phi \in X$ the restriction of $\phi$ to symmetric tensors is an element of $X_\mathrm{s}$. We thus have the restriction map $r : X \to X_\mathrm{s}$ that maps $\phi \in X$ to the restriction of $\phi$ to symmetric tensors. We prove:

\begin{theorem}\label{th:restriction}
	The restriction map $r : X \to X_\mathrm{s}$ is surjective.
\end{theorem}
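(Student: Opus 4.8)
The plan is to use the Strassen duality theory to reformulate surjectivity of $r\colon X\to X_\mathrm{s}$ as a statement about order-preserving extensions of preorders, and then to deduce the needed extension from Theorem~\ref{th:asympsymmrestr}. Concretely, fix $\psi\in X_\mathrm{s}$; I want to produce $\phi\in X$ with $\phi|_{\text{symm}}=\psi$. First I would recall the characterization coming from the duality theorems in Section~\ref{sec:asympspec}: for symmetric tensors $f,g$ we have $\psi(f)\le\psi(g)$ for all $\psi\in X_\mathrm{s}$ iff $f\symasympleq g$, and likewise $\phi(f)\le\phi(g)$ for all $\phi\in X$ iff $f\asympleq g$ on all tensors. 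The crucial input is Theorem~\ref{th:asympsymmrestr}, which says that on symmetric $k$-tensors (over an algebraically closed field of characteristic at least $k+1$) the two asymptotic preorders $\asympleq$ and $\symasympleq$ coincide. Hence a single $\psi\in X_\mathrm{s}$, viewed as a monotone function on symmetric tensors with respect to $\symasympleq$, is automatically monotone with respect to the restriction of $\asympleq$ to symmetric tensors.

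The second step is to extend $\psi$ from the sub-semiring of symmetric tensors to all tensors. Here I would invoke the abstract machinery behind Strassen's theorem (as in \cite{Str88}, \cite{Zui18_thesis}): $X$ is exactly the set of semiring homomorphisms (to $\RR_{\ge0}$) on the semiring of tensors under $\oplus,\otimes$ that are monotone under $\le$ and normalized on $\langle 1\rangle$, and the analogous description holds for $X_\mathrm{s}$ on the symmetric sub-semiring. The relevant general fact is that any such homomorphism on a sub-semiring which is monotone for the ambient (asymptotic) preorder restricted to that sub-semiring extends to a point of the full asymptotic spectrum; this is a Hahn–Banach / prime-ideal style extension argument already present in the duality theory (monotone semiring homomorphisms correspond to certain ``spectral points,'' and one extends along the inclusion of spectra $\operatorname{Spec}$ dualizing the inclusion of semirings). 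I would spell this out by noting that the inclusion $\iota$ of the symmetric sub-semiring into the full tensor semiring is a morphism of Strassen preorders once we know (by Theorem~\ref{th:asympsymmrestr}) that $\iota$ reflects the asymptotic preorder, and then citing the fact that the induced map on asymptotic spectra is surjective precisely when $\iota$ is such a morphism.

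The main obstacle I anticipate is making the ``extension'' step rigorous without re-deriving Strassen's theory: one has to be careful that $X_\mathrm{s}$ is genuinely the spectrum of the symmetric sub-semiring and not of some quotient, and that the compactness/Zorn argument producing the extension really goes through when the sub-semiring sits inside a larger one whose preorder restricts correctly. Concretely, given $\psi\in X_\mathrm{s}$ one considers, for each tensor $t$ in the full semiring, the interval of admissible values $\bigl[\sup\{\psi(f)^{a/b}: f\ \text{symm},\ f^{\otimes a}\asympleq t^{\otimes b}\},\ \inf\{\psi(g)^{c/d}: g\ \text{symm},\ t^{\otimes d}\asympleq g^{\otimes c}\}\bigr]$ and must check this interval is nonempty and that consistent choices assemble into a semiring homomorphism; nonemptiness is exactly where Theorem~\ref{th:asympsymmrestr} is used, since it guarantees that whenever symmetric $f,g$ are comparable through $t$ they are already comparable directly, so the lower endpoint does not exceed the upper. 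Once that is in place, a standard maximality/ultrafilter argument (identical to the one in \cite{Str88}) promotes a partial extension to a full one, giving $\phi\in X$ with $r(\phi)=\psi$.
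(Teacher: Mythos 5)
Your proposal is correct and follows essentially the same route as the paper: use Theorem~\ref{th:asympsymmrestr} to identify the asymptotic restriction preorder with the asymptotic symmetric restriction preorder on the subsemiring of symmetric tensors, and then apply the general surjectivity-of-restriction result for asymptotic spectra of subsemirings (the paper cites this directly as Theorem~\ref{th:surj}, whereas you sketch its internal extension argument). The content is the same; the paper's proof is just a shorter citation-based version of yours.
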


Theorem~\ref{th:restriction} has two readings: (1) if we understand what the elements are of the asymptotic spectrum of tensors $X$, then we also understand what the elements are of the asymptotic spectrum of symmetric tensors $X_\mathrm{s}$ by restriction, and (2) for any element $\psi \in X_\mathrm{s}$ there is an \emph{extension} $\phi \in X$ such that $\phi$ restricts to $\psi$.

Theorem~\ref{th:restriction} follows from our Theorem~\ref{th:asympsymmrestr} together with an application of the following powerful theorem from the theory of asymptotic spectra. The theorem uses the notion of a good preorder or Strassen preorder for which we refer the reader to the literature.

\begin{theorem}[{\cite{Str88}, \cite[Corollary 2.18]{Zui18_thesis}}]\label{th:surj}
	Let $S$ be a semiring with a Strassen preorder~$P$. Let $T$ be a subsemiring of $S$. Then the restriction map from the asymptotic spectrum of $S$ to the asymptotic spectrum of $T$ is surjective.
\end{theorem}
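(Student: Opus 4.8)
The plan is to prove that the restriction map $r\colon X(S)\to X(T)$, $\phi\mapsto\phi|_T$, is surjective by explicitly extending a given point. So fix $\psi\in X(T)$, that is, a semiring homomorphism $\psi\colon T\to\RR_{\geq 0}$ with $\psi(1)=1$ that is monotone under the restriction of the Strassen preorder to $T$. I want to build $\phi\in X(S)$ with $\phi|_T=\psi$. The strategy is to trap the desired extension between an upper and a lower envelope on $S$ coming from $\psi$, and then invoke the existence principle at the heart of Strassen's duality theory, which produces monotone multiplicative functionals inside any such window.

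First I would define, for $a\in S$, the upper envelope $\overline{\psi}(a)=\inf\{\psi(t)^{1/n}:n\in\NN_{\geq 1},\ t\in T,\ a^{n}\leq t\}$ and the lower envelope $\underline{\psi}(a)=\sup\{\psi(t)^{1/n}:n\in\NN_{\geq 1},\ t\in T,\ t\leq a^{n}\}$, where $\leq$ is the Strassen preorder on $S$ and powers are taken in $S$. Using that $\NN$ sits inside $T$ and the Archimedean axiom of a Strassen preorder (for every $a$ there is $N\in\NN$ with $a\leq N$, and for $a\neq 0$ there is $N$ with $1\leq Na$), one checks that both envelopes are well-defined real numbers, finite and positive on nonzero elements. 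The next step is the routine but slightly fiddly verification of their structural properties: $\overline{\psi}$ is $\leq$-monotone, submultiplicative and subadditive with $\overline{\psi}(1)=1$ (subadditivity follows from the binomial expansion of $(a\oplus b)^{n}$ together with sub-/supermultiplicativity of the finite envelope $a\mapsto\inf\{\psi(t):a\leq t,\ t\in T\}$ and a padding argument); dually $\underline{\psi}$ is $\leq$-monotone, supermultiplicative and superadditive with $\underline{\psi}(1)=1$; one has $\underline{\psi}\leq\overline{\psi}$ pointwise (from $t\leq a^{n}$ and $a^{m}\leq t'$ one gets $t^{m}\leq a^{nm}\leq t'^{n}$, hence $\psi(t)^{1/n}\leq\psi(t')^{1/m}$); and, decisively, $\underline{\psi}(t_0)=\overline{\psi}(t_0)=\psi(t_0)$ for every $t_0\in T$ (take $n=1$, $t=t_0$).

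The heart of the argument is then to insert a homomorphism into this window: there exists a monotone semiring homomorphism $\phi\colon S\to\RR_{\geq 0}$ with $\underline{\psi}\leq\phi\leq\overline{\psi}$ pointwise. This is exactly the content of the spectral theorem for $(S,\leq)$ in its extension form: after passing to logarithmic coordinates the envelopes become a sublinear functional dominating a superlinear one on the preordered $\Q$-vector space generated by the multiplicative monoid of $S$, a Hahn--Banach argument inside the cone of monotone functionals inserts a linear functional between them, and a normalization/limiting argument (approximating the value at $a$ by $\overline{\psi}(a^{n})^{1/n}$-type expressions) forces this functional to be genuinely multiplicative, i.e.\ to come from a point of $X(S)$. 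Because $\underline{\psi}$ and $\overline{\psi}$ agree with $\psi$ on $T$, the resulting $\phi$ satisfies $\phi|_T=\psi$; and $\phi\in X(S)$ since it is a monotone homomorphism with $\phi(1)=\psi(1)=1$. Hence $r$ is surjective.

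The hard part is this last step: producing the sandwiched homomorphism is precisely the analytic core of the asymptotic-spectrum machinery (a Positivstellensatz/Kadison-representation-type statement), and the only genuinely non-formal point in it is upgrading "monotone and additive" to "multiplicative"; everything about the envelopes in the previous paragraph is a mechanical consequence of the semiring and Strassen-preorder axioms. An equivalent packaging of the same plan dispenses with the explicit envelopes and runs Zorn's lemma over intermediate subsemirings $T\subseteq T'\subseteq S$ carrying a monotone extension of $\psi$; the only real content is then the one-step extension lemma --- for $a\in S\setminus T'$ a value in $[\underline{\psi}(a),\overline{\psi}(a)]$ (a nonempty interval, since $\underline{\psi}\leq\overline{\psi}$) extends the current functional to the subsemiring generated by $T'$ and $a$ compatibly with all relations and with the preorder --- which is again the same core fact.
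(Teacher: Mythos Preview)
The paper does not give its own proof of this theorem: it is quoted as a black box from \cite{Str88} and \cite[Corollary~2.18]{Zui18_thesis} and then applied to prove Theorem~\ref{th:restriction}. So there is nothing in the paper to compare your argument against line by line.

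Your sketch is the standard route taken in the cited sources. Defining the upper and lower $\psi$-envelopes on $S$, checking their sub-/super-multiplicativity and -additivity and that they agree with $\psi$ on $T$, and then invoking the Hahn--Banach/Kadison--Dubois-type core of Strassen's spectral theorem to insert a monotone semiring homomorphism between them is exactly how the extension result is obtained in \cite{Zui18_thesis}. Your honest flagging that the ``sandwiched homomorphism'' step is the genuine analytic content (and is not reproved here) is appropriate: that step \emph{is} the spectral theorem, and in the references it is packaged precisely so that surjectivity of the restriction map drops out once one observes that the induced preorder on $T$ is again a Strassen preorder. One small remark: for the envelopes to be finite and positive you use not just the Archimedean axiom but also that $\NN\subseteq T$ and that $\psi$ is normalized on $\NN$; you say this, but it is worth making explicit when you write it up, since it is the only place where ``$T$ is a sub\emph{semiring}'' (rather than just a submonoid) is used.
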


\begin{proof}[Proof of Theorem~\ref{th:restriction}]
	We give a sketch of the proof. The proof is an application of Theorem~\ref{th:surj}. Let $S$ be the semiring of $k$-tensors and let $P$ be the asymptotic restriction preorder. This is a Strassen preorder. Let $T$ be the subsemiring of $S$ of symmetric $k$-tensors. Then Theorem~\ref{th:surj} implies that the restriction map from the asymptotic spectrum of $S$ with the asymptotic restriction preorder to the asymptotic spectrum of $T$ with the asymptotic restriction preorder is surjective. Since the asymptotic restriction preorder on symmetric tensors coincides with the asymptotic symmetric restriction preorder by Theorem~\ref{th:restriction}, the claim follows.
\end{proof}

To conclude and summarize, the asymptotic spectrum of tensors $X$ and the asymptotic spectrum of symmetric tensors $X_\mathrm{s}$ are tightly related since the restriction map from the first to the second is surjective. What are the elements of $X$ and $X_\mathrm{s}$? A long line of work \cite{Str86,Str88,Str88,Str91,Str05,tobler,burg, CVZ18, christandl2020weighted} has been devoted to this question. Our best understanding is for the case that the ground field $\FF$ is the complex numbers\footnote{It is known that the asymptotic spectrum can only depend on the characteristic of the field \cite{Str88}.} and that is what we will focus our discussion on here and in the next section.

The known elements in $X$ (over the complex numbers) are a family of functions called the \emph{quantum functionals}. These were introduced in \cite{CVZ18} and are based on an information-theoretic and representation-theoretic study of powers of tensors. The quantum functionals more precisely form a continuous family $F^\theta$ indexed by probability distributions $\theta$ on $[k]$. This family includes the flattening ranks, but also includes more interesting functions that are properly real-valued which reveal asymptotic information that the flattening ranks do not reveal. It is possible but not known whether the quantum functionals are all elements of~$X$. Proving this is a central open problem of the theory. In particular, the quantum functionals being all elements of $X$ would imply that the matrix multiplication exponent $\omega$ equals 2, which would be a breakthrough result in complexity theory.

We may restrict the quantum functionals to symmetric tensors to find an infinite family of elements in $X_\mathrm{s}$\footnote{Restricting the quantum functionals to symmetric tensors may make some of the functionals coincide, but known results imply that the resulting family is still a continuous family. Namely, there is symmetric tensor $T$, called the W-tensor, for which it is known that $\{F(T) : F \in X\}$ equals the closed interval $[3/2^{2/3}, 2]$ and thus also $\{F(T) : F \in X_\mathrm{s}\} = [3/2^{2/3}, 2]$.}. Since we do not know whether the quantum functionals are all elements of~$X$, we can, however, not conclude from Theorem~\ref{th:restriction} that their restriction gives all elements of~$X_\mathrm{s}$. 

What we will do in the next section is give a natural construction of a single element in $X_\mathrm{s}$ following the same ideas as for the construction of the quantum functionals but applied directly to the symmetric restriction preorder. This single element we call the \emph{symmetric quantum functional}. What we then find is that this symmetric quantum functional on symmetric tensors in fact equals the \emph{uniform} quantum functional $F^{(1/k, \ldots, 1/k)}$. Thus we do not find a new element in~$X_\mathrm{s}$, but we do find a different description of the uniform quantum functional restricted to symmetric tensors, and this might be algorithmically beneficial. This symmetric quantum functional is the \emph{pointwise smallest} element among all elements in $X_\mathrm{s}$ that we currently know, and from previous work it follows that it equals the asymptotic slice rank (on symmetric tensors). Having discussed the plan we will now go into the details in the next section.

\section{Symmetric quantum functional}\label{sec:symmetricquantum}

In Section~\ref{sec:asympspec} we introduced the asymptotic spectrum of symmetric tensors $X_\mathrm{s}$ and proved a duality theorem for the asymptotic symmetric (sub)rank and restriction preorder in terms of~it. 

We use the ideas of the construction of the quantum functionals $F^{\theta} \in X$ from \cite{CVZ18} to construct the \emph{symmetric quantum functional} $F \in X_\mathrm{s}$ over the field of complex numbers. Let us from now on fix the base field to be the field of complex numbers. In fact we will take a more general approach and define the symmetric quantum functional not just for symmetric tensors but for arbitrary tensors.

Before recalling the definition of the quantum functionals $F^\theta$ and giving the new definition of the symmetric quantum functional $F$, here is what we will find. For symmetric tensors we will show that:

\begin{theorem}\label{th:main1}
	On symmetric tensors $F = F^{(1/k, \ldots, 1/k)}$.
\end{theorem}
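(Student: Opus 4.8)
The plan is to show the two inequalities $F \le F^{(1/k,\ldots,1/k)}$ and $F \ge F^{(1/k,\ldots,1/k)}$ on symmetric tensors. Both the symmetric quantum functional $F$ and the quantum functionals $F^\theta$ are defined via a regularization (a $\limsup$ over tensor powers $f^{\otimes n}$) of a single-copy quantity built from the supports of $g \symleq f^{\otimes n}$ (resp.\ $g \le f^{\otimes n}$) and an entropic/representation-theoretic maximization over probability distributions on the support. The key structural point I would isolate first is that, for a \emph{symmetric} tensor $f$, every restriction appearing in the definition of $F^{(1/k,\ldots,1/k)}$ can be ``symmetrized''. Concretely, I would invoke Lemma~\ref{lem:make-sym} and Lemma~\ref{lem:create-t}: if $g \le f^{\otimes n}$ with $g$ supported suitably, then $f^{\otimes n} \otimes h \symgeq g$ where $h$ is the fully symmetric $k$-tensor, and $f^{\otimes c} \symgeq h$ for a constant $c = c(f)$ (provided some flattening rank of $f$ is at least $2$; the rank-one and trivial cases are handled separately and give both sides equal to a monomial in the flattening data). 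Hence $f^{\otimes n+c} \symgeq g$, so any support realizable by ordinary restriction from $f^{\otimes n}$ is realizable by symmetric restriction from $f^{\otimes n+c}$, and the additive constant $c$ washes out in the $n\to\infty$ normalization $(\cdot)^{1/n}$. This should give $F \ge F^{(1/k,\ldots,1/k)}$ after checking that the entropic maximand in the definition of $F$ restricted to symmetric tensors is exactly the uniform ($\theta = (1/k,\ldots,1/k)$) specialization of the maximand defining $F^\theta$ — which is forced by symmetry of the support of $f$ under permuting the $k$ legs, since a permutation-invariant support makes the per-leg marginal entropies all equal, collapsing the $\max_\theta$ to the uniform point.

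For the reverse inequality $F \le F^{(1/k,\ldots,1/k)}$, I would argue that symmetric restriction is a special case of ordinary restriction (take $A^{(1)} = \cdots = A^{(k)} = A$), so $\symsubrank$-type data is dominated by $\subrank$-type data at every finite level; more carefully, since $F$ is defined as an infimum/maximum over symmetric restrictions and $F^\theta$ over all restrictions, the symmetric version can only be larger as a ``subrank-like'' lower bound — wait, this needs care, because $F$ is an element of $X_\mathrm{s}$ and is \emph{monotone} under $\symleq$, so I would instead show $F \le F^{(1/k,\ldots,1/k)}$ by verifying that $F^{(1/k,\ldots,1/k)}$ restricted to symmetric tensors is itself an element of $X_\mathrm{s}$ (additive, multiplicative, $\symleq$-monotone, normalized) that is known from \cite{CVZ18} to upper bound $\symasympsubrank$ and lower bound $\symasymprank$, and then pin down $F$ as the specific value produced by the representation-theoretic construction. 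The cleanest route is: both functionals have the same explicit formula $\limsup_n \tfrac1n \max \{ \text{entropy term} \}$ once the support-symmetry argument collapses the $\theta$-maximum, so I would prove equality directly from the definitions rather than via a squeeze.

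The main obstacle I anticipate is matching the \emph{representation-theoretic} single-copy quantity in the definition of $F$ with the $\theta = (1/k,\ldots,1/k)$ specialization of the quantum-functional construction: in \cite{CVZ18} the quantum functional $F^\theta(f)$ is $\sup$ over the moment polytope / support of $f$ of a weighted sum $\sum_i \theta_i H(\cdot)$ of Shannon entropies of marginals, and one must check that the symmetric construction produces exactly $\sup H_{\mathrm{unif}}(\cdot)$ where the single maximand is the \emph{uniform} average $\tfrac1k\sum_i H(\cdot)$ of the leg-marginal entropies — and crucially that on symmetric tensors this uniform average equals $H$ of a single marginal (they are all equal), and that the set over which we maximize (the support of highest-weight vectors, or the moment polytope) is the same set in both constructions. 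Handling the characteristic-zero / algebraically-closed hypotheses and the $\limsup$-vs-$\lim$ technicality (Proposition~\ref{limsup_sup}) is routine by comparison, but verifying the polytope/support identification is where the real content lies; I would expect to do this by unwinding the Schur--Weyl duality setup of \cite{CVZ18} and observing that the permutation symmetry of $f$ intertwines with the $\mathfrak{S}_k$-action on tensor legs so that the relevant $\mathrm{GL}_d^{\times k}$-representation data reduces to a single $\mathrm{GL}_d$-factor.
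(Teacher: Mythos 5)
Your proposal rests on a mischaracterization of the objects involved, and the resulting gap is in the direction $F \ge F^{(1/k, \ldots, 1/k)}$. Neither $F$ nor $F^\theta$ is defined as a regularized optimization over restrictions $g \le f^{\otimes n}$ (or $g \symleq f^{\otimes n}$): both are single-copy quantities, namely maxima of entropy functionals over moment polytopes attached to the orbit closures $\overline{\GL(d)\cdot f}$ and $\overline{\GL(d)^{\times k}\cdot f}$ respectively. Consequently the restriction-symmetrization machinery of Lemma~\ref{lem:make-sym} and Lemma~\ref{lem:create-t} (which converts $g \le f^{\otimes n}$ into $g \symleq f^{\otimes n+c}$) has nothing to act on here; it is the engine behind Theorem~\ref{th:asympsubrank}, not behind the quantum functionals. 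What the paper actually proves (Theorem~\ref{thm: bound_quantum_functional}, valid for \emph{all} tensors, not just symmetric ones) is $F \ge F^{(1/k,\ldots,1/k)}$ via the representation-theoretic characterizations $F(f) = \sup\{(\dim[\lambda])^{1/kn} : P_\lambda f^{\otimes n}\neq 0\}$ and $F^{(1/k,\ldots,1/k)}(f) = \sup\{(\prod_i \dim[\lambda^i])^{1/kn} : (P_{\lambda^1}\otimes\cdots\otimes P_{\lambda^k})f^{\otimes n}\neq 0\}$, together with the branching fact that any $[\lambda^1]\otimes\cdots\otimes[\lambda^k]$ on which $f^{\otimes n}$ has a nonzero component sits inside some $[\lambda]$ with $\lambda\vdash kn$ and $\dim[\lambda]\ge\prod_i\dim[\lambda^i]$. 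Nothing in your proposal supplies a substitute for this step.

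The other direction is where your correct observation lives, but you have its role backwards. The inclusion $\overline{\GL(d)\cdot f}\setminus\{0\}\subseteq\overline{\GL(d)^{\times k}\cdot f}\setminus\{0\}$, together with the fact that every element $s$ of the diagonal orbit closure of a symmetric $f$ is itself symmetric and hence has all $k$ marginals equal (so that $H\bigl(\tfrac1k\sum_j\rho_j(s)\bigr)=H(\rho_1(s))=\sum_j\theta_j H(\rho_j(s))$ for every $\theta$), shows that $F(f)$ is a supremum of the same objective over a \emph{smaller} feasible set; this yields $F\le F^\theta$ for every $\theta$, not the inequality you attach it to. Your claim that ``the set over which we maximize is the same set in both constructions'' is neither established nor needed: only the one-sided inclusion is used, and the reverse inequality must come from the branching argument above. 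Finally, there is no $\limsup$-vs-$\lim$ technicality to handle in this theorem, since both functionals are single-copy quantities; the regularization statement $\lim_n F(f^{\otimes n})^{1/n}=F^{(1/k,\ldots,1/k)}(f)$ is a separate conclusion of Theorem~\ref{thm: bound_quantum_functional}, not an ingredient of the equality on symmetric tensors.
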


This gives an alternative description of the uniform quantum functional $F^{(1/k, \ldots, 1/k)}$, which may have algorithmic benefits. 

In particular on symmetric tensors the symmetric quantum functional is in the asymptotic spectrum of symmetric tensors $X_\mathrm{s}$.

\begin{theorem}\label{th:main2}
	On symmetric tensors we have $F \in X_\mathrm{s}$.
\end{theorem}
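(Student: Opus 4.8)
The plan is to deduce this directly from Theorem~\ref{th:main1} together with the already-established fact that the uniform quantum functional $F^{(1/k,\ldots,1/k)}$ lies in the (non-symmetric) asymptotic spectrum $X$. Recall that membership in $X_\mathrm{s}$ requires four properties of a map on symmetric $k$-tensors: additivity under direct sum, multiplicativity under tensor product, monotonicity under the \emph{symmetric} restriction preorder $\symleq$, and normalization $\phi(\langle 1\rangle) = 1$. Since Theorem~\ref{th:main1} gives $F = F^{(1/k,\ldots,1/k)}$ as functions on symmetric tensors, it suffices to check that $F^{(1/k,\ldots,1/k)}$, viewed as a function restricted to symmetric tensors, satisfies all four.

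First I would handle the three ``easy'' properties. Additivity under $\oplus$, multiplicativity under $\otimes$, and the normalization $F^\theta(\langle 1\rangle) = 1$ all hold for $F^\theta$ on \emph{all} tensors by the results of \cite{CVZ18}; moreover the direct sum and tensor product of symmetric tensors are again symmetric, so the restricted function inherits these properties verbatim. The one genuinely new point is monotonicity under $\symleq$ rather than under the full restriction preorder $\leq$. Here the key observation is that $\symleq$ is a \emph{sub}-relation of $\leq$: if $g \symleq f$ then in particular $g \leq f$ (take $A^{(1)} = \cdots = A^{(k)} = A$ in the definition of restriction). Since $F^{(1/k,\ldots,1/k)} \in X$ is monotone under $\leq$, it is a fortiori monotone under $\symleq$ on symmetric tensors. (One should note that $\symleq$ restricted to symmetric tensors does land inside the symmetric tensors, which is immediate since $A^{\otimes k}f$ is symmetric whenever $f$ is.) Combining these four checks shows $F^{(1/k,\ldots,1/k)}|_{\text{sym}} \in X_\mathrm{s}$, and then Theorem~\ref{th:main1} gives $F \in X_\mathrm{s}$.

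I do not expect a serious obstacle here: the entire content is pushed into Theorem~\ref{th:main1} (the identification $F = F^{(1/k,\ldots,1/k)}$ on symmetric tensors) and into the prior fact $F^{(1/k,\ldots,1/k)} \in X$ from \cite{CVZ18}. If one wanted instead a self-contained argument not routed through Theorem~\ref{th:main1}, the more delicate step would be verifying $\symleq$-monotonicity of the symmetric quantum functional $F$ \emph{directly} from its representation-theoretic definition — that is, showing that a symmetric restriction $g = A^{\otimes k} f$ forces the relevant moment polytope / support-type quantities of $g$ to be contained in those of $f$, uniformly over tensor powers. But given that Theorem~\ref{th:main1} is available in the excerpt, the short route above is the intended one, and the only things to be careful about are (i) that $\oplus$ and $\otimes$ preserve symmetry and (ii) that $\symleq \subseteq \leq$.
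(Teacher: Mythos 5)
Your proposal is correct and follows essentially the same route as the paper: the paper derives Theorem~\ref{th:main2} from the identification $F = F^{(1/k,\ldots,1/k)}$ on symmetric tensors (Theorem~\ref{thm:sym-unifom}) together with the known properties of the uniform quantum functional as an element of $X$, exactly as listed in Corollary~\ref{cor:asympspec}. Your two explicit care points --- that $\oplus$ and $\otimes$ preserve symmetry and that $\symleq$ implies $\leq$ so monotonicity transfers --- are precisely the (implicit) content of the paper's ``we directly obtain'' step.
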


For general tensors we find the following.

\begin{theorem}\label{th:main3}
	On arbitrary tensors we have $F \geq F^{(1/k, \ldots, 1/k)}$.	
\end{theorem}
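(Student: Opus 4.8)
The plan is to unwind the definition of the symmetric quantum functional $F$ and of the quantum functional $F^{(1/k,\ldots,1/k)}$ into statements about moment polytopes (equivalently, asymptotic supports under the relevant group actions), and then to show that any configuration witnessing $F^{(1/k,\ldots,1/k)}(t)$ can be ``symmetrized'' into a configuration witnessing the same value as a lower bound on $F(t)$; the one genuine inequality used will be concavity of the von Neumann entropy. Recall that, writing $\rho_i(u)$ for the $i$-th quantum marginal of a nonzero $k$-tensor $u$, $\log F^{(1/k,\ldots,1/k)}(t)$ is the maximum of $\frac1k\sum_{i=1}^k H(\spec\rho_i(u))$ over $u$ in the (Euclidean closure of the) $\GL_d^{\times k}$-orbit of $t$, while $\log F(t)$ is, up to the normalization fixed in this section, the maximum of $H\bigl(\spec(\tfrac1k\sum_{i=1}^k\rho_i(u))\bigr)$ over $u$ in the closure of the \emph{diagonal} $\GL_d$-orbit of $t$ and of its tensor powers (regrouped as $k$-tensors over $\CC^{d^n}$). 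Since $H(\spec(\tfrac1k\sum_i\rho_i))\ge\tfrac1k\sum_i H(\spec\rho_i)$ by concavity of entropy, with equality when all marginals share a spectrum, the content of the theorem is that one may trade the optimal point of the \emph{larger} (non-diagonal) orbit for a point in the \emph{smaller} diagonal orbit closure of a power of $t$ without loss.

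Concretely, I would first pick $u$ in the $\GL_d^{\times k}$-orbit closure of $t$ with $\frac1k\sum_i H(\spec\rho_i(u))=\log F^{(1/k,\ldots,1/k)}(t)$, writing $p^{(i)}=\spec\rho_i(u)$. Pass to the $k!$-th tensor power and form $w\coloneqq\bigotimes_{\pi\in\mathfrak{S}_k}\pi(u)$, regrouped as a $k$-tensor over $\CC^{d^{k!}}$. For each leg $i$ one has $\rho_i(w)=\bigotimes_{\pi}\rho_i(\pi(u))=\bigotimes_\pi\rho_{\pi^{-1}(i)}(u)$, and since $\{\pi^{-1}(i):\pi\in\mathfrak{S}_k\}$ hits each element of $[k]$ exactly $(k-1)!$ times, $\spec\rho_i(w)=\bigotimes_{j=1}^k(p^{(j)})^{\otimes(k-1)!}$ for every $i$; all marginals are equal, with entropy $(k-1)!\sum_j H(p^{(j)})=k!\cdot\log F^{(1/k,\ldots,1/k)}(t)$. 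The remaining step — and, I expect, the crux — is to realize this symmetrized configuration inside the symmetric setting: naively $\bigotimes_\pi\pi(t)$ differs from $t^{\otimes k!}$ by a \emph{leg-dependent} permutation, so it is not in the diagonal orbit closure of $t^{\otimes k!}$, and one must instead route through the mechanism of Section~\ref{subsec:rel-asymp}. Concretely, I would use the fully symmetric tensor $h$ together with Lemma~\ref{lem:make-sym} (the map $B=\sum_i A_i\otimes e_ie_i^{*}$) and Lemma~\ref{lem:create-t} to upgrade an ordinary restriction of a power of $t$ that carries the marginal data of $u$ to a \emph{symmetric} restriction, the tensor factor $h$ being fixed and hence negligible asymptotically; this produces, for infinitely many $N$, a tensor $w'_N$ with $t^{\otimes N}\symgeq w'_N$ and $\frac1N H\bigl(\spec(\tfrac1k\sum_i\rho_i(w'_N))\bigr)\to\log F^{(1/k,\ldots,1/k)}(t)$. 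Since $F$ is $\symleq$-monotone and multiplicative (as for $F^\theta$, multiplicativity follows from the general properties of moment polytopes), $\log F(t)=\lim_N\frac1N\log F(t^{\otimes N})\ge\lim_N\frac1N\log F(w'_N)\ge\lim_N\frac1N H\bigl(\spec(\tfrac1k\sum_i\rho_i(w'_N))\bigr)=\log F^{(1/k,\ldots,1/k)}(t)$.

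The main obstacle is precisely this reconciliation of the leg-permutations needed to symmetrize the marginal data with the single-linear-map constraint built into symmetric restriction; this is exactly the difficulty resolved in Section~\ref{subsec:rel-asymp}, so the plan is to invoke the $h$-tensor/Lemma~\ref{lem:make-sym}/Lemma~\ref{lem:create-t} package rather than a purely linear-algebraic maneuver. Two consistency checks should be woven in. First, on symmetric tensors the moment polytope is $\mathfrak{S}_k$-invariant, so one may take $p^{(1)}=\cdots=p^{(k)}$, the concavity step becomes an equality, and the argument degenerates to $F(t)=F^{(1/k,\ldots,1/k)}(t)$, matching Theorem~\ref{th:main1}; the inequality $F\ge F^{(1/k,\ldots,1/k)}$ can be strict only because for non-symmetric $t$ the optimal marginals may differ. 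Second, one must confirm that the $\limsup$/$\sup$ defining $F$ on powers is well-behaved — superadditivity of the relevant entropy quantities, as in Proposition~\ref{limsup_sup} — so that dividing $\log F(t^{\otimes N})\ge N\log F^{(1/k,\ldots,1/k)}(t)$ by $N$ is legitimate.
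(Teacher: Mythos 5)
There is a genuine gap at exactly the point you flag as the crux. Your plan is to symmetrize the optimizer $u$ from the $\GL(d)^{\times k}$-orbit closure by forming $\bigotimes_{\pi}\pi(u)$ and then to repair the leg-dependent permutations by invoking the machinery of Section~\ref{subsec:rel-asymp}. But Lemma~\ref{lem:make-sym} and Lemma~\ref{lem:create-t} are stated and proved only for \emph{symmetric} tensors: the identity $(B^{\otimes k})(f\otimes h)=k!\,((A_1\otimes\cdots\otimes A_k)f)\otimes h$ with $B=\sum_i A_i\otimes e_ie_i^{*}$ works because each term of $h$ routes the maps as $(A_{\pi(1)}\otimes\cdots\otimes A_{\pi(k)})f$, and one needs all these permuted restrictions to agree — which holds when $f$ and $g$ are symmetric (Corollary~\ref{cor:congr-tens}) and fails otherwise. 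Theorem~\ref{th:main3} is precisely about arbitrary, non-symmetric tensors, so the $h$-tensor package cannot be used to produce your $w'_N$ with $t^{\otimes N}\symgeq w'_N$. A secondary problem: you assert multiplicativity of $F$, but $F$ is only \emph{sub}-multiplicative on general tensors (Appendix~\ref{proof_quantumfunctional}); indeed Theorem~\ref{thm: bound_quantum_functional} shows the regularization of $F$ equals $F^{(1/k,\ldots,1/k)}$, which would contradict multiplicativity whenever the inequality of Theorem~\ref{th:main3} is strict. (Sub-multiplicativity still gives the direction $\log F(t)\ge\lim_N\frac1N\log F(t^{\otimes N})$ that your chain needs, so this part is repairable; the monotonicity of $F$ under $\symleq$ for non-invertible maps on non-symmetric tensors is also asserted without justification.)

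The paper's proof avoids moving the tensor altogether and works on the representation-theoretic side. Both functionals are rewritten via isotypic projectors: $F^{(1/k,\ldots,1/k)}(f)$ is a supremum of $(\prod_i\dim[\lambda^i])^{1/kn}$ over tuples $\lambda^1,\ldots,\lambda^k\vdash_d n$ with $(P_{\lambda^1}\otimes\cdots\otimes P_{\lambda^k})f^{\otimes n}\ne 0$, while $F(f)$ is a supremum of $(\dim[\lambda])^{1/kn}$ over $\lambda\vdash_d kn$ with $P_\lambda f^{\otimes n}\ne 0$ (Lemma~\ref{moment_polytope_dual}). Then the branching of $[\lambda]$ under $\mathfrak{S}_{nk}\downarrow\mathfrak{S}_n^{\times k}$ shows that any nonvanishing tuple projection forces some $\lambda\vdash_d kn$ with $P_\lambda f^{\otimes n}\ne 0$ and $\dim[\lambda]\ge\prod_i\dim[\lambda^i]$, giving $F\ge F^{(1/k,\ldots,1/k)}$ directly, with no symmetrization of $f$ and no limit over powers. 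If you want to salvage your entropy-first framing, the step you must supply is a replacement for the $h$-tensor lemmas that works for non-symmetric $t$ — and that is essentially what the branching-rule argument provides.
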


In particular, since $F^{(1/k, \ldots, 1/k)} \geq \asympsubrank$ (because every quantum functional $F^\theta$ is in the asymptotic spectrum of tensors~$X$), we also find $F \geq \asympsubrank$ on arbitrary tensors.
However, via a known connection from \cite{CVZ18} between the quantum functionals and the asymptotic slice rank (the pointwise minimum $\min_\theta F^\theta$ equals the asymptotic slice rank), we find that~$F$, as a method to upper bound the Shannon capacity of hypergraphs, suffers from the induced matching barrier. %

\subsection{From quantum functionals to symmetric quantum functional}

Before defining the quantum functionals and symmetric quantum functional and giving the proofs of the above, we must introduce some standard notation.
Let $\mathcal{H}$ be a complex finite-dimensional Hilbert space with dimension $\dim(\cH)=d$. Thus $\cH \cong \CC^d$. A \emph{state} or \emph{density operator} on $\cH$ is a positive semidefinite linear map $\rho: \cH \ra \cH$ with $\tr(\rho) = 1$. Let $S(\cH)$ be the set of states on $\cH$. 
For $\rho \in S(\cH)$, let $\spec(\rho) = (\lambda_1,\dots,\lambda_d)$ be the sequence of eigenvalues of~$\rho$, ordered non-increasingly, that is, $\lambda_1 \geq \dots \geq \lambda_d$. 
Since $\tr(\rho) = 1$, the sequence of eigenvalue of~$\rho$ is a probability distribution. It thus makes sense to define $H(\spec(\rho))\coloneqq -\sum_{j=1}^{d}\lambda_j\log \lambda_{j}$. The \emph{von Neumann entropy} of~$\rho$ is defined as $H(\rho) = -\tr(\rho \log \rho) = H(\spec(\rho))$. 

Given a state $\rho$ on $\cH_1 \otimes \dots \otimes \cH_k$, the $j$th \emph{marginal} is the element $\rho_j =\tr_{\cH_1\dots \cH_{j-1}\cH_{j+1}\dots \cH_k}(\rho)$ obtained from $\rho$ by a partial trace. The $j$th marginal is itself a state, that is, $\rho_j \in \cS(\cH_j)$. Consider a nonzero element $f \in \cH^{\otimes k}$. Then $\rho(f) = \frac{ff^{\dagger}}{\|f\|^2} \in \cS(\cH^{\otimes k})$, where $f^{\dagger}$ denotes the conjugate transpose of $f$, and we can consider the $j$th marginal $\rho_{j}(f) \in \cS(\cH)$. Let $\GL(d)$ denote the set of invertible matrices acting on $\cH$. For a tensor $f \in \cH^{\otimes k}$, let $\overline{\GL(d) \cdot f}$ be the Euclidean closure (or equivalently Zariski closure) of the orbit $\{(g \otimes \dots \otimes g) f: g \in \GL(d) \}$.

We begin with the definition of the symmetric quantum functional. 

\begin{definition}[Symmetric quantum functional]
	Let $f \in \cH^{\otimes k}$ be nonzero. We define the \emph{symmetric quantum functional} $F$ by $F(f) = 2^{E(f)}$ where
	\begin{align*}
		E(f) &= \max \{ H(p) : p \in \Delta(f) \},
	\end{align*}
	where
	we define the subset $\Delta(f)\subseteq \RR^{d}$, for $d = \dim(\cH)$, as
	\begin{align*}
	\Delta(f) = \Bigl\{\spec \Bigl( \frac{\rho_1(s) + \dots+ \rho_{k}(s)}{k}\Bigr): s \in \overline{\GL(d) \cdot f} \setminus \{0 \} \Bigr\}.
	\end{align*}
\end{definition}

From the work of \cite{Ness} and~\cite{Bri87} it follows that $\Delta(f)$ is a convex polytope.

The definition of the symmetric quantum functional $F$ is inspired by the family of quantum functionals $F^{\theta}$. %
Our main results about the symmetric quantum functional give precise relations between~$F$ and $F^{\theta}$.

\begin{definition}[Quantum functionals]\label{quantum_functional_def}
	Let $\theta \in \cP([k])$ and let $f \in \cH^{\otimes k}$. The quantum functionals are defined by $F^{\theta}(f)= 2^{E^{\theta}(f)}$ where
	\begin{align*}
	E^{\theta}(f)& =  \max \left\{ \sum_{i=1}^{s}\theta(i)H(\rho_{i}(s)): s \in \overline{ \GL(d)^{\times k} \cdot f} \setminus \{0 \} \right\}
	\end{align*}        
	where $\GL(d)^{\times k} \cdot f = \{(g_1\otimes \dots \otimes g_k) \cdot f: g_1,\dots,g_k \in \GL(d) \}$.
\end{definition}

There is an asymptotic connection between the quantum functionals and the slice rank, which we will be using.
\begin{theorem}[\cite{CVZ18}] \label{prop: asymptotic_slice_rank_quantum}
	For any $f\in \cH^{\otimes k}$ the limit $\lim_{n \ra \infty} \slicerank(f^{\otimes n})^{1/n}$ exists and equals the minimization $\min_{\theta \in \cP([k])}F^{\theta}(f)$. 	
\end{theorem}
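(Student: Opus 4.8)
This is \cite[Theorem~4.8]{CVZ18}; since it is used here as a black box, we only outline the strategy one would follow.

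\emph{Existence of the limit and a min--max reformulation.} Slice rank is submultiplicative, $\slicerank(f\otimes g)\le\slicerank(f)\,\slicerank(g)$: expand $f\otimes g$ into a sum of products of a slice of $f$ with a slice of $g$ and regroup by the pair of distinguished legs. Hence $n\mapsto\log\slicerank(f^{\otimes n})$ is subadditive, and Fekete's lemma gives $\lim_n\slicerank(f^{\otimes n})^{1/n}=\inf_n\slicerank(f^{\otimes n})^{1/n}=:\asyslicerank(f)$. Next, the set of tuples of marginal spectra $\Delta_{\mathrm{ns}}(f)=\{(\spec\rho_1(s),\dots,\spec\rho_k(s)):s\in\overline{\GL(d)^{\times k}\cdot f}\setminus\{0\}\}$ is a convex polytope (the moment polytope of $f$) by \cite{Ness,Bri87}. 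Since $(\theta,x)\mapsto\sum_i\theta(i)H(x^{(i)})$ is affine in $\theta\in\cP([k])$ and concave in $x\in\Delta_{\mathrm{ns}}(f)$, Sion's minimax theorem yields
\[
\min_{\theta\in\cP([k])}E^\theta(f)=\max_{s}\min_{i\in[k]}H(\rho_i(s)),\qquad\text{equivalently}\qquad\min_{\theta}F^\theta(f)=2^{\max_s\min_iH(\rho_i(s))}.
\]
It thus suffices to show $\log\asyslicerank(f)=\max_s\min_iH(\rho_i(s))$.

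\emph{Upper bound.} Regard $f^{\otimes n}$ as a $k$-tensor over $(\CC^d)^{\otimes n}$ and let $\Pi_\lambda$ ($\lambda\vdash n$ with at most $d$ parts) be the $\GL(d)$-isotypic projections on $(\CC^d)^{\otimes n}$, so that $\sum_\lambda\Pi_\lambda=\id$. Then $f^{\otimes n}=\sum_{(\lambda^{(1)},\dots,\lambda^{(k)})}(\Pi_{\lambda^{(1)}}\otimes\cdots\otimes\Pi_{\lambda^{(k)}})f^{\otimes n}$ is an honest sum of $\poly(n)$ restrictions of $f^{\otimes n}$. Each nonzero summand $g$ has its $i$-th flattening with image in the $\lambda^{(i)}$-isotypic subspace of $(\CC^d)^{\otimes n}$, of dimension $2^{nH(\overline{\lambda^{(i)}})+\smallo(n)}$ (the Specht factor contributes the entropy, the $\GL(d)$-irrep factor is polynomial in $n$ for fixed $d$ by the Weyl dimension formula), so $\slicerank(g)\le\min_i\rank(\flatten_i g)\le 2^{n\min_iH(\overline{\lambda^{(i)}})+\smallo(n)}$. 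Moreover $g\neq0$ forces $(\overline{\lambda^{(1)}},\dots,\overline{\lambda^{(k)}})$ to lie within $\smallo(1)$ of $\Delta_{\mathrm{ns}}(f)$ (the Mumford--Ness/Brion/Kirwan description of moment polytopes, which is the representation-theoretic heart of the quantum functionals \cite{CVZ18}). Summing over the $\poly(n)$ blocks and taking $n$-th roots gives $\asyslicerank(f)\le 2^{\max_s\min_iH(\rho_i(s))}=\min_\theta F^\theta(f)$.

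\emph{Lower bound.} Pick $s$ in the orbit closure attaining the maximum. Using the combinatorial description of the moment polytope (it equals the closure of the set of normalized supports achievable by monomial degenerations of powers of $f$), one obtains, for every $\eps>0$, an $N$, a combinatorial degeneration $s'$ of $f^{\otimes N}$, and a probability distribution $P$ on $\supp(s')$ with $H(P_i)\ge H(\rho_i(s))-\eps$ for all $i$. Restricting $(s')^{\otimes n}$ to the $P$-type class in each leg and invoking the Croot--Lev--Pach/Tao--Sawin slice rank bound (no nonzero slice can meet more than a $2^{-n\min_iH(P_i)+\smallo(n)}$ fraction of the resulting combinatorial diagonal) gives $\slicerank((s')^{\otimes n})\ge 2^{n\min_iH(P_i)-\smallo(n)}$. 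Since slice rank is lower semicontinuous and $\GL^{\times k}$-invariant, $\slicerank(f^{\otimes Nn})\ge\slicerank((s')^{\otimes n})$; letting $n\to\infty$ and then $\eps\to0$ yields $\asyslicerank(f)\ge 2^{\max_s\min_iH(\rho_i(s))}=\min_\theta F^\theta(f)$.

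The main obstacle is the moment-polytope input used in both directions — pinning down exactly which normalized partition tuples can occur (resp.\ be realized combinatorially) in the isotypic decomposition of $f^{\otimes n}$ — which requires the geometric invariant theory and representation theory underpinning \cite{CVZ18}; the Fekete step, the minimax reformulation, and the Croot--Lev--Pach counting are comparatively routine.
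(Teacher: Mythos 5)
This theorem is imported verbatim from \cite{CVZ18}; the paper offers no proof of its own, so your outline can only be compared with the argument in that reference. Your sketch of the two main bounds is faithful to it: the upper bound via the leg-wise isotypic decomposition of $f^{\otimes n}$ into $\poly(n)$ blocks, the estimate $\slicerank(g)\le\min_i\rank(\flatten_i g)\le 2^{nH(\bar\lambda^{(i)})+\smallo(n)}$, and the localization of the occurring $\bar\lambda^{(i)}$ in the moment polytope; the lower bound via degeneration and type-class counting; and the Sion minimax identity $\min_{\theta}E^{\theta}(f)=\max_{x}\min_i H(x^{(i)})$, which is legitimate because the moment polytope is convex and compact and the entropy is concave.

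There is, however, one genuinely wrong step: the opening claim that slice rank is submultiplicative, and the ensuing appeal to Fekete's lemma. Your justification breaks down exactly where it matters: if $f_j=v\otimes S$ is a slice term of $f$ with distinguished leg $i$ and $g_l=w\otimes T$ is a slice term of $g$ with distinguished leg $i'\neq i$, then $f_j\otimes g_l$ is in general \emph{not} a slice term of $f\otimes g$ --- its flattening at leg $i$ of the product has rank $1\cdot\rank(\flatten_i(g_l))$, which can be as large as $\dim W_i$, and no other leg does better. So ``regrouping by the pair of distinguished legs'' only disposes of the diagonal pairs $i=i'$. Submultiplicativity of slice rank is not a known fact (slice rank is also not supermultiplicative, as the W-tensor shows), and \cite{CVZ18} make a point of the fact that Fekete's lemma is \emph{not} available here: the existence of the limit is part of the content of the theorem. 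The repair is already implicit in the rest of your outline: the upper bound yields $\limsup_n\slicerank(f^{\otimes n})^{1/n}\le\min_{\theta}F^{\theta}(f)$, the lower bound yields $\liminf_n\slicerank(f^{\otimes n})^{1/n}\ge\min_{\theta}F^{\theta}(f)$, and existence of the limit follows from this sandwich rather than from subadditivity. With that correction your outline matches the strategy of the cited proof.
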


\subsection{Properties and relations}

Now we are ready to state the precise results on the symmetric quantum functional. These results in particular imply the three main results that we stated above in Theorem~\ref{th:main1}, Theorem~\ref{th:main2} and Theorem~\ref{th:main3}.

First of all, we prove that the symmetric quantum functional is at least the uniform quantum functional, and we show that the latter can be obtained as the regularization of the former:

\begin{theorem} \label{thm: bound_quantum_functional}
	Let $f \in \cH^{\otimes k}$ be any tensor. Let $\theta = \left( \frac{1}{k},\dots, \frac{1}{k} \right)$. Then
	\begin{align*}
		\lim_{n\ra \infty} \slicerank(f^{\otimes n})^{1/n} \leq F^{\theta}(f) \leq F(f) \text{ and } \lim_{n\ra \infty} F(f^{\otimes n})^{1/n} = \inf_{n}F(f^{\otimes n})^{1/n} = F^{\theta}(f). 
	\end{align*}
\end{theorem}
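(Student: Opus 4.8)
\textbf{Proof proposal for Theorem~\ref{thm: bound_quantum_functional}.}

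The plan is to establish the three claimed relations in order: first the sandwich $\lim_n \slicerank(f^{\otimes n})^{1/n} \leq F^\theta(f) \leq F(f)$, and then the regularization identity $\lim_n F(f^{\otimes n})^{1/n} = \inf_n F(f^{\otimes n})^{1/n} = F^\theta(f)$. The leftmost inequality $\lim_n \slicerank(f^{\otimes n})^{1/n} \leq F^\theta(f)$ is immediate from Theorem~\ref{prop: asymptotic_slice_rank_quantum}, since that limit equals $\min_{\theta'} F^{\theta'}(f) \leq F^{\theta}(f)$ for the uniform $\theta$. So the substantive content is the middle inequality and the regularization statement.

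For $F^\theta(f) \leq F(f)$ with $\theta = (1/k, \ldots, 1/k)$: by definition $E^\theta(f)$ maximizes $\frac{1}{k}\sum_i H(\rho_i(s))$ over $s \in \overline{\GL(d)^{\times k}\cdot f}\setminus\{0\}$, while $E(f)$ maximizes $H\bigl(\spec(\frac{1}{k}\sum_i \rho_i(s))\bigr)$ over $s$ in the \emph{smaller} orbit closure $\overline{\GL(d)\cdot f}\setminus\{0\}$ (the diagonal action). I would argue in two steps. First, over a common $s$, concavity of von Neumann entropy gives $H\bigl(\frac{1}{k}\sum_i \rho_i(s)\bigr) \geq \frac{1}{k}\sum_i H(\rho_i(s))$, so the averaged-state entropy is at least the average entropy. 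Second, I must handle the fact that $F^\theta$ optimizes over the larger (non-diagonal) orbit closure: I would show that for any $s$ in the big orbit closure there is a point $s'$ in the \emph{diagonal} orbit closure whose averaged marginal spectrum dominates — this is exactly where the structure of the moment polytope enters. Concretely, $\Delta(f)$ is the moment polytope for the diagonal $\GL(d)$-action (by \cite{Ness}, \cite{Bri87}), and the point achieving $E^\theta$ can be moved, via a one-parameter subgroup / Kempf--Ness type argument, into the diagonal orbit closure while only increasing the relevant entropy; alternatively, one observes that replacing $(g_1,\ldots,g_k)$ acting on $f$ by a single $g$ and using that $\overline{\GL(d)\cdot f} \subseteq \overline{\GL(d)^{\times k}\cdot f}$ together with the concavity bound already gives $F(f) \geq F^\theta(f)$ directly once we check $\Delta(f)$ contains the spectrum of the averaged marginal of the $E^\theta$-maximizer — which follows because the non-diagonal orbit closure has the same maximal averaged-marginal entropy point as can be realized diagonally, by symmetrization of $f$ under the $\mathfrak{S}_k$-action combined with Corollary~\ref{cor:congr-tens}-style reasoning is \emph{not} available for non-symmetric $f$, so I expect this to be the main obstacle and I would instead argue purely polytope-theoretically: the set of achievable averaged marginal spectra under the diagonal action is a face/subpolytope relation that still contains the uniform-$\theta$ optimizer of the product action.

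For the regularization identity: since $F$ is submultiplicative (I would first verify $F(f \otimes f') \leq F(f) F(f')$ from the definition — the orbit closure of a tensor product contains tensor products of orbit closure elements, marginals of a product state tensorize, and entropy is additive on tensor products, while $\spec$ of a sum-of-averaged-marginals of a product state relates to the Cartesian sum of the individual spectra, whose entropy is the sum), Fekete's lemma gives $\lim_n F(f^{\otimes n})^{1/n} = \inf_n F(f^{\otimes n})^{1/n}$. It remains to identify this limit with $F^\theta(f)$. The inequality $\inf_n F(f^{\otimes n})^{1/n} \leq F(f)$ is trivial at $n=1$, but I need $\geq F^\theta(f)$ and $\leq F^\theta(f)$. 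For $\geq$: $F(f^{\otimes n}) \geq F^\theta(f^{\otimes n}) = F^\theta(f)^n$ by the middle inequality applied to $f^{\otimes n}$ and multiplicativity of $F^\theta$ (shown in \cite{CVZ18}), hence $\lim_n F(f^{\otimes n})^{1/n} \geq F^\theta(f)$. For $\leq$: this is the crux — I would show $F(f^{\otimes n})^{1/n} \to F^\theta(f)$ from above using the representation-theoretic estimates of \cite{CVZ18}. The key point is that for $f^{\otimes n}$ the diagonal $\GL(d^n)$-orbit closure is large enough that its moment polytope $\Delta(f^{\otimes n})$, after normalization, approaches the product-action moment body; more precisely, using Schur--Weyl duality one shows that the maximal averaged-marginal entropy of $f^{\otimes n}$ grows like $n \cdot E^\theta(f) + o(n)$ because the extra freedom in the non-diagonal action can be asymptotically simulated by the diagonal action on a high tensor power (this is the standard "asymptotic equivalence of local and global operations" phenomenon underlying the quantum functionals). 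I expect this last step — quantifying how $\Delta(f^{\otimes n})/n$ converges to the uniform-weight value $E^\theta(f)$ — to require the Kronecker/plethysm asymptotics from \cite{CVZ18} and to be the main technical obstacle of the whole theorem.
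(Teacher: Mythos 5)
Your outline gets the easy parts right (the slice-rank inequality via Theorem~\ref{prop: asymptotic_slice_rank_quantum}, the lower bound $\lim_n F(f^{\otimes n})^{1/n}\geq F^{\theta}(f)$ from the middle inequality applied to powers plus multiplicativity of $F^{\theta}$, and Fekete's lemma once sub-multiplicativity is in hand), but the two substantive steps are left as acknowledged obstacles, and in both cases the paper's actual argument is different from and much more concrete than what you gesture at. For $F^{\theta}(f)\leq F(f)$, concavity of the von Neumann entropy only compares the two functionals at a \emph{common} point $s$; the real issue, which you correctly identify but do not resolve, is that $F$ optimizes over the diagonal orbit closure while $F^{\theta}$ optimizes over the product orbit closure. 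The paper bridges this not by any polytope face argument but by Schur--Weyl duality: both functionals are rewritten as suprema of $(\dim[\lambda])^{1/kn}$ resp.\ $(\prod_i\dim[\lambda^i])^{1/kn}$ over partitions whose isotypic projectors do not kill $f^{\otimes n}$, and then the branching of an $\mathfrak{S}_{nk}$-irreducible $[\lambda]$ to $\mathfrak{S}_n^{\times k}$ shows that whenever $(P_{\lambda^1}\otimes\cdots\otimes P_{\lambda^k})f^{\otimes n}\neq 0$ there is a $\lambda\vdash_d kn$ with $P_{\lambda}f^{\otimes n}\neq 0$ and $\dim[\lambda]\geq\prod_i\dim[\lambda^i]$. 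Without this (or an equivalent) step your middle inequality is unproved.

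The second gap is the upper bound $\lim_n F(f^{\otimes n})^{1/n}\leq F^{\theta}(f)$, which you flag as requiring Kronecker/plethysm asymptotics. In fact no such machinery is needed: the paper uses the elementary entropy inequality $H\bigl(\tfrac{1}{k}\sum_i\rho_i(s)\bigr)\leq\tfrac{1}{k}\sum_i H(\rho_i(s))+\log k$ to get the \emph{pointwise} bound $F(g)\leq k\,F^{\theta}(g)$ for every tensor $g$; applied to $g=f^{\otimes n}$ and combined with multiplicativity of $F^{\theta}$ this gives $F(f^{\otimes n})^{1/n}\leq k^{1/n}F^{\theta}(f)\to F^{\theta}(f)$. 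Note that your concavity step goes in the opposite direction of this inequality, so it cannot substitute for it. Finally, your sketch of sub-multiplicativity is also backwards: the containment ``orbit closure of a tensor product contains tensor products of orbit-closure elements'' yields a \emph{lower} bound on $F(s\otimes t)$, whereas sub-multiplicativity needs the reverse containment of moment polytopes, $\Delta(s\otimes t)\subseteq\Delta(s)\otimes_{\mathrm{Kron}}\Delta(t)$, together with the entropy inequality for non-vanishing Kronecker coefficients ($H(\bar{\mu})\leq H(\bar{\lambda})+H(\bar{\lambda}')$ when $P_{\mu}(P_{\lambda}\otimes P_{\lambda'})\neq 0$), which is how the paper proves it in the appendix.
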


Second, on symmetric tensors we prove the following even stronger connection between the symmetric quantum functional and the uniform quantum functional:

\begin{theorem}\label{thm:sym-unifom}
	Let $f \in \cH^{\otimes k}$ be a symmetric tensor. %
	Then
\begin{align*}
\lim_{n\ra \infty} \slicerank(f^{\otimes n})^{1/n} = F^{(1/k, \ldots, 1/k)}(f) = F(f). 
\end{align*}
\end{theorem}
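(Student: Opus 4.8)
The chain to establish is $\lim_{n\to\infty}\slicerank(f^{\otimes n})^{1/n} = F^{(1/k,\ldots,1/k)}(f) = F(f)$ for symmetric $f$. By Theorem~\ref{thm: bound_quantum_functional} (which is proved for arbitrary tensors) we already have $\lim_{n\to\infty}\slicerank(f^{\otimes n})^{1/n} \leq F^{\theta}(f) \leq F(f)$ with $\theta = (1/k,\ldots,1/k)$. So the entire content of the present theorem is to close the loop by proving the reverse inequality $F(f) \leq \lim_{n\to\infty}\slicerank(f^{\otimes n})^{1/n}$ when $f$ is symmetric; equivalently, $F(f) = F^{\theta}(f)$ and the asymptotic slice rank agrees. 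The plan is to show $F(f) \leq \lim_{n\to\infty} \slicerank(f^{\otimes n})^{1/n}$ directly, using the asymptotic results of Section~\ref{subsec:rel-asymp}.

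\textbf{Main argument.} First I would record that $F$, restricted to symmetric tensors, is monotone under the symmetric restriction preorder $\symleq$ and multiplicative under $\otimes$ (this is part of Theorem~\ref{th:main2}/the properties of $F$, provable in the same way as for the $F^\theta$ in \cite{CVZ18} via the orbit-closure formulation). Hence $F$ is monotone under the \emph{asymptotic} symmetric restriction preorder $\symasympleq$ as well. Now by Theorem~\ref{th:asympsymmrestr}, on symmetric tensors $\symasympleq$ coincides with the ordinary asymptotic restriction preorder $\asympleq$; and more to the point, for a symmetric tensor $f$ with $\slicerank(f^{\otimes n}) \geq r$ we have $f^{\otimes n} \geq \langle r\rangle$ only after we replace slice rank by subrank — so instead I would run the slice-rank version of Lemma~\ref{lem:make-sym}. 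Concretely: if $g$ is obtained from $f^{\otimes n}$ by a slice-rank-$1$ decomposition witness, the same trick of tensoring with the fully symmetric tensor $h$ and combining with a power $f^{\otimes c} \symgeq h$ from Lemma~\ref{lem:create-t} converts any slice-rank upper bound on $f^{\otimes n}$ into a \emph{symmetric} slice-rank upper bound on $f^{\otimes n+c}$. Since $F$ bounds the asymptotic symmetric slice rank from below (exactly as $\min_\theta F^\theta$ bounds the ordinary asymptotic slice rank in Theorem~\ref{prop: asymptotic_slice_rank_quantum}, but now specialized because $F$ is \emph{one} functional, not a minimum), taking $n\to\infty$ gives $F(f) \leq \lim_{n\to\infty}\slicerank(f^{\otimes n})^{1/n}$.

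\textbf{Executing the lower bound on $F$ via asymptotic slice rank.} The cleanest route is: on symmetric tensors, $F$ is a symmetric spectral point, so $F \geq \symasympsubrank$ on symmetric tensors by the duality Theorem~\ref{thm:asymptoticspectrum}; but we need the stronger statement involving slice rank, not subrank. For that I would invoke the known fact from \cite{CVZ18} that $F^{\theta}$ upper-bounds the asymptotic slice rank for every $\theta$, and the asymptotic slice rank upper-bounds the asymptotic symmetric slice rank trivially (symmetric slice rank $\leq$ slice rank); combined with the reduction of the previous paragraph (symmetric slice rank of $f^{\otimes n+c}$ is at most the slice rank of $f^{\otimes n}$ times a constant factor from $s = \symrank(h) \leq 2^{k-1}$, as in the proof of Theorem~\ref{th:asymprank}), we get asymptotic symmetric slice rank $=$ asymptotic slice rank on symmetric tensors. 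It then remains to check $F(f) \leq$ asymptotic symmetric slice rank of $f$; this follows from the orbit-closure definition of $F$ the same way the corresponding inequality is proved in \cite{CVZ18} for $\min_\theta F^\theta$, since the marginal-entropy maximization defining $E(f)$ is exactly the uniform-$\theta$ slice-rank bound applied in the symmetric category.

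\textbf{Expected main obstacle.} The delicate point is not the inequality chain but verifying that the symmetric quantum functional $F$ genuinely satisfies the slice-rank lower bound $F(f) \leq \lim_n \symslicerank(f^{\otimes n})^{1/n}$ — i.e.\ porting the $\min_\theta F^\theta = $ asymptotic slice rank argument of \cite{CVZ18} to the symmetric setting, where we have only a single functional built from the \emph{averaged} marginal $\tfrac{1}{k}(\rho_1 + \cdots + \rho_k)$ rather than a family. One must check that the relevant support/rounding argument (restricting $f^{\otimes n}$ to a combinatorial rectangle on which it becomes a symmetric diagonal tensor of the right size) goes through with $A = A^{\otimes k}$ forced to be equal in all legs; this is exactly where symmetry of $f$ and the $\characteristic \geq k+1$ hypothesis (inherited from Lemmas~\ref{lem:make-sym} and~\ref{lem:create-t}) are used. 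Once that is in place, the sandwich with Theorem~\ref{thm: bound_quantum_functional} closes the proof.
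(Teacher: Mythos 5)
Your reduction of the problem is right --- Theorem~\ref{thm: bound_quantum_functional} already gives $\lim_n \slicerank(f^{\otimes n})^{1/n} \leq F^{(1/k,\ldots,1/k)}(f) \leq F(f)$, so everything hinges on the reverse inequality $F(f) \leq \lim_n \slicerank(f^{\otimes n})^{1/n}$ --- but the route you propose for that remaining inequality has a genuine gap. Your chain is $F \leq$ (asymptotic ``symmetric slice rank'') $=$ (asymptotic slice rank), and you yourself flag the first link as the delicate point, to be obtained by ``porting'' the CVZ18 slice-rank argument to a single functional built from the averaged marginal $\tfrac{1}{k}(\rho_1+\cdots+\rho_k)$. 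That porting is never executed, and it is not a routine adaptation: the argument behind Theorem~\ref{prop: asymptotic_slice_rank_quantum} uses the whole family $\{F^\theta\}$ and per-leg marginal entropies, and it is not clear it survives when all legs are coupled by forcing the restriction to be $A^{\otimes k}$. (There is also a direction error in your sketch: a slice-rank variant in which the maps on all legs are forced to be equal admits fewer decompositions and so can only be \emph{larger} than the ordinary slice rank, not smaller as you assert.) As written, the proof is therefore incomplete, and the machinery you invoke (Lemmas~\ref{lem:make-sym} and~\ref{lem:create-t}, Theorem~\ref{th:asympsymmrestr}) does not by itself supply the missing inequality.

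What the paper does instead, and what your proposal misses, is a short observation at the level of the definitions. Since $f$ is symmetric, every element of $\overline{\GL(d)\cdot f}\setminus\{0\}$ is a symmetric tensor and lies in $\overline{\GL(d)^{\times k}\cdot f}\setminus\{0\}$; and for a symmetric tensor $s$ all marginals coincide, $\rho_1(s)=\cdots=\rho_k(s)$, so the objective $H\bigl(\tfrac{1}{k}(\rho_1(s)+\cdots+\rho_k(s))\bigr)$ defining $E(f)$ equals $\sum_i\theta(i)H(\rho_i(s))$ for \emph{every} $\theta\in\cP([k])$. Hence $F(f)$ is the supremum of the same objective as $F^{\theta}(f)$ but taken over a smaller set, so $F(f)\leq F^{\theta}(f)$ for all $\theta$, and therefore $F(f)\leq\min_{\theta}F^{\theta}(f)=\lim_n\slicerank(f^{\otimes n})^{1/n}$ by Theorem~\ref{prop: asymptotic_slice_rank_quantum}. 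Combined with the sandwich from Theorem~\ref{thm: bound_quantum_functional}, all quantities collapse to equality, with no symmetric slice rank and no appeal to Section~\ref{subsec:rel-asymp} needed.
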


Third, from the equality $F = F^{(1/k, \ldots, 1/k)}$ on symmetric tensors (Theorem~\ref{thm:sym-unifom}), and the known properties of $F^{(1/k, \ldots, 1/k)}$, we directly obtain all of the following properties of the symmetric quantum functional $F$:

\begin{corollary}\label{cor:asympspec}
	For any symmetric~$f \in (\CC^{d})^{\otimes k}$ and $g \in (\CC^{e})^{\otimes k}$, and any $r \in \NN$, we have
	\begin{enumerate}
		\item $F(\left \langle r \right \rangle) =r$
		\item $F(f \oplus g) = F(f)+F(g)$
		\item $F(f\otimes g) = F(f)F(g)$
		\item if $f\symleq g$ then $F(f) \leq F(g)$.
	\end{enumerate}
\end{corollary}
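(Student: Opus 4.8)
\textbf{Proof proposal for Corollary~\ref{cor:asympspec}.}
The plan is to derive all four properties as immediate consequences of Theorem~\ref{thm:sym-unifom}, which tells us that on symmetric tensors $F$ agrees with the uniform quantum functional $F^{(1/k,\ldots,1/k)}$, together with the already-established fact that each quantum functional $F^\theta$ belongs to the asymptotic spectrum of tensors $X$ (from \cite{CVZ18}). In other words, the strategy is not to prove anything new about $F$ directly, but to transport the known axiomatic properties of $F^{(1/k,\ldots,1/k)} \in X$ across the equality $F = F^{(1/k,\ldots,1/k)}$.

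First I would fix symmetric tensors $f \in (\CC^d)^{\otimes k}$ and $g \in (\CC^e)^{\otimes k}$, and set $\theta = (1/k,\ldots,1/k)$. By Theorem~\ref{thm:sym-unifom} we have $F(f) = F^{\theta}(f)$, $F(g) = F^{\theta}(g)$, and also $F(f\oplus g) = F^\theta(f \oplus g)$ and $F(f \otimes g) = F^\theta(f\otimes g)$, provided $f \oplus g$ and $f \otimes g$ are again symmetric — which they are, since a direct sum and a tensor product of symmetric tensors is symmetric (the tensor product is symmetric under the natural identification permuting the two copies of each leg, which is exactly the convention used for the symmetric semiring here). Similarly $\langle r\rangle$ is symmetric, so $F(\langle r\rangle) = F^\theta(\langle r\rangle)$. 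Then item (1) follows from $F^\theta(\langle r\rangle) = r$ (normalization plus additivity on the diagonal tensor, a defining property of elements of $X$); items (2) and (3) follow from additivity of $F^\theta$ under $\oplus$ and multiplicativity under $\otimes$; and for item (4), if $f \symleq g$ then in particular $f \leq g$ (a symmetric restriction is a restriction), so monotonicity of $F^\theta$ under the ordinary restriction preorder gives $F^\theta(f) \leq F^\theta(g)$, hence $F(f) \leq F(g)$.

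The only genuinely content-bearing step is the appeal to Theorem~\ref{thm:sym-unifom}; everything after that is bookkeeping. The one point that warrants a line of care is checking that the constructions $\oplus$, $\otimes$, and $\langle r\rangle$ stay inside the class of symmetric tensors so that Theorem~\ref{thm:sym-unifom} applies to them — this is routine but should be stated. I do not anticipate a real obstacle here; the corollary is explicitly labelled as a corollary of Theorem~\ref{thm:sym-unifom}, and the proof is essentially a one-paragraph reduction.
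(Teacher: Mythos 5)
Your proposal is correct and follows exactly the route the paper intends: the paper gives no separate proof of Corollary~\ref{cor:asympspec}, stating only that it follows ``from the equality $F = F^{(1/k,\ldots,1/k)}$ on symmetric tensors (Theorem~\ref{thm:sym-unifom}) and the known properties of $F^{(1/k,\ldots,1/k)}$,'' which is precisely your reduction. Your added care in checking that $\langle r\rangle$, $f\oplus g$, and $f\otimes g$ remain symmetric, and that $f\symleq g$ implies $f\leq g$ for item (4), fills in the bookkeeping the paper leaves implicit.
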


Therefore, the symmetric quantum functional belongs to the asymptotic spectrum of symmetric tensors $X_\mathrm{s}$, which we discussed in Section~\ref{sec:asympspec}.\par
\bigskip
We will now give the proofs of the above Theorem~\ref{thm: bound_quantum_functional} and Theorem~\ref{thm:sym-unifom}. 
We will need another
characterization of $\Delta(f)$ from representation theory. Let $\lambda$ be a partition of $nk$ into at most $d$ parts. We denote this by $\lambda \vdash_{d} nk$. Then $\bar{\lambda}\coloneqq\lambda/nk = (\lambda_1/nk,\dots,\lambda_d/nk)$ is a probability distribution on $[d]$. The symmetric group $\mathfrak{S}_{nk}$ acts on $(\cH^{\otimes k})^{\otimes n}$ by permuting the tensor legs, that is, $\pi \cdot (v_1 \otimes \dots \otimes v_{nk}) = v_{\pi^{-1}(1)} \otimes \dots \otimes v_{\pi^{-1}(nk)}$ for $\pi \in \mathfrak{S}_{nk}$. The general linear group $\GL(d)$ acts on $(\cH^{\otimes k})^{\otimes n}$ via the diagonal embedding $\GL(d) \ra \GL(d)^{\times nk}: g \mapsto (g,\dots,g)$, that is, $g\cdot v = (g\otimes \dots \otimes g)v$ for $g\in \GL(d), v \in (\cH^{\otimes k})^{\otimes n}$. The Schur--Weyl duality gives a decomposition of the space $(\cH^{\otimes k})^{\otimes n}$ into direct sum of irreducible $\mathfrak{S}_{nk} \times \GL(d)$ representations. More precisely,
\begin{align*}
(\cH^{\otimes k})^{\otimes n} \cong \bigoplus_{\lambda \vdash_{d} nk} [\lambda] \otimes \mathbb{S}_{\lambda}(\cH), 
\end{align*} 
where $\mathbb{S}_{\lambda}(\cH)$ is an irreducible representation of $\GL(d)$ and $[\lambda]$ is an irreducible representation of $\mathfrak{S}_{nk}$. Let $P_{\lambda}: (\cH^{\otimes k})^{\otimes n} \ra (\cH^{\otimes k})^{\otimes n}$ be the equivariant projector onto the isotypical component of type $\lambda$, that is, onto the subspace of $(\cH^{\otimes k})^{\otimes n}$ which isomorphic to $\mathbb{S}_{\lambda}(\cH) \otimes [\lambda]$. Based on~\cite{Bri87},~\cite{Franz},~\cite{Str05} (and also \cite[Section 2.1]{MichaelThesis} and \cite[Chapter 6]{Zui18_thesis}) we have the following characterization of $\Delta(f)$.
\begin{lemma}
	\label{moment_polytope_dual} 
	The polytope $\Delta(f)$ is the Euclidean closure of the set
	\begin{align*}
	\left\{\frac{\lambda}{nk}: \exists n \in \NN_{\geq 1}, \lambda \vdash_{d}nk, P_{\lambda}f^{\otimes n} \neq 0 \right\}.
	\end{align*}
\end{lemma}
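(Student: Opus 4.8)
The plan is to prove Lemma~\ref{moment_polytope_dual} in two steps: first identify $\Delta(f)$ with the so-called \emph{moment polytope} of the orbit closure $\overline{\GL(d)\cdot f}$ under the diagonal action of $\GL(d)$ on $\cH^{\otimes k}$, and then invoke the representation-theoretic description of moment polytopes in terms of nonvanishing of isotypical projectors. The key subtlety compared to the standard quantum-functional story in \cite{CVZ18} is that here the group acting is a single copy of $\GL(d)$ embedded diagonally (rather than $\GL(d)^{\times k}$), so the relevant marginal is the \emph{average} $\frac{1}{k}(\rho_1(s)+\cdots+\rho_k(s))$ rather than the tuple of individual marginals; I need to make sure the Kirwan/Ness and Brion machinery applies verbatim to this diagonal action, which it does since it is still a reductive group acting linearly on a vector space.

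First I would recall the general fact (Kirwan, Ness \cite{Ness}, Brion \cite{Bri87}) that for a connected reductive group $G$ acting linearly on a finite-dimensional complex vector space $V$, and $v \in V$ nonzero, the set of spectra of the image of the moment map over the orbit closure $\overline{G\cdot v}$ is a convex polytope, equal to the ``moment polytope'' $\Delta_G(v)$. Applying this with $G = \GL(d)$ acting diagonally on $V = \cH^{\otimes k}$: the moment map for this action, evaluated at a unit vector $s$, computes exactly the averaged marginal $\frac{1}{k}(\rho_1(s)+\cdots+\rho_k(s))$ (since the derivative of the diagonal action is the sum of the derivatives of the $k$ leg-wise actions). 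Taking spectra and the closure over $s \in \overline{\GL(d)\cdot f}\setminus\{0\}$ gives precisely $\Delta(f)$ as defined. So $\Delta(f)$ is the moment polytope $\Delta_{\GL(d)}(f)$, which already re-proves the convexity remark cited just before the lemma.

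Second I would translate the geometric description of $\Delta_{\GL(d)}(f)$ into the representation-theoretic one. By the Borel--Weil/highest-weight dictionary, for the diagonal $\GL(d)$-action, the relevant representations appearing in $\mathrm{Sym}$ or tensor powers are exactly the Schur modules $\mathbb{S}_\lambda(\cH)$, and by Schur--Weyl duality (as set up in the excerpt) $(\cH^{\otimes k})^{\otimes n} \cong \bigoplus_{\lambda \vdash_d nk} [\lambda] \otimes \mathbb{S}_\lambda(\cH)$. The general principle (Brion \cite{Bri87}, Franz \cite{Franz}, Strassen \cite{Str05}; see also \cite[Chapter 6]{Zui18_thesis}) is that a rational point $\mu$ lies in the moment polytope of $v$ if and only if, writing $\mu = \lambda/(nk)$ with $\lambda \vdash_d nk$, the isotypical projection $P_\lambda$ does not annihilate $v^{\otimes n}$; and the full polytope is the Euclidean closure of these rational points. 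This is exactly the claimed description. I would cite these sources for the equivalence rather than reprove it, noting that the only thing to check is that our normalization $\lambda/(nk)$ (dividing by $nk = $ total number of tensor legs of $f^{\otimes n}$) matches the normalization under which the moment map image is a probability vector — which it is, since $|\lambda| = nk$ and the averaged marginal has trace $1$.

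The main obstacle is purely expository: making sure the dictionary between (a) spectra of averaged marginals, (b) the moment map of the diagonal $\GL(d)$-action, and (c) nonvanishing of $P_\lambda$ is stated with the correct scaling and over the correct group, since all three appear in the literature primarily for the $\GL(d)^{\times k}$ action with per-leg marginals. There is no genuinely new mathematical content — it is the $k=1$-group specialization of the standard theory — but care is needed so that the factor $1/k$ and the index $nk$ are consistent throughout. I do not expect any hard estimate; the convexity and the polytopality are imported wholesale from \cite{Ness,Bri87}, and the representation-theoretic reformulation from \cite{Bri87,Franz,Str05,Zui18_thesis}.
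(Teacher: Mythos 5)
Your proposal follows essentially the same route as the paper's own proof in Appendix~A: compute the moment map of the diagonal $\GL(d)$-action on $\cH^{\otimes k}$ to identify it with the (averaged) sum of marginals, cite the Ness/Brion/Franz/Strassen description of the moment polytope via nonvanishing of isotypical projectors, and use Schur--Weyl duality to see that the relevant highest weights are exactly the partitions $\lambda \vdash_d nk$. Your explicit attention to the $1/k$ normalization is a welcome refinement, since the paper's appendix is slightly inconsistent on this point (writing $\lambda/n$ where $\lambda/(nk)$ is meant), but the argument is the same.
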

\begin{proof}
	See Appendix~\ref{moment polytope}.
\end{proof}

\begin{proof}[Proof of Theorem~\ref{thm: bound_quantum_functional}]
We decompose $\cH^{\otimes n}$ into a direct sum of irreducible $\mathfrak{S}_n \times \GL(d)$ representations as
\begin{align} \label{Schur--Weyl_duality_decomposition}
\cH^{\otimes n} \cong \bigoplus_{\lambda \vdash_{d} n}[\lambda] \otimes \mathbb{S}_{\lambda}(\cH).
\end{align} 
Let $P_{\lambda}$ be the equivariant projector onto the isotypical component of type $\lambda$. 
The uniform quantum functional $F^{(\frac{1}{k},\dots,\frac{1}{k})}(f)$ has another characterization as follows \cite{CVZ18}:
\begin{align*}
F^{(\frac{1}{k},\dots,\frac{1}{k})}(f) = \sup \Bigl\{\Bigl(\prod_{i=1}^{k}\dim[\lambda^{i}] \Bigr)^{1/kn} : \exists n \in \NN_{\geq 1},\, \lambda^{i} \vdash_{d}n, (P_{\lambda^1}\otimes \dots \otimes P_{\lambda^k})f^{\otimes n} \neq 0 \Bigr\}.
\end{align*}
For the symmetric quantum functional, using the characterization of $\Delta(f)$ from representation theory, we have 
\begin{align*}
F(f) = \sup \left\{\left(\dim[\lambda] \right)^{1/kn}: \exists n \in \NN_{\geq 1}, \lambda \vdash kn, P_{\lambda}f^{\otimes n} \neq 0 \right\}. 
\end{align*}
We may write $(\cH^{\otimes n})^{\otimes k}$ as a direct sum of irreducibles under the action of $\mathfrak{S}_{nk}$ as
\begin{align} 
\label{decomposition_symmetric}
\left(\cH^{\otimes n} \right)^{\otimes k} \cong \bigoplus_{\lambda \vdash_{d} kn} \left([\lambda] \right)^{\oplus m_\lambda}
\end{align}
where $m_\lambda = \dim \left(\mathbb{S}_{\lambda}(\cH) \right)$. 
We view $\mathfrak{S}_n^{\times k}$ naturally as a subgroup of $\mathfrak{S}_{nk}$.
For any $\lambda \vdash_{d} kn$ the restriction of $[\lambda]$ to the action of $\mathfrak{S}_n^{\times k}$ decomposes further as a direct sum of irreducibles under the action of $\mathfrak{S}_n^{\times k}$, so that
\begin{align}
\label{decompose_lambda}
[\lambda] \cong \bigoplus_{\lambda^1 \vdash_{d} n, \dots, \lambda^k \vdash_{d} n} \left([\lambda^1] \otimes \dots \otimes [\lambda^k] \right)^{\oplus c_{\lambda^1,\dots,\lambda^{k}}}
\end{align}
where $c_{\lambda^1,\dots,\lambda^{k}}$ are multiplicities.
Let $\lambda^1 \vdash_{d} n, \dots, \lambda^k \vdash_{d} n$. Then $[\lambda^1] \otimes \dots \otimes [\lambda^k]$ is irreducible representation of $\mathfrak{S}_n \times \dots \times \mathfrak{S}_n$. This gives us the finer decomposition into irreducibles under the action of $\mathfrak{S}_n^{\times k}$ as
\begin{align}
\left(\cH^{\otimes n} \right)^{\otimes k} \cong \bigoplus_{\lambda^1 \vdash_{d}n,\dots, \lambda^k \vdash_{d}n} \left([\lambda^1] \otimes \dots \otimes [\lambda^k] \right)^{\oplus m_{\lambda^1,\dots,\lambda^k}}
\end{align}
where $m_{\lambda^1,\dots,\lambda^k} = \prod_{i=1}^{k}\dim \left(\mathbb{S}_{\lambda_i}(\cH) \right)$. 

For any $n$ and $\lambda^{1} \vdash_{d}n ,\dots,\lambda^k \vdash_{d} n$ such that $\left( P_{\lambda^1} \otimes \dots \otimes P_{\lambda^k}\right)f^{\otimes n} \neq 0$ the equivariant projection of $f^{\otimes n}$ on 
\[
	\left([\lambda^1] \otimes \dots \otimes [\lambda^k] \right)^{\oplus m_{\lambda^1,\dots,\lambda^k}}
\]
is non-zero. 
From \eqref{decompose_lambda} we know that there is a $\lambda \vdash_{d} kn$ such that $[\lambda^1] \otimes \dots \otimes [\lambda^k]$ is a subspace of $[\lambda]$. For this $\lambda$ it holds that $P_{\lambda}f^{\otimes n} \neq 0$ and $\dim[\lambda] \geq \prod_{i=1}^{k} \dim\left( [\lambda^i]\right)$. This implies $F(f) \geq F^{(\frac{1}{k},\dots,\frac{1}{k})}(f)$.

For any tensor $s \in \cH^{\otimes k}$, it follows from a standard property of the von Neumann entropy~\cite[Theorem~11.10]{NC02} that
\begin{align*}
H\biggl(\frac{\rho_1(s) + \dots + \rho_{k}(s)}{k}\biggr) \leq \frac{H(\rho_{1}(s))+\dots + H(\rho_{k}(s))}{k} + \log k.
\end{align*}
This implies $F(f) \leq kF^{(\frac{1}{k},\dots,\frac{1}{k})}(f)$. Thus we have proven that 
\[
	F^{(\frac{1}{k},\dots,\frac{1}{k})}(f) \leq F(f) \leq k F^{(\frac{1}{k},\dots,\frac{1}{k})}(f)
\]
holds for every tensor $f$. In particular, applying this to the tensor power $f^{\otimes n}$ we have
\begin{align*}
F^{(\frac{1}{k},\dots,\frac{1}{k})}(f^{\otimes n}) \leq F(f^{\otimes n}) \leq kF^{(\frac{1}{k},\dots,\frac{1}{k})}(f^{\otimes n}).
\end{align*}
Since $F^{(\frac{1}{k},\dots,\frac{1}{k})}$ is multiplicative \cite{CVZ18}, we have
\begin{align*}
F^{(\frac{1}{k},\dots,\frac{1}{k})}(f) \leq F(f^{\otimes n})^{1/n} \leq k^{1/n}F^{(\frac{1}{k},\dots,\frac{1}{k})}(f).
\end{align*}
Taking $n \ra \infty$, we obtain $\lim_{n\ra \infty}F(f^{\otimes n})^{1/n} = F^{(\frac{1}{k},\dots,\frac{1}{k})}(f)$.

Finally, since $F$ is sub-multiplicative (Appendix~\ref{proof_quantumfunctional}), the limit $\lim_{n\ra \infty}F(f^{\otimes n})^{1/n}$ equals the infimum $\inf_{n} F(f^{\otimes n})^{1/n}$ by Fekete's lemma.
\end{proof}

\begin{proof}[Proof of Theorem~\ref{thm:sym-unifom}]
	Let $S$ be the set of symmetric tensors in $\overline{\left( \GL(d)^{\times k}\right) \cdot f} \setminus \{0\} $. Since~$f$ is a symmetric tensor, for any matrix $A$ the tensor $(A \otimes \dots \otimes A)f$ is also a symmetric tensor. Therefore $\overline{\GL(d) \cdot f} \setminus \{0\} \subseteq S$. Moreover, if $s$ is a symmetric tensor then all marginal density matrices are equal: $\rho_1(s) = \dots = \rho_{k}(s)$. Thus, for any $\theta \in \cP([k])$, we have $E^{\theta}(s) = \rho_1(s)$. This implies $F(f) \leq F^{\theta}(f)$ since both $F(f)$ and $F^{\theta}(f)$ are given by the supremum of the same function and for $F(f)$ the supremum is taken over a smaller set than for $F^{\theta}(f)$. 
	By Theorem~\ref{thm: bound_quantum_functional} we have $F(f) = F^{\theta}(f)$ with $\theta = (\frac{1}{k}, \dots, \frac{1}{k})$. Moreover, from the Proposition \ref{prop: asymptotic_slice_rank_quantum} we have $\lim_{n\ra \infty} \slicerank(f^{\otimes n})^{1/n} = \min_{\theta \in \cP([k])}F^{\theta}(f) \geq F(f)$, which implies $ \lim_{n\ra \infty} \slicerank(f^{\otimes n})^{1/n} = F(f)$. This proves the claim.
\end{proof}

\section*{Acknowledgements}
OF and HT acknowledge funding from the European Research Council (ERC Grant Agreement No.\ 851716).  The research of HT is supported by the LABEX MILYON (ANR-10-LABX-0070) of Universit\'e de Lyon, within the program ``Investissements d'Avenir'' (ANR-11-IDEX-0007) operated by the French National Research Agency (ANR). 
MC acknowledges financial support from the European Research Council (ERC Grant Agreement No.~81876), VILLUM FONDEN via the QMATH Centre of Excellence (Grant No.~10059) and the Novo Nordisk Foundation (grant NNF20OC0059939 `Quantum for Life').
JZ was partially supported by a Simons Junior Fellowship and NWO Veni grant VI.Veni.212.284.

\phantomsection
\addcontentsline{toc}{section}{Bibliography}
\bibliographystyle{plainurl}
\bibliography{bibliofile}

\begin{thebibliography}{10}

\bibitem{BALLANTINE1968261}
C.~S. Ballantine.
\newblock Triangularizing matrices by congruence.
\newblock {\em Linear Algebra and its Applications}, 1(2):261 -- 280, 1968.
\newblock \href {https://doi.org/10.1016/0024-3795(68)90007-4}
  {\path{doi:10.1016/0024-3795(68)90007-4}}.

\bibitem{Belitskii07}
Genrich Belitskii and Vladimir Sergeichuk.
\newblock Congruence of multilinear forms.
\newblock {\em Linear Algebra and its Applications}, 418(2):751--762, 2006.
\newblock \href {https://doi.org/10.1016/j.laa.2006.03.003}
  {\path{doi:10.1016/j.laa.2006.03.003}}.

\bibitem{blaser2013fast}
Markus Bl{\"a}ser.
\newblock {\em Fast Matrix Multiplication}.
\newblock Number~5 in Graduate Surveys. Theory of Computing Library, 2013.
\newblock \href {https://doi.org/10.4086/toc.gs.2013.005}
  {\path{doi:10.4086/toc.gs.2013.005}}.

\bibitem{Matrix_capset}
Jonah Blasiak, Thomas Church, Henry Cohn, Joshua~A. Grochow, Eric Naslund,
  William~F. Sawin, and Chris Umans.
\newblock On cap sets and the group-theoretic approach to matrix
  multiplication.
\newblock {\em Discrete Anal.}, 2017.
\newblock \href {http://arxiv.org/abs/1605.06702} {\path{arXiv:1605.06702}},
  \href {https://doi.org/10.19086/da.1245} {\path{doi:10.19086/da.1245}}.

\bibitem{Bri87}
Michel Brion.
\newblock Sur l'image de l'application moment.
\newblock In {\em S\'{e}minaire d'alg\`ebre {P}aul {D}ubreil et {M}arie-{P}aule
  {M}alliavin ({P}aris, 1986)}, volume 1296 of {\em Lecture Notes in Math.},
  pages 177--192. Springer, Berlin, 1987.
\newblock \href {https://doi.org/10.1007/BFb0078526}
  {\path{doi:10.1007/BFb0078526}}.

\bibitem{burg}
Peter B{\"u}rgisser.
\newblock {\em Degenerationsordnung und Tr{\"a}gerfunktional bilinearer
  Abbildungen}.
\newblock PhD thesis, Universit{\"a}t Konstanz, 1990.
\newblock \url{http://nbn-resolving.de/urn:nbn:de:bsz:352-opus-20311}.

\bibitem{PMM97}
Peter B\"{u}rgisser, Michael Clausen, and M.~Amin Shokrollahi.
\newblock {\em Algebraic complexity theory}, volume 315 of {\em Grundlehren der
  Mathematischen Wissenschaften}.
\newblock Springer-Verlag, Berlin, 1997.
\newblock \href {https://doi.org/10.1007/978-3-662-03338-8}
  {\path{doi:10.1007/978-3-662-03338-8}}.

\bibitem{Brgisser2019TowardsAT}
Peter B{\"{u}}rgisser, Cole Franks, Ankit Garg, Rafael~Mendes de~Oliveira,
  Michael Walter, and Avi Wigderson.
\newblock Towards a theory of non-commutative optimization: Geodesic 1st and
  2nd order methods for moment maps and polytopes.
\newblock In {\em Proceedings of the 60th {IEEE} Annual Symposium on
  Foundations of Computer Science (FOCS 2019)}, pages 845--861, 2019.
\newblock \href {https://doi.org/10.1109/FOCS.2019.00055}
  {\path{doi:10.1109/FOCS.2019.00055}}.

\bibitem{DBLP:journals/corr/abs-2111-08262}
Matthias Christandl, Omar Fawzi, Hoang Ta, and Jeroen Zuiddam.
\newblock Larger corner-free sets from combinatorial degenerations.
\newblock {\em CoRR}, abs/2111.08262, 2021.
\newblock \href {http://arxiv.org/abs/2111.08262} {\path{arXiv:2111.08262}}.

\bibitem{christandl2020weighted}
Matthias Christandl, Vladimir Lysikov, and Jeroen Zuiddam.
\newblock Weighted slice rank and a minimax correspondence to strassen's
  spectra, 2020.
\newblock \href {http://arxiv.org/abs/2012.14412} {\path{arXiv:2012.14412}}.

\bibitem{CM06}
Matthias Christandl and Graeme Mitchison.
\newblock The spectra of quantum states and the {K}ronecker coefficients of the
  symmetric group.
\newblock {\em Comm. Math. Phys.}, 261(3):789--797, 2006.
\newblock \href {https://doi.org/10.1007/s00220-005-1435-1}
  {\path{doi:10.1007/s00220-005-1435-1}}.

\bibitem{CVZ18}
Matthias Christandl, P{\'{e}}ter Vrana, and Jeroen Zuiddam.
\newblock Universal points in the asymptotic spectrum of tensors.
\newblock In {\em Proceedings of the 50th Annual {ACM} {SIGACT} Symposium on
  Theory of Computing (STOC~2018)}, pages 289--296, 2018.
\newblock \href {http://arxiv.org/abs/1709.07851} {\path{arXiv:1709.07851}},
  \href {https://doi.org/10.1145/3188745.3188766}
  {\path{doi:10.1145/3188745.3188766}}.

\bibitem{CGLM08}
Pierre Comon, Gene Golub, Lek-Heng Lim, and Bernard Mourrain.
\newblock Symmetric tensors and symmetric tensor rank.
\newblock {\em SIAM Journal on Matrix Analysis and Applications},
  30(3):1254--1279, 2008.
\newblock \href {https://doi.org/10.1137/060661569}
  {\path{doi:10.1137/060661569}}.

\bibitem{derksen2020gstable}
Harm Derksen.
\newblock The {G}-stable rank for tensors, 2020.
\newblock \href {http://arxiv.org/abs/2002.08435} {\path{arXiv:2002.08435}}.

\bibitem{EG17}
Jordan~S. Ellenberg and Dion Gijswijt.
\newblock On large subsets of {$\mathbb{F}^n_q$} with no three-term arithmetic
  progression.
\newblock {\em Ann. of Math.}, 185(1):339--343, 2017.
\newblock \href {https://doi.org/10.4007/annals.2017.185.1.8}
  {\path{doi:10.4007/annals.2017.185.1.8}}.

\bibitem{Franz}
Matthias Franz.
\newblock Moment polytopes of projective {$G$}-varieties and tensor products of
  symmetric group representations.
\newblock {\em J. Lie Theory}, 12(2):539--549, 2002.

\bibitem{Fulton1991RepresentationTA}
William Fulton and Joe Harris.
\newblock {\em Representation theory}, volume 129 of {\em Graduate Texts in
  Mathematics}.
\newblock Springer-Verlag, New York, 1991.
\newblock \href {https://doi.org/10.1007/978-1-4612-0979-9}
  {\path{doi:10.1007/978-1-4612-0979-9}}.

\bibitem{DBLP:journals/corr/abs-2012-04679}
Runshi Geng and J.~M. Landsberg.
\newblock On the geometry of geometric rank.
\newblock {\em CoRR}, abs/2012.04679, 2020.
\newblock \href {http://arxiv.org/abs/2012.04679} {\path{arXiv:2012.04679}}.

\bibitem{goodman2009symmetry}
Roe Goodman and Nolan~R. Wallach.
\newblock {\em Symmetry, representations, and invariants}, volume 255.
\newblock Springer, 2009.

\bibitem{gowers2011linear}
W.~T. Gowers and J.~Wolf.
\newblock Linear forms and higher-degree uniformity for functions on
  {$\mathbb{F}^n_p$}.
\newblock {\em Geom. Funct. Anal.}, 21(1):36--69, 2011.
\newblock \href {https://doi.org/10.1007/s00039-010-0106-3}
  {\path{doi:10.1007/s00039-010-0106-3}}.

\bibitem{HW79}
Willem~H. Haemers.
\newblock On some problems of lov{\'{a}}sz concerning the shannon capacity of a
  graph.
\newblock {\em {IEEE} Trans. Inf. Theory}, 25(2):231--232, 1979.
\newblock \href {https://doi.org/10.1109/TIT.1979.1056027}
  {\path{doi:10.1109/TIT.1979.1056027}}.

\bibitem{DBLP:conf/stoc/Kayal12}
Neeraj Kayal.
\newblock Affine projections of polynomials: extended abstract.
\newblock In {\em Proceedings of the 44th Symposium on Theory of Computing
  Conference (STOC~2012)}, pages 643--662, 2012.
\newblock \href {https://doi.org/10.1145/2213977.2214036}
  {\path{doi:10.1145/2213977.2214036}}.

\bibitem{kopparty_et_al}
Swastik Kopparty, Guy Moshkovitz, and Jeroen Zuiddam.
\newblock {Geometric Rank of Tensors and Subrank of Matrix Multiplication}.
\newblock In {\em Proceedings of the 35th Computational Complexity Conference
  (CCC 2020)}, pages 35:1--35:21, 2020.
\newblock \href {http://arxiv.org/abs/2002.09472} {\path{arXiv:2002.09472}},
  \href {https://doi.org/10.4230/LIPIcs.CCC.2020.35}
  {\path{doi:10.4230/LIPIcs.CCC.2020.35}}.

\bibitem{MR3966415}
J.~M. Landsberg.
\newblock {\em Tensors: asymptotic geometry and developments 2016--2018},
  volume 132 of {\em CBMS Regional Conference Series in Mathematics}.
\newblock American Mathematical Society, Providence, RI, 2019.
\newblock \href {https://doi.org/10.1090/cbms/132}
  {\path{doi:10.1090/cbms/132}}.

\bibitem{DBLP:journals/focm/LandsbergT10}
J.~M. Landsberg and Zach Teitler.
\newblock On the ranks and border ranks of symmetric tensors.
\newblock {\em Found. Comput. Math.}, 10(3):339--366, 2010.
\newblock \href {https://doi.org/10.1007/s10208-009-9055-3}
  {\path{doi:10.1007/s10208-009-9055-3}}.

\bibitem{Lo79}
L{\'{a}}szl{\'{o}} Lov{\'{a}}sz.
\newblock On the shannon capacity of a graph.
\newblock {\em {IEEE} Trans. Inf. Theory}, 25(1):1--7, 1979.
\newblock \href {https://doi.org/10.1109/TIT.1979.1055985}
  {\path{doi:10.1109/TIT.1979.1055985}}.

\bibitem{Lovett}
Shachar Lovett.
\newblock The analytic rank of tensors and its applications.
\newblock {\em Discrete Anal.}, 2019.
\newblock \href {http://arxiv.org/abs/1806.09179} {\path{arXiv:1806.09179}}.

\bibitem{Ness}
Linda Ness and David Mumford.
\newblock A stratification of the null cone via the moment map.
\newblock {\em American Journal of Mathematics}, 106(6):1281--1329, 1984.
\newblock URL: \url{http://www.jstor.org/stable/2374395}.

\bibitem{NC02}
Michael~A. Nielsen and Isaac~L. Chuang.
\newblock {\em Quantum computation and quantum information}.
\newblock Cambridge University Press, Cambridge, 2011.

\bibitem{sha56}
Claude~E. Shannon.
\newblock The zero error capacity of a noisy channel.
\newblock {\em {IRE} Trans. Inf. Theory}, 2(3):8--19, 1956.
\newblock \href {https://doi.org/10.1109/TIT.1956.1056798}
  {\path{doi:10.1109/TIT.1956.1056798}}.

\bibitem{Shi18}
Yaroslav Shitov.
\newblock A counterexample to {C}omon's conjecture.
\newblock {\em {SIAM} J. Appl. Algebra Geom.}, 2(3):428--443, 2018.
\newblock \href {https://doi.org/10.1137/17M1131970}
  {\path{doi:10.1137/17M1131970}}.

\bibitem{shitov}
Yaroslav Shitov.
\newblock Private communication, December, 2021.

\bibitem{Str86}
Volker {Strassen}.
\newblock The asymptotic spectrum of tensors and the exponent of matrix
  multiplication.
\newblock In {\em Proceedings of the 27th Annual Symposium on Foundations of
  Computer Science (SFCS~1986)}, pages 49--54, 1986.
\newblock \href {https://doi.org/10.1109/SFCS.1986.52}
  {\path{doi:10.1109/SFCS.1986.52}}.

\bibitem{Strassen1987RelativeBC}
Volker Strassen.
\newblock Relative bilinear complexity and matrix multiplication.
\newblock {\em J. Reine Angew. Math.}, 375/376:406--443, 1987.
\newblock \href {https://doi.org/10.1515/crll.1987.375-376.406}
  {\path{doi:10.1515/crll.1987.375-376.406}}.

\bibitem{Str88}
Volker Strassen.
\newblock The asymptotic spectrum of tensors.
\newblock {\em J. Reine Angew. Math}, 384:102–152, 1988.
\newblock \href {https://doi.org/10.1515/crll.1988.384.102}
  {\path{doi:10.1515/crll.1988.384.102}}.

\bibitem{Str91}
Volker Strassen.
\newblock Degeneration and complexity of bilinear maps: some asymptotic
  spectra.
\newblock {\em J. Reine Angew. Math}, 413:127–180, 1991.
\newblock \href {https://doi.org/10.1515/crll.1991.413.127}
  {\path{doi:10.1515/crll.1991.413.127}}.

\bibitem{Str05}
Volker Strassen.
\newblock Komplexit\"at und {G}eometrie bilinearer {A}bbildungen.
\newblock {\em Jahresber. Deutsch. Math.-Verein.}, 107(1):3--31, 2005.

\bibitem{TS16}
Terence Tao and Will Sawin.
\newblock Notes on the “slice rank” of tensors.
\newblock {\em Tao's blog post}, 2016.
\newblock URL:
  \url{https://terrytao.wordpress.com/2016/08/24/notes-on-the-slice-rank-of-tensors/}.

\bibitem{tobler}
Verena Tobler.
\newblock {\em Spezialisierung und Degeneration von Tensoren}.
\newblock PhD thesis, Universit{\"a}t Konstanz, 1991.
\newblock \url{http://nbn-resolving.de/urn:nbn:de:bsz:352-opus-20324}.

\bibitem{MR3390878}
P\'{e}ter Vrana and Matthias Christandl.
\newblock Asymptotic entanglement transformation between {W} and {GHZ} states.
\newblock {\em J. Math. Phys.}, 56(2):022204, 12, 2015.
\newblock \href {https://doi.org/10.1063/1.4908106}
  {\path{doi:10.1063/1.4908106}}.

\bibitem{MichaelThesis}
Michael Walter.
\newblock {\em {Multipartite Quantum States and their Marginals}}.
\newblock PhD thesis, Zurich, ETH, 2014.
\newblock \href {http://arxiv.org/abs/1410.6820} {\path{arXiv:1410.6820}},
  \href {https://doi.org/10.3929/ethz-a-010250985}
  {\path{doi:10.3929/ethz-a-010250985}}.

\bibitem{WZ}
Avi Wigderson and Jeroen Zuiddam.
\newblock Asymptotic spectra: Theory, applications and extensions, 2021.
\newblock URL: \url{https://staff.fnwi.uva.nl/j.zuiddam/papers/convexity.pdf}.

\bibitem{Zui18_thesis}
Jeroen Zuiddam.
\newblock {\em Algebraic complexity, asymptotic spectra and entanglement
  polytopes}.
\newblock PhD thesis, University of Amsterdam, 2018.

\end{thebibliography}
\newpage

\appendix

\section{Representation-theoretic characterization of the moment polytope}
\label{moment polytope}

In this section we prove %
Lemma~\ref{moment_polytope_dual}. 

We recall some notions and results of geometric invariant theory and representation theory. We refer to~\cite{Ness}, \cite{Bri87},~\cite{Franz},~\cite{MichaelThesis}, and~\cite{Brgisser2019TowardsAT} for more information. 
Let $\GL(d)$ be the group of $d\times d$ invertible matrices over the complex numbers. Let $\cH$ be a complex finite-dimensional vector space, with $\dim(\cH) = d$. Denote by $M(d)$ the set of complex $d\times d$ matrices, and denote by $\Herm(d)$ the set of $d\times d$ Hermitian matrices.
We define the representation $\pi$ of $\GL(d)$ on $\cH^{\otimes k}$ by $\pi(g)f\coloneqq (g\otimes \dots \otimes g)f$ for all $g \in \GL(d)$ and $f\in \cH^{\otimes k}$. Let $\GL(d)\cdot f\coloneqq \{\pi(g)f: g \in \GL(d) \}$ denote the orbit of $f$ under the action of~$\GL(d)$.  For any nonzero vector $f \in \cH^{\otimes k}$, we define the function:
\begin{align*}
F_{f}: \quad &\GL(d) \ra \RR\\
&g \mapsto \frac{1}{2}\log\|\pi(g)f\|^{2}.
\end{align*}
The following definition defines the gradient of $F_f$ at $g = I$.    
\begin{definition}
	The \emph{moment map} is the function $\mu: \cH^{\otimes k}\setminus \{0\} \ra \Herm(d)$ defined by the property that for all $H \in \Herm(d)$ we have
	$\tr[\mu(f)H] = \partial_{t=0}F_{f}(e^{tH})$.
\end{definition}  
Let $H \in \Herm(d)$. Then $\partial_{t=0}F_{f}(e^{tH}) =  \partial_{t=0} \frac{\inner{f}{\pi(e^{tH})f}}{\|f\|^{2}}$. Therefore, we have
\begin{align*}
\tr[\mu(f)H] &=  \partial_{t=0} \frac{\inner{f}{\pi(e^{tH})f}}{\|f\|^{2}}\\
&= \frac{\inner{f}{(\sum_{j=1}^{k}I^{\otimes j-1}\otimes H \otimes I^{\otimes n-j})f}}{\|f\|^2}\\
&=\sum_{j=1}^{k}\tr\left[\frac{ff^{\dagger}}{\|f\|^2}(I^{\otimes j-1}\otimes H \otimes I^{\otimes n-j}) \right]\\
&= \sum_{j=1}^{k}\tr[\rho_j(f)H],
\end{align*}
where $\rho_{j}(f)$ denotes the $j$th reduced density matrix of $\rho(f) = \frac{ff^{\dagger}}{\|f\|^2}$. Thus, $\mu(f) = \sum_{j=1}^{k}\rho_{j}(f)$.

 Following~\cite{Fulton1991RepresentationTA}, any rational irreducible representations of $\GL(d)$ can be labeled by highest weight $\lambda \in \NN^{d}$ such that $\lambda_1 \geq \dots \geq \lambda_d$. For any natural number $n \geq 1$, consider the representation $\Pi$ of $\GL(d)$ on $(\cH^{\otimes k})^{\otimes n}$ by $\Pi(g) \cdot v\coloneqq (\pi(g)\otimes \dots \otimes \pi(g))v$ for all $v \in (\cH^{\otimes k})^{\otimes n}$. Let $V$ be a finite-dimensional rational representation of $\GL(d)$. For each highest weight $\lambda$ of $\GL(d)$, we denote by $V_{\lambda}$ the $\lambda$-isotypical component of $V$. Let $Z \subseteq V$ be a Zariski closed set. We denote by $\CC[Z]_{n}$ the degree-$n$ part of the \emph{coordinate ring} of $Z$. %
  Letting $\lambda = (\lambda_1,\dots,\lambda_d)$ be a highest weight of $\GL(d)$, we define $\lambda^{*} = (-\lambda_d,\dots,-\lambda_1)$. For any nonzero vector $f \in \cH^{\otimes k}$, the following lemma says that the moment polytope $\Delta(f)$ has another representation theoretic description.
 \begin{lemma}[{\cite{Bri87},~\cite{Franz}, \cite[Theorem 11]{Str05} or \cite[Chapter 6]{Zui18_thesis}}] 
 	Let $f \in \cH^{\otimes k}$ be nonzero. Then
 	\begin{align*}
 	\Delta(f) &= \overline{\left\{ \lambda/n: \exists n \in \NN_{\geq 1}, (\CC[\overline{\GL(d) \cdot f}]_n)_{\lambda^{*}} \neq 0 \right\}}\\
 	& = \overline{\{ \lambda/n: \exists n \in \NN_{\geq 1}, P_{\lambda} f^{\otimes n} \neq 0\}},
 	\end{align*}
 	where $P_{\lambda}$ is the projector from  $(\cH^{\otimes k})^{\otimes n}$ onto the $\lambda$-isotypical component in the decomposition of $(\cH^{\otimes k})^{\otimes n}$ with respect to $\Pi$.
 \end{lemma}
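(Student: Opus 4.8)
The plan is to prove the two displayed equalities separately. The first equality, identifying the moment polytope $\Delta(f)$ with the closure of the set of normalised highest weights $\lambda/n$ for which the irreducible $\GL(d)$-module of highest weight $\lambda^{*}$ occurs in the degree-$n$ part $\CC[\overline{\GL(d)\cdot f}]_n$ of the coordinate ring, is the classical description of moment polytopes of projective $G$-varieties. After the computation $\mu(f)=\sum_{j=1}^{k}\rho_j(f)$ already carried out above, $\Delta(f)$ is (up to the overall scaling relating this formula to the average of marginals used to define $\Delta$, and the normalisation $\lambda\vdash nk$ versus $\lambda\vdash n$) the Kirwan (moment) polytope of $\PP(\overline{\GL(d)\cdot f})$, and I would simply invoke the theorems of Mumford, Ness \cite{Ness} and Brion \cite{Bri87} (see also \cite{Franz}, \cite[Theorem~11]{Str05}, \cite[Chapter~6]{Zui18_thesis}). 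I would not reprove this deep input; the only care needed is bookkeeping of these normalisations and of the convention defining $\lambda^{*}$.

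The work then lies in the second equality, which is elementary once the right duality is set up. The central point is the apolarity identification of $\GL(d)$-modules
\[
	\bigl(\CC[\overline{\GL(d)\cdot f}]_n\bigr)^{*}\;\cong\;\mathrm{span}\bigl\{\,w^{\otimes n}\;:\;w\in\overline{\GL(d)\cdot f}\,\bigr\}\;=\;\mathrm{span}\bigl(\GL(d)\cdot f^{\otimes n}\bigr)\;\subseteq\;(\cH^{\otimes k})^{\otimes n}.
\]
Since the scalars $\CC^{*}I\subseteq\GL(d)$ act on $\cH^{\otimes k}$ by $tI\mapsto t^{k}$, the orbit closure $\overline{\GL(d)\cdot f}$ is an affine cone, so its coordinate ring is graded and $\CC[\overline{\GL(d)\cdot f}]_n=\mathrm{Sym}^{n}(\cH^{\otimes k})^{*}/I_n$, with $I_n$ the degree-$n$ part of the vanishing ideal. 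Dualising and using the apolarity pairing $\mathrm{Sym}^n(\cH^{\otimes k})\times\mathrm{Sym}^n(\cH^{\otimes k})^{*}\to\CC$ identifies $(\CC[\overline{\GL(d)\cdot f}]_n)^{*}$ with $I_n^{\perp}$, which is exactly the span of the evaluation functionals $\mathrm{ev}_w\leftrightarrow w^{\otimes n}$, $w\in\overline{\GL(d)\cdot f}$. That this span equals $\mathrm{span}(\GL(d)\cdot f^{\otimes n})$ follows because the latter is a closed $\GL(d)$-stable subspace containing $f^{\otimes n}$, hence contains every $w^{\otimes n}=\lim_i (g_if)^{\otimes n}$ with $w=\lim_i g_i f$, and is conversely generated by $f^{\otimes n}\in\overline{\GL(d)\cdot f}$.

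It remains to translate ``$(\CC[\overline{\GL(d)\cdot f}]_n)_{\lambda^{*}}\neq 0$'' into ``$P_\lambda f^{\otimes n}\neq 0$''. Since dualising sends the isotypic component of highest weight $\lambda^{*}$ to that of highest weight $\lambda$, the displayed isomorphism shows $(\CC[\overline{\GL(d)\cdot f}]_n)_{\lambda^{*}}\neq 0$ if and only if $\mathbb{S}_\lambda(\cH)$ occurs in the cyclic module $W=\mathrm{span}(\GL(d)\cdot f^{\otimes n})\subseteq(\cH^{\otimes k})^{\otimes n}=\cH^{\otimes nk}$. Now $\cH^{\otimes nk}$ is completely reducible, so for a submodule $W$ its $\lambda$-isotypic part is $P_\lambda W=W\cap\bigl([\lambda]\otimes\mathbb{S}_\lambda(\cH)\bigr)$, and by $\GL(d)$-equivariance of $P_\lambda$ one has $P_\lambda W=\mathrm{span}\bigl(\GL(d)\cdot P_\lambda f^{\otimes n}\bigr)$; hence $\mathbb{S}_\lambda(\cH)$ occurs in $W$ exactly when $P_\lambda f^{\otimes n}\neq 0$. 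Chaining these equivalences and taking closures (both sides being indexed by the same pairs $(n,\lambda)$) yields the second equality. I expect the genuinely hard ingredient to be the first equality — the moment polytope description — but since that is a cited theorem, the remaining effort is just the two normalisation/convention points flagged above together with the apolarity and Schur--Weyl arguments of the last two paragraphs, all of which are routine.
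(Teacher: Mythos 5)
The paper offers no proof of this lemma at all---it is stated purely as a citation to Brion, Franz, Strassen and Zuiddam---so your treatment, which defers the genuinely deep first equality to the cited Ness--Brion moment-polytope theorem and derives the second equality by the standard cone/apolarity duality ($(\CC[\overline{\GL(d)\cdot f}]_n)^*\cong\mathrm{span}(\GL(d)\cdot f^{\otimes n})$, dualizing $\lambda^*\leftrightarrow\lambda$, and $\GL(d)$-equivariance of $P_\lambda$), is precisely the argument of the cited sources and is correct. Your normalisation caveat is also warranted: as printed, $\lambda/n$ with $\lambda\vdash_d nk$ has coordinate sum $k$, matching the moment map $\mu(f)=\sum_j\rho_j(f)$, whereas $\Delta(f)$ in the main text is defined via the \emph{average} of the marginals, so the factor $1/k$ (restored in Lemma~\ref{moment_polytope_dual}) is silently absorbed here.
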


 \begin{proof}[Proof of \autoref{moment_polytope_dual}]
By Schur--Weyl duality we have a decomposition of the space $(\cH^{\otimes k})^{\otimes n}$ as
\begin{align*}
(\cH^{\otimes k})^{\otimes n} \cong \bigoplus_{\lambda \vdash_{d}kn}\mathbb{S}_{\lambda}(\cH) \otimes [\lambda].
\end{align*} 
For $\lambda \vdash_{d}kn$, let $P_{\lambda}$ be the projector onto the isotypical component of type $\lambda$, that is, onto the subspace of $(\cH^{\otimes k})^{\otimes n}$ which isomorphic to $\mathbb{S}_{\lambda}(\cH) \otimes [\lambda]$, since all irreducible representations of~$\Pi$ are labeled by the partitions of $kn$ in at most $d$ parts. Therefore,
\begin{align*}
\Delta(f) = \overline{\left\{\frac{\lambda}{n}: \exists n\geq \NN_{\geq 1}, \lambda \vdash_{d} kn, P_{\lambda}f^{\otimes n} \neq 0 \right\}},
\end{align*}
completing the proof. %
\end{proof}

\section{Sub-multiplicativity of the symmetric quantum functional}
\label{proof_quantumfunctional}

In this section we prove that the symmetric quantum functional $F$ is sub-multiplicative. For symmetric tensors this follows from \autoref{th:main2}. (In fact, \autoref{th:main2} says that the symmetric quantum functional is multiplicative on symmetric tensors.) Here we prove that the symmetric quantum functional is sub-multiplicative on arbitrary tensors (not necessarily symmetric). The argument is an adaptation of the argument in \cite{CVZ18} to the symmetric quantum functional.

\newcommand{\Kron}{\mathrm{Kron}}

\begin{lemma}\label{lem:mp-kron}
	For all tensors $s \in V^{\otimes k}$ and $t \in W^{\otimes k}$ we have $\Delta(s \otimes t) \subseteq \Delta(s) \otimes_{\Kron} \Delta(t)$ where 
	\begin{align*}
	\Delta(s) \otimes_{\Kron} \Delta(t) \coloneqq \mathrm{closure} \bigl\{ \bar{\mu}:  \bar{\lambda} \in \Delta(s), \bar{\lambda'} \in \Delta(t), P_{\mu} (P_{\lambda} \otimes P_{\lambda'}) \neq 0 \bigr\}.
	\end{align*}
\end{lemma}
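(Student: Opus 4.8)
The plan is to deduce this inclusion from the representation-theoretic description of the moment polytope in Lemma~\ref{moment_polytope_dual}, following the same strategy as the corresponding statement for the (non-symmetric) quantum functionals in \cite{CVZ18}. Set $d_1=\dim V$, $d_2=\dim W$ and $D=d_1 d_2$, and identify $\bigl((V\otimes W)^{\otimes k}\bigr)^{\otimes n}$ with $V^{\otimes kn}\otimes W^{\otimes kn}$ via the canonical shuffle isomorphism that moves all $V$-legs to the front and all $W$-legs to the back. Under this identification the vector $(s\otimes t)^{\otimes n}$ becomes $s^{\otimes n}\otimes t^{\otimes n}$, the $\mathfrak{S}_{kn}$-action permuting the $(V\otimes W)$-legs becomes the diagonal $\mathfrak{S}_{kn}$-action on the two factors, and the diagonal $\GL(D)$-action restricts along $\GL(d_1)\times\GL(d_2)\hookrightarrow\GL(D)$ to the diagonal action on the two factors. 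In particular $P_\mu$ (the isotypical projector for $\GL(D)$, equivalently for $\mathfrak{S}_{kn}$) and $P_\lambda\otimes P_{\lambda'}$ are simultaneously defined on $V^{\otimes kn}\otimes W^{\otimes kn}$, and they commute: both are $\mathfrak{S}_{kn}$-equivariant, and $P_\mu$, being an isotypical projector, is central in $\mathrm{End}_{\mathfrak{S}_{kn}}\bigl(V^{\otimes kn}\otimes W^{\otimes kn}\bigr)$. Hence $P_\mu(P_\lambda\otimes P_{\lambda'})$ is again a projector.

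With this in place, since $\Delta(s)\otimes_{\Kron}\Delta(t)$ is defined as a closure and, by Lemma~\ref{moment_polytope_dual}, $\Delta(s\otimes t)$ is the Euclidean closure of $\{\mu/(kn): \mu\vdash_D kn,\ P_\mu(s\otimes t)^{\otimes n}\neq 0\}$, it suffices to show that $\mu/(kn)\in\Delta(s)\otimes_{\Kron}\Delta(t)$ whenever $P_\mu(s\otimes t)^{\otimes n}\neq 0$. Fix such $n$ and $\mu$; under the identification above this says $P_\mu\bigl(s^{\otimes n}\otimes t^{\otimes n}\bigr)\neq 0$. Expanding $s^{\otimes n}=\sum_{\lambda\vdash_{d_1}kn}P_\lambda s^{\otimes n}$ and $t^{\otimes n}=\sum_{\lambda'\vdash_{d_2}kn}P_{\lambda'}t^{\otimes n}$ and using $(P_\lambda\otimes P_{\lambda'})(s^{\otimes n}\otimes t^{\otimes n})=(P_\lambda s^{\otimes n})\otimes(P_{\lambda'}t^{\otimes n})$, we obtain
\[
  0\ \neq\ P_\mu\bigl(s^{\otimes n}\otimes t^{\otimes n}\bigr)\ =\ \sum_{\lambda,\lambda'}P_\mu\bigl((P_\lambda s^{\otimes n})\otimes(P_{\lambda'}t^{\otimes n})\bigr),
\]
so there exist $\lambda\vdash_{d_1}kn$ and $\lambda'\vdash_{d_2}kn$ with $P_\mu\bigl((P_\lambda s^{\otimes n})\otimes(P_{\lambda'}t^{\otimes n})\bigr)\neq 0$. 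Then in particular $(P_\lambda s^{\otimes n})\otimes(P_{\lambda'}t^{\otimes n})\neq 0$, hence $P_\lambda s^{\otimes n}\neq 0$ and $P_{\lambda'}t^{\otimes n}\neq 0$, which by Lemma~\ref{moment_polytope_dual} gives $\lambda/(kn)\in\Delta(s)$ and $\lambda'/(kn)\in\Delta(t)$; moreover this nonzero vector lies in the image of $P_\lambda\otimes P_{\lambda'}$ and is not annihilated by $P_\mu$, so $P_\mu(P_\lambda\otimes P_{\lambda'})\neq 0$. By the definition of $\otimes_{\Kron}$ this places $\mu/(kn)$ in the set whose closure is $\Delta(s)\otimes_{\Kron}\Delta(t)$, and taking closures finishes the proof.

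I expect the only mildly delicate point to be the bookkeeping around the identifications above --- checking that $(s\otimes t)^{\otimes n}$ really corresponds to $s^{\otimes n}\otimes t^{\otimes n}$ and that $P_\mu$ and $P_\lambda\otimes P_{\lambda'}$ commute --- rather than any substantive difficulty. The heart of the argument is the elementary observation that $P_\mu\bigl((P_\lambda s^{\otimes n})\otimes(P_{\lambda'}t^{\otimes n})\bigr)\neq 0$ simultaneously forces $P_\lambda s^{\otimes n}\neq 0$, $P_{\lambda'}t^{\otimes n}\neq 0$ and $P_\mu(P_\lambda\otimes P_{\lambda'})\neq 0$, combined with Lemma~\ref{moment_polytope_dual}; everything else is a direct adaptation of the analogous lemma in \cite{CVZ18}.
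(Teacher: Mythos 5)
Your proof is correct and follows essentially the same route as the paper's: insert the resolutions of the identity $\sum_\lambda P_\lambda$ and $\sum_{\lambda'}P_{\lambda'}$, pick a nonvanishing term, and read off the three nonvanishing conditions together with Lemma~\ref{moment_polytope_dual}. The extra care you take with the shuffle identification and the commutation of $P_\mu$ with $P_\lambda\otimes P_{\lambda'}$ is sound but not needed for the argument.
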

\begin{proof}
	Let $\dim(V) =d$ and $\dim(W) = d'$.
	If $\bar{\mu} \in \Delta(s \otimes t)$, then for some $n$, we have $P_{\mu} (s \otimes t)^{\otimes n} \neq 0$. We have $\sum_{\lambda \vdash_{d}kn} P_{\lambda} = \Id_{V^{\otimes kn}}$ and $\sum_{\lambda'  \vdash_{d'}kn} P_{\lambda'} = \Id_{W^{\otimes kn}}$. Thus, we can write
	\begin{align*}
	P_{\mu} (s\otimes t)^{\otimes n} &= P_{\mu} \Bigl(\sum_{\lambda, \lambda'} P_{\lambda} \otimes P_{\lambda'}\Bigr)  (s\otimes t)^{\otimes n}.
	\end{align*}
	So there exists $\lambda, \lambda'$ such that $P_{\mu} (P_{\lambda} \otimes P_{\lambda'}) (s \otimes t)^{\otimes n} \neq 0$. But this implies that 
	$P_{\lambda} s^{\otimes n} \neq 0$,  %
	$P_{\lambda'} t^{\otimes n} \neq 0$, and  %
	$P_{\mu} (P_{\lambda} \otimes P_{\lambda'}) \neq 0$,
	which completes the proof.
\end{proof}
\begin{proposition}[Sub-multiplicativity of the symmetric quantum functional] 
	\label{Submultiplicativity}
	For every $s \in V^{\otimes k}$ and $t \in W^{\otimes k}$ we have
	$F(s \otimes t) \leq F(s)F(t)$.
\end{proposition}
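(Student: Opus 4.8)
The plan is to combine the "Kronecker" containment of moment polytopes from Lemma~\ref{lem:mp-kron} with the characterization $F(f) = 2^{E(f)}$, $E(f) = \max\{H(p) : p \in \Delta(f)\}$, reducing the claim to a statement about entropies of points in $\Delta(s) \otimes_{\Kron} \Delta(t)$. Concretely, since $\Delta(s\otimes t) \subseteq \Delta(s) \otimes_{\Kron} \Delta(t)$, it suffices to show that for every $\bar\mu$ in the latter set one has $H(\bar\mu) \leq E(s) + E(t)$; taking the maximum over $\bar\mu \in \Delta(s\otimes t)$ then gives $E(s\otimes t) \leq E(s) + E(t)$, i.e.\ $F(s\otimes t) \leq F(s)F(t)$.

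First I would unpack what it means for $\bar\mu$ to lie in $\Delta(s)\otimes_{\Kron}\Delta(t)$: up to taking a closure, there are $n$, partitions $\lambda \vdash_d kn$, $\lambda' \vdash_{d'} kn$, and $\mu \vdash_{dd'} kn$ with $\bar\lambda = \lambda/kn \in \Delta(s)$, $\bar\lambda' = \lambda'/kn \in \Delta(t)$, and $P_\mu(P_\lambda \otimes P_{\lambda'}) \neq 0$, meaning the Kronecker coefficient $g(\mu,\lambda,\lambda')$ is nonzero (the isotypic component $[\mu]$ appears in $[\lambda]\otimes[\lambda']$ as $\mathfrak S_{kn}$-representations). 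The key combinatorial input is the standard asymptotic bound on Kronecker coefficients: nonvanishing of $g(\mu,\lambda,\lambda')$ forces $H(\bar\mu) \leq H(\bar\lambda) + H(\bar\lambda')$ in the limit $n\to\infty$ (equivalently, $\dim[\mu] \leq \dim[\lambda]\cdot\dim[\lambda']$, which gives the entropy inequality after normalizing by $kn$ and using the dimension-entropy estimate $\dim[\nu] = 2^{kn\,H(\bar\nu) + o(n)}$ for $\nu \vdash kn$). This is exactly the ingredient used in \cite{CVZ18} for the non-symmetric quantum functionals and it transfers verbatim here since the construction of $F$ only differs by averaging the marginals rather than weighting them. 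Combining, $H(\bar\mu) \leq H(\bar\lambda) + H(\bar\lambda') \leq E(s) + E(t)$, where the last step uses that $\bar\lambda \in \Delta(s)$, $\bar\lambda' \in \Delta(t)$ and the definition of $E$ as a maximum of entropy over the respective polytopes; continuity of $H$ lets us pass to the closure.

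The main obstacle is making the asymptotic Kronecker/dimension estimate precise with the closures handled correctly: one must be careful that the points realizing the maximum entropy in $\Delta(s\otimes t)$ are limits of normalized partitions, that the witnessing $\lambda, \lambda', \mu$ can be chosen coherently along a subsequence, and that $H$ is continuous enough (it is, on the simplex) to push the inequality through the limit. A clean way to organize this is to prove the inequality first on the dense set of rational points arising directly from nonzero projectors (no closure), where $\dim[\mu] \le \dim[\lambda]\dim[\lambda']$ is an exact statement, take $kn$-th roots, and only then take closures on both sides — this avoids having to quantify error terms carefully. Modulo this bookkeeping the argument is a direct adaptation of the sub-multiplicativity proof for $F^\theta$ in \cite{CVZ18}.
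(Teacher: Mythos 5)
Your proposal follows the paper's proof essentially verbatim: both reduce via Lemma~\ref{lem:mp-kron} to the containment $\Delta(s\otimes t)\subseteq\Delta(s)\otimes_{\Kron}\Delta(t)$ and then bound $H(\bar{\mu})\leq H(\bar{\lambda})+H(\bar{\lambda'})$ whenever $P_{\mu}(P_{\lambda}\otimes P_{\lambda'})\neq 0$, which the paper takes as a black box from \cite[Proposition~3]{CM06} while you sketch its standard derivation from the dimension bound $\dim[\mu]\leq\dim[\lambda]\dim[\lambda']$. The argument is correct and the approach is the same.
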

\begin{proof}
	Let $d = \dim(V)$ and $d' = \dim(W)$.
	Let $E = \log_2 F$.
	We need to prove $	E(s \otimes t) \leq E(s)+E(t)$. By definition
	\begin{align*}
	E(s \otimes t) &= \max_{p \in \Delta(s \otimes t)} H(p)
	\leq \max_{p \in \Delta(s) \otimes_{\Kron} \Delta(t)} H(p).
	\end{align*}
	But if $p \in \Delta(s) \otimes_{\Kron} \Delta(t)$, then there exists $\mu$ a partition of $kn$ in at most $dd'$ parts such that $P_{\mu} (P_{\lambda} \otimes P_{\lambda'}) \neq 0$ with $\bar{\lambda} \in \Delta(s)$ and $\bar{\lambda'} \in \Delta(t)$ by Lemma~\ref{lem:mp-kron}.
	It is shown in \cite[Proposition 3]{CM06} that if $P_{\mu} (P_{\lambda} \otimes P_{\lambda'}) \neq 0$, then $H(\bar{\mu}) \leq H(\bar{\lambda}) + H(\bar{\lambda'})$. This shows that $E(s \otimes t) \leq E(s) + E(t)$.
\end{proof}

\end{document}